\newcommand{\gnote}[1]{\footnote{{\bf \color{blue}Guy}: {#1}}}
\def\colorful{1}
\newcommand{\red}[1]{{\color{red} {#1}}}
\newcommand{\gray}[1]{{\color{gray}{#1}}}
\newcommand{\red}[1]{{{#1}}}
\newcommand{\gray}[1]{{{#1}}}
\newcommand{\TopDownDTsize}{\textsc{TopDownDTSize}}
\newcommand{\BuildTopDownDT}{\textsc{BuildTopDownDT}}
\newcommand{\Tribes}{\textsc{Tribes}}
\newcommand{\Parity}{\textsc{Parity}}
\newcommand{\Threshold}{\textsc{Threshold}}
\newcommand{\Maj}{\textsc{Maj}}
\newcommand{\sens}{\mathrm{sens}}
\newcommand{\exact}{\mathrm{exact}}
\newcommand{\error}{\mathrm{error}}
\newcommand{\cost}{\mathrm{cost}}
\newcommand{\score}{\mathrm{score}}
\newcommand{\pparagraph}[1]{\bigskip \noindent {\bf {#1}}}
\begin{document}

\title{Top-down induction of decision trees: \\
rigorous guarantees and inherent limitations \vspace*{10pt}
}

\author{Guy Blanc \and \hspace{10pt} Jane Lange \vspace{8pt} \\
\hspace{6pt} {\small {\sl Stanford University}}
\and Li-Yang Tan}  
\date{\small{\today}}

\maketitle

\begin{abstract}
Consider the following heuristic for building a decision tree for a function $f : \zo^n \to \{\pm 1\}$.  Place the {\sl most influential variable} $x_i$ of $f$ at the root, and recurse on the subfunctions $f_{x_i=0}$ and $f_{x_i=1}$ on the left and right subtrees respectively; terminate once the tree is an $\eps$-approximation of $f$.   We analyze the quality of this heuristic, obtaining near-matching upper and lower bounds:  

\begin{itemize}[leftmargin=0.5cm]
\item[$\circ$] {\sl Upper bound:} For every $f$ with decision tree size $s$ and every $\eps \in (0,\frac1{2})$, this heuristic builds a decision tree of size at most $s^{O(\log(s/\eps)\log(1/\eps))}$. 
\item[$\circ$] {\sl Lower bound:} For every $\eps \in (0,\frac1{2})$ and $s \le 2^{\tilde{O}(\sqrt{n})}$, there is an $f$ with decision tree size $s$ such that this heuristic builds a decision tree of size $s^{\tilde{\Omega}(\log s)}$. 
\end{itemize}
We also obtain upper and lower bounds for monotone functions: $s^{O(\sqrt{\log s}/\eps)}$ and $s^{\tilde{\Omega}(\sqrt[4]{\log s }
)}$ respectively.  The lower bound disproves conjectures of Fiat and Pechyony (2004) and Lee~(2009).\medskip 

Our upper bounds yield new algorithms for properly learning decision trees under the uniform distribution.  We show that these algorithms---which are motivated by widely employed and empirically successful top-down decision tree learning heuristics such as ID3, C4.5, and CART---achieve provable guarantees that compare favorably with those of the current fastest algorithm (Ehrenfeucht and Haussler,~1989), and even have certain qualitative advantages. Our lower bounds shed new light on the limitations of these heuristics. 

Finally, we revisit the classic work of Ehrenfeucht and Haussler.  We extend it to give the first uniform-distribution proper learning algorithm that achieves polynomial sample and memory complexity, while matching its state-of-the-art quasipolynomial runtime. 

\end{abstract}

\thispagestyle{empty}

\newpage
\setcounter{page}{1}

\section{Introduction}

Consider the problem of constructing a {\sl decision tree} representation of a function $f : \zo^n \to \{\pm 1\}$, where the goal is to build a decision tree for $f$ that is as small as possible, ideally of size close to the optimal decision tree size of $f$.  Perhaps the simplest and most natural approach is to proceed in a {\sl top-down}, greedy fashion: 
\begin{enumerate}
\item Choose a ``good" variable $x_i$ to query as the root of the decision tree;
\item Build the left and right subtrees by recursing on the subfunctions $f_{x_i=0}$ and $f_{x_i=1}$ respectively. 
\end{enumerate}
 
This reduces the task of building a decision tree to that of choosing the root variable---i.e.~determining the {\sl splitting criterion} of this top-down heuristic. 
Intuitively, a good root variable should be one that is very ``relevant" and ``important" in terms of determining the value of $f$; it is reasonable to expect that querying such a variable first would reduce the number of subsequent queries necessary.   Our focus in this paper will be on a specific splitting criterion: {\sl influence}.

\begin{definition}[Influence]
\label{def:influence} 
The {\sl influence of the variable $x_i$ on a function $f : \zo^n \to \{\pm 1\}$} is defined to be \[ \Inf_i(f) \coloneqq \Prx_{\bx\sim \zo^n}[f(\bx) \ne f(\bx^{\oplus i})],\] 
where $\bx$ is drawn uniformly at random, and $\bx^{\oplus i}$ denotes $\bx$ with its $i$-th coordinate flipped.
\end{definition}

Influence is a fundamental and well-studied notion in the analysis of boolean functions~\cite{ODBook}.  It is the key quantity of interest in many landmark results (e.g.~the KKL inequality~\cite{KKL88}, Friedgut's junta theorem~\cite{Fri98}, the Invariance Principle~\cite{MOO10})~and open problems (e.g.~the Gotsman--Linial conjecture~\cite{GL89}, the Aaronson--Ambainis conjecture~\cite{AA14}, the Fourier Entropy-Influence conjecture~\cite{FK96})~of the field.  Beyond the analysis of boolean functions, this notion has been widely employed across both algorithms and complexity theory, where it has indeed proven to be a useful quantitative measure of the relevance and importance of a variable. Most relevant to the algorithmic applications in this paper, influence has been a key enabling ingredient in a large number of results in learning theory~\cite{BT96,BBL98,LMN93,Ser04,OS07,OW13,GS10,DRST10,Kane14-intersections,Kane14-GL,BCOST15}.

\subsection{Influence as a splitting criterion}  We now give a formal description of the heuristic for constructing decision trees that we study.  We define a {\sl bare tree} to be a decision tree with unlabeled leaves, and write $T^{\circ}$ to denote such trees.  We refer to any decision tree $T$ obtained from $T^{\circ}$ by a labelling of its leaves as a {\sl completion} of $T^{\circ}$.  Given a bare tree $T^\circ$ and a function $f$, there is a canonical completion of $T^\circ$ that minimizes the approximation error with respect to $f$: 

\begin{definition}[$f$-completion of a bare tree] 
Let $T^\circ$ be a bare tree and $f : \zo^n \to \{\pm 1 \}$.  Consider the following completion of $T^\circ$:  for every leaf $\ell$ in $T^\circ$, label it $\sign(\E[f_\ell(\bx)])$, where $f_\ell$ is the restriction of $f$ by the path leading to $\ell$ and $\bx\sim \zo^n$ is uniform random.  This completion minimizes the approximation error $\Pr[T(\bx) \ne f(\bx)]$, and we refer to it as the \emph{$f$-completion of $T^\circ$}. 
\end{definition} 

In addition to the function~$f$, our heuristic will also take in an error parameter $\eps$, allowing us to construct both {\sl exact} ($\eps = 0$) and {\sl approximate} ($\eps \in (0,\frac1{2})$) decision tree representations of $f$.  

\begin{figure}[H]
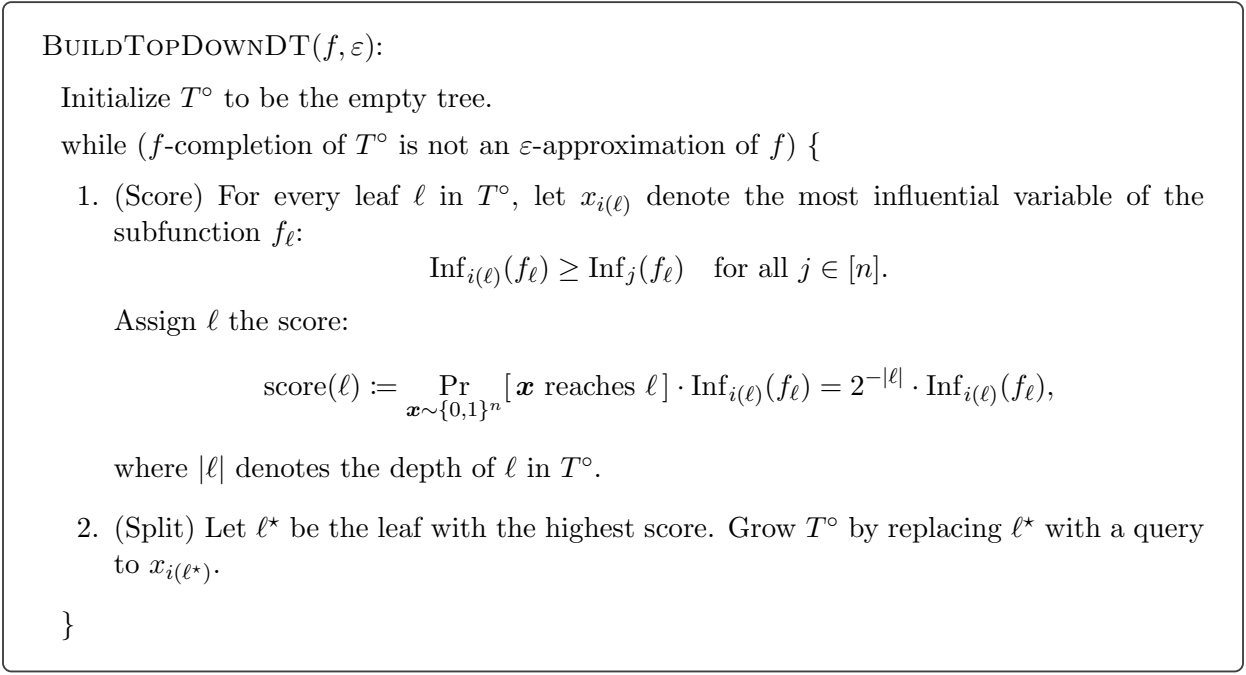

  \captionsetup{width=.9\linewidth}
\begin{tcolorbox}[colback = white,arc=1mm, boxrule=0.25mm]
\vspace{3pt} 

{\sc BuildTopDownDT}($f,\eps$):  \vspace{6pt} 

\ \ Initialize $T^\circ$ to be the empty tree. \vspace{4pt} 

\ \ while ($f$-completion of $T^\circ$ is not an $\eps$-approximation of $f$) \{
\vspace{-3pt} 
\begin{enumerate}
\item {(Score)} For every leaf $\ell$ in $T^\circ$, let $x_{i(\ell)}$ denote the most influential variable of the subfunction $f_\ell$: 
\[ \Inf_{i(\ell)}(f_\ell) \ge \Inf_j(f_\ell) \quad \text{for all $j\in [n]$.} \] 
Assign $\ell$ the score: 
\begin{align*}
 \mathrm{score}(\ell) &\coloneqq \Prx_{\bx \sim \zo^n}[\,\text{$\bx$ reaches $\ell$}\,] \cdot  \Inf_{i(\ell)}(f_\ell)  = 2^{-|\ell|} \cdot \Inf_{i(\ell)}(f_\ell),
 \end{align*} 
 where $|\ell|$ denotes the depth of $\ell$ in $T^\circ$. 
 \item {(Split)} Let $\ell^\star$ be the leaf with the highest score.  Grow $T^{\circ}$ by replacing $\ell^\star$ with a query to $x_{i(\ell^\star)}$. 
\end{enumerate}
\ \ \} \vspace{3pt}
\end{tcolorbox}
\caption{Top-down heuristic for building an $\eps$-approximate decision tree representation of~$f$, with influence as the splitting criterion.}
\label{fig:TopDown}
\end{figure}

In words, $\BuildTopDownDT$ builds a bare tree $T^\circ$ in a top-down fashion, starting from the empty tree.  In each iteration, we first check if the $f$-completion of $T^\circ$ is an $\eps$-approximation of $f$, and if so, we output the completion.  
Otherwise, we split the leaf $\ell^\star$ with the highest score by querying the most influential variable of $f_{\ell^\star}$, where the score of a leaf $\ell$ is the influence of the most influential variable of $f_\ell$ normalized by the depth of $\ell$ within $T^\circ$.\footnote{There are two possibilities for ties in $\BuildTopDownDT$: two variables may have the same influence within a subfunction $f_\ell$, and two leaves may have the same score.  Our upper bounds hold regardless of how ties are broken, and our lower bounds hold even if ties are broken in the most favorable way.} 

\subsection{This work}

By design, the decision tree returned by $\BuildTopDownDT(f,\eps)$ is an $\eps$-approximation of $f$.   We write $\TopDownDTsize(f,\eps)$ to denote the size of this tree, and when $\eps =0$, we simply write $\TopDownDTsize(f)$.  The question that motivates our work is: 
\smallskip

\begin{center}
{\sl What guarantees can we make on $\TopDownDTsize(f,\eps)$ as a function of} \vspace{2pt} \\
{\sl the optimal decision tree size of $f$ and $\eps$?}
\end{center}
\smallskip 

That is, we would like to understand the quality of $\BuildTopDownDT$ as a heuristic for constructing exact and approximate decision tree representations.  In addition to being a natural structural question concerning decision trees, this question also has implications in learning theory.  Indeed, $\BuildTopDownDT$ is motivated by top-down decision tree learning heuristics such as ID3, C4.5, and CART that are widely employed and empirically successful in machine learning practice.  We discuss the learning-theoretic context and applications of our structural results in~\Cref{sec:learning}, and the connection to practical machine learning heuristics in~\Cref{sec:ID3}.  

To our knowledge, the question above has not been studied in such generality.  The most directly relevant prior work is that of Fiat and Pechyony~\cite{FP04}, who considered the case when $f$ is either a linear threshold function or a read-once DNF formula, and the setting of exact representation ($\eps = 0$).  For such functions, they proved that the heuristic builds an exact decision tree representation of optimal size.   We give an overview of other related work in~\Cref{sec:related}.

\section{Our results}
\label{sec: our results} 

As our first contribution, we give near-matching upper and lower bounds that provide a fairly complete answer to the question above.    Our upper bound is as follows: 

\begin{restatable}[Upper bound for approximate representation]{theorem}{TDUpper}
\label{thm:TD-upper} 
For every $\eps \in (0,\frac1{2})$ and every size-$s$ decision tree $f$,  we have $\TopDownDTsize(f,\eps) \le s^{O(\log(s/\eps)\log(1/\eps))}$.
\end{restatable}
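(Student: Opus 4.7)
My plan is to analyze the heuristic with a potential function tracking the total-influence mass, and a structural lower bound on the largest single-coordinate influence in small decision trees.

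\emph{Step 1 (Structural lemma).} I would first prove: for every $g:\zo^n \to \{\pm 1\}$ computed by a decision tree of size $t$ and with variance $v := \mathrm{Var}(g)$, the most influential coordinate satisfies $\max_i \Inf_i(g) \ge v/\log_2 t$. The proof is an application of the OSSS inequality $\mathrm{Var}(g) \le \sum_i \delta_i(T)\,\Inf_i(g)$ (where $\delta_i(T) := \Pr[T \text{ queries }x_i]$) to an optimal-size tree $T$ for $g$; since $\sum_i \delta_i(T)$ is upper bounded by $\E[\mathrm{depth}(T,\bx)] \le \log_2 t$, the claim follows by picking the maximum on the right-hand side.

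\emph{Step 2 (Potential function).} Define $\Phi(T^\circ) := \sum_\ell p_\ell \cdot I(f_\ell)$, where $p_\ell = 2^{-|\ell|}$ and $I(g) = \sum_i \Inf_i(g)$. Initially $\Phi = I(f) \le \log_2 s$, because the average sensitivity of any function is at most the expected depth of any decision tree computing it (and an optimal DT for $f$ has expected depth $\le \log_2 s$). The crucial identity is the law of total influence,
\[
I(g) \;=\; \Inf_i(g) + \tfrac{1}{2}\bigl(I(g|_{x_i=0}) + I(g|_{x_i=1})\bigr),
\]
which implies that when the heuristic splits the leaf $\ell$ on its most-influential variable $x_{i(\ell)}$, the potential $\Phi$ decreases by \emph{exactly} $p_\ell \cdot \Inf_{i(\ell)}(f_\ell) = \mathrm{score}(\ell)$. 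Hence, summing over the entire run, the total score of all splits the heuristic ever performs is at most $\log_2 s$.

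\emph{Step 3 (Phased counting).} Partition the execution into phases $r = 1,\dots,R$ with $R = \lceil \log_2(1/\eps)\rceil$: phase $r$ runs while the current $f$-completion error lies in $(2^{-r-1}, 2^{-r}]$. Within phase $r$, the total variance $\sum_\ell p_\ell \mathrm{Var}(f_\ell)$ is $\Omega(2^{-r})$, so by Step~1 (applied to each $f_\ell$, which has DT size $\le s$) the maximum score across the current leaves is $\Omega(2^{-r}/(k \log s))$ where $k$ is the current leaf count. Since the heuristic always picks the maximum-scoring leaf and the total score over the entire run is at most $\log_2 s$, one can bound the multiplicative leaf-count growth in each phase by integrating $\sum_{k=k_{r-1}+1}^{k_r} \tfrac{2^{-r}}{k\log s} \le D_r$ (where $D_r$ is the potential drop in phase $r$), yielding $\ln(k_r/k_{r-1}) \le 2^r (\log s)\, D_r$. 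Summing across the $R$ phases and invoking $\sum_r D_r \le \log_2 s$ (combined with a careful balancing that caps each $2^r D_r$ at $O(\log(s/\eps))$) gives $\log k_R = O(\log(s/\eps)\log(1/\eps) \log s)$, i.e., the final tree has size $s^{O(\log(s/\eps)\log(1/\eps))}$.

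\emph{Main obstacle.} The one step that is delicate is extracting the precise bound in Step~3 rather than the weaker $s^{O(\log s/\eps)}$ obtained by naive amortization. The improvement relies on the $\ln(k_r/k_{r-1})$ (multiplicative, integral-of-$1/k$) form of the per-phase growth, rather than a linear $k_r - k_{r-1}$ bound; this in turn hinges on the fact that the heuristic always splits the highest-scoring leaf, so within a phase the lower bound on the score degrades only as $1/k$. A secondary subtlety is that the structural influence lower bound of Step~1 only applies to leaves whose variance is large, so one must first verify that at least an $\Omega(\eps)$ fraction of the probability mass lies on such leaves whenever the current error exceeds $\eps$, before invoking the averaging that gives the $1/k$ factor.
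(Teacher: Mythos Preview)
Your Steps~1 and~2 are correct and match the paper exactly: the OSSS consequence $\max_i \Inf_i(g)\ge \mathrm{Var}(g)/\log s$, the potential $\Phi(T^\circ)=\sum_\ell p_\ell\,\Inf(f_\ell)$, the identity showing each split drops $\Phi$ by precisely the selected score, and the starting bound $\Phi\le \log s$ are all the same ingredients the paper uses.

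The gap is in Step~3. From $\ln(k_r/k_{r-1})\le 2^r(\log s)D_r$ and $\sum_r D_r\le \log s$ you can only conclude $\ln k_R\le (\log s)\sum_r 2^r D_r$, and with no further information this sum can be as large as $2^R\log s\approx (\log s)/\eps$, giving the weak bound $s^{O((\log s)/\eps)}$. Your claimed cap ``$2^r D_r\le O(\log(s/\eps))$'' is the whole difficulty, and nothing in your outline justifies it. The issue is that $D_r$ is bounded by the \emph{potential} at the start of phase~$r$, not by the \emph{error}; knowing the error has dropped to $2^{-r}$ does not by itself say the total-influence mass $\Phi$ has dropped commensurately. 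You need the variance-sensitive refinement $\Inf(g)\le \mathrm{Var}(g)\log(4s/\mathrm{Var}(g))$ for size-$s$ decision trees (paper's Lemma~\ref{lem:total-inf-DTs}), which via Jensen gives $\Phi\le O(\error\cdot\log(s/\error))$ and hence $\Phi\le O(2^{-r}\log(s/\eps))$ at the start of phase~$r$. The paper packages this same idea as a multiplicative score lower bound (Lemma~\ref{lemma: minimum score multiplicative}) and runs a single ``Phase~1'' to drive the cost down to $O(\eps\log(s/\eps))$, then a ``Phase~2'' additive argument equivalent to your per-phase integral; your $R$-phase decomposition can be made to work, but only once you supply that missing lemma.

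Your diagnosis of the main obstacle is also off: the $1/k$ integral form you highlight is routine (and you already carried it out correctly). The genuinely delicate point is precisely the ``careful balancing'' you wave past, namely controlling $\Phi$ in terms of the current error rather than just the initial $\log s$.
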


We complement~\Cref{thm:TD-upper} with lower bounds showing that (a) for exact representation ($\eps = 0$), no non-trivial upper bound can be obtained; and (b) for approximate representation ($\eps \in (0,\frac1{2})$), the dependence on $s$ in~\Cref{thm:TD-upper} is essentially optimal:

\begin{restatable}[Lower bounds for exact and approximate representations]{theorem}{TDlower}
\label{thm:TD-lower}\ 
\begin{enumerate}
\item[(a)] {\emph{Exact representation:}} There is an  $f : \zo^n \to \{ \pm 1\}$ with decision tree size $s = \Theta(n)$ such that $\TopDownDTsize(f) \ge 2^{\Omega(s)}$. 
\item[(b)] {\emph{Approximate representation:}} For every $\eps \in (0,\frac1{2})$ and function $s(n) \le 2^{\tilde{O}(\sqrt{n})}$, there is an $f: \{0,1\}^n \to \{\pm 1\}$ with decision tree size $s$ such that $\TopDownDTsize(f,\eps) \ge s^{\tilde{\Omega}(\log s)}$. 
\end{enumerate} 
\end{restatable}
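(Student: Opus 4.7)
The plan is to exhibit explicit functions $f$ witnessing each lower bound and to analyze $\BuildTopDownDT$ on them. Since any $f:\zo^n\to\{\pm 1\}$ that is non-constant on every nontrivial subcube forces every DT (not just greedy's) to have size $2^n$, any candidate witness for part~(a) must have many subcubes on which $f$ is constant, yet greedy should fail to find them. I would design $f$ around a hidden ``control'' structure: a small set of variables whose correct query order yields a size-$\Theta(n)$ exact DT, but whose individual uniform-distribution influences are strictly dominated by those of other ``non-structural'' variables. This forces $\BuildTopDownDT$ to prefer non-structural queries at the root. To get the full $2^{\Omega(n)}$ blowup I would further need an inductive property: after greedy picks any non-structural variable, the resulting subfunction should preserve this influence imbalance, so the heuristic keeps ignoring the control variables along $\Omega(2^n)$ root-to-leaf paths before each subfunction finally becomes constant.

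For part~(b), the plan is to compose $\Theta(\log s)$ hard layers of this flavor, so that $\BuildTopDownDT$ pays an $s^{\Omega(1)}$ blowup per layer and totals $s^{\tilde\Omega(\log s)}$. The constraint $s\le 2^{\tilde O(\sqrt n)}$ is consistent with roughly $O(\sqrt n)$ layers on $O(\sqrt n)$ variables each. The key additional hurdle beyond part~(a) is ensuring that the $f$-completion of greedy's partial tree has error strictly greater than $\eps$ throughout its execution, since otherwise the termination condition fires early and the lower bound collapses. I would arrange the layers so that most of the probability mass of the remaining leaves sits on subfunctions that stay approximately balanced, keeping the completion error above $\eps$ until the tree has already reached the target size.

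The main obstacle in both parts is controlling influences not only at the root but at every subfunction arising during greedy's execution. Showing that greedy's first pick is wrong is a one-shot Fourier calculation; showing that its $k$-th pick is wrong requires an inductive invariant stating that the ``bad variable dominates'' property is preserved under the restrictions greedy produces, for enough levels to yield the claimed size bound. For part~(b) there is an additional coupling with the approximation-error analysis: we must simultaneously track tree size, the persistence of the influence imbalance across restrictions, and the slow decay of the completion error. Designing a single construction for which all three invariants are simultaneously enforceable, and identifying Fourier-analytic or combinatorial quantities that cleanly witness them, is the crux of the argument.
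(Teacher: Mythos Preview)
Your plan matches the paper's approach: the paper builds layered functions $f_h$ where each layer's high-influence ``decoy'' variables (a single $y^{(h)}$ for~(a), a $k$-variable $\Parity$ for~(b)) are queried by greedy before the low-influence ``control'' variables (an OR/Threshold over $x^{(h)}$), so the greedy tree contains many copies of $T_{h-1}$ where the optimal tree has one. The inductive invariant you anticipate is realized via a ``preservation of influence order under restrictions'' lemma exploiting the nested form $f_h = g(\cdot, f_i(\cdot))$, and for~(b) the base case $f_0 = \Tribes_r$ keeps the completion error near $\tfrac12$ until the tree is large.
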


Prior to our work, it was not known whether an upper bound of $\TopDownDTsize(f,\eps) \le \poly(s,1/\eps)$ holds for all size-$s$ decision trees $f$ and $\eps \in (0,\frac1{2})$;~\Cref{thm:TD-lower}(b) provides a strong negative answer.   Indeed, such an upper bound had been conjectured to hold for the class of {\sl monotone} functions~\cite{Lee09}.  We now discuss our results on monotone functions, which disprove this conjecture, along with a stronger variant of it for exact representation~\cite{FP04}. 

\pparagraph{Monotone functions.} 
A monotone boolean function $f : \zo^n \to \{\pm 1\}$ is one that satisfies $f(x) \le f(y)$ for all $x \preceq y$ (where $x \preceq y$ iff $x_i \le y_i$ for all $i\in [n]$).  An elementary and useful fact about monotone functions is that the influence of a variable on a monotone function $f$ is equivalent to its {\sl correlation} with $f$:

\begin{restatable}[Influence $\equiv$ correlation for monotone functions]{fact}{monoInfluence}
    \label{fact:mono-influence}
For all monotone functions $f : \zo^n \to \{ \pm 1\}$ and $i\in [n]$, we have $\Inf_i[f] =  2\E[f(\bx)\bx_i]-\E[f(\bx)]$.\footnote{The equivalence between influence and correlation for monotone functions is more transparent if one works with $\{\pm 1\}^n$ instead of $\zo^n$ as the domain: for monotone functions $f : \{\pm 1\}^n\to \{\pm 1\}$, we have $\Inf_i[f] = \E[f(\bx)\bx_i]$.}
\end{restatable}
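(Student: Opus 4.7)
The plan is to rewrite both sides of the identity as the same probability, namely $p \coloneqq \Pr[f(\bx^{(0)}) = -1 \text{ and } f(\bx^{(1)}) = +1]$, where $\bx^{(b)}$ denotes $\bx$ with its $i$-th coordinate fixed to $b \in \{0,1\}$. Monotonicity enters in exactly one place on each side.

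First I would handle the right-hand side. Since $\bx_i \in \{0,1\}$ under the paper's convention, splitting on $\bx_i$ gives $\E[f(\bx)\bx_i] = \tfrac{1}{2}\E[f(\bx^{(1)})]$ and $\E[f(\bx)] = \tfrac{1}{2}\E[f(\bx^{(0)})] + \tfrac{1}{2}\E[f(\bx^{(1)})]$, so a one-line substitution yields
\[
2\E[f(\bx)\bx_i] - \E[f(\bx)] \;=\; \tfrac{1}{2}\bigl(\E[f(\bx^{(1)})] - \E[f(\bx^{(0)})]\bigr).
\]
Because $\bx^{(0)} \preceq \bx^{(1)}$ pointwise, monotonicity forces $f(\bx^{(1)}) - f(\bx^{(0)}) \in \{0,2\}$, so its expectation is $2p$. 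Hence the right-hand side equals $p$.

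Next I would handle the left-hand side. The event $\{f(\bx)\neq f(\bx^{\oplus i})\}$ is symmetric under the involution $\bx \leftrightarrow \bx^{\oplus i}$, so it depends only on the unordered pair $\{\bx^{(0)}, \bx^{(1)}\}$ and not on which element $\bx$ happens to equal. Monotonicity rules out the configuration $f(\bx^{(0)}) = +1,\, f(\bx^{(1)}) = -1$, so the only way to have $f(\bx)\neq f(\bx^{\oplus i})$ is $f(\bx^{(0)}) = -1$ and $f(\bx^{(1)}) = +1$. Therefore $\Inf_i(f) = p$ as well, and combining the two computations proves the identity.

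I do not expect any real obstacle: the manipulation is completely elementary, and the ``hard'' content—monotonicity—is used identically on both sides (to collapse a signed expression to a probability). The only bookkeeping point to be careful about is the $0/1$ versus $\pm 1$ convention for the domain, which is exactly what makes the prefactor $2$ appear on the right-hand side, as the footnote in the statement emphasizes.
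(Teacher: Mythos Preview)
Your argument is correct. The paper does not actually supply a proof of this statement: it is stated as a \emph{Fact} (introduced as ``an elementary and useful fact about monotone functions'') and is later merely recalled in \Cref{sec:learn}, so there is nothing to compare against beyond noting that your derivation is the standard one the authors presumably had in mind.
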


Therefore, for monotone functions, splitting on the most influential variable of a subfunction is equivalent to splitting on the variable that has the {\sl highest correlation} with the subfunction.\footnote{\label{fn:parity}We observe that for general {\sl non-monotone} functions, correlation can in general be a very poor splitting criterion, in the sense of building a decision tree that is much larger than the optimal decision tree.  Consider $f : \zo^n \to \{ \pm 1 \}$ where $f(x) = x_j \oplus x_k$, the parity of two variables.  The optimal decision tree size of $f$ is $4$, but since $\E[f(\bx)\bx_i] = 0$ for all $i\in [n]$, the top-down heuristic using correlation as its splitting criterion may build a tree of size $\Omega(2^n)$ before achieving any non-trivial accuracy $\eps < \frac1{2}$.  (On the other hand, the top-down heuristic using influence as its splitting criterion would build the optimal tree of size $4$.)  We revisit this observation in~\Cref{sec:ID3}.}   

Our proof of~\Cref{thm:TD-upper} extends in a straightforward manner to give a different upper bound under the assumption of monotonicity, where the dependence on $s$ is significantly better.  We refer to a size-$s$ decision tree computing a monotone function as a size-$s$ monotone decision tree.   

\begin{restatable}[Upper bound for approximate representation of monotone functions.]{theorem}{TDUpperMonotone}
    \label{thm:TD-upper-monotone} 
    For every $\eps \in (0,\frac1{2})$ and every size-$s$ monotone decision tree $f$,  we have $\TopDownDTsize(f,\eps) \le s^{O(\sqrt{\log s}/\eps)}$.
\end{restatable}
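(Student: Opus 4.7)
\textbf{Proof proposal for \Cref{thm:TD-upper-monotone}.} The plan is to lift the proof of \Cref{thm:TD-upper} to the monotone setting by strengthening a single quantitative ingredient: the per-subfunction lower bound on the largest influence. The overall proof skeleton---a potential-function argument tracking how $\BuildTopDownDT$ drives the approximation error of the current $f$-completion down below $\eps$---carries over essentially unchanged.

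First I would recall the skeleton of the general argument. One fixes a potential $\Phi(T^\circ)$ that dominates (up to a constant) the approximation error of the $f$-completion of $T^\circ$; a natural choice is $\Phi(T^\circ) = \sum_{\ell}2^{-|\ell|}\Var(f_\ell)$. Splitting a leaf on its most influential variable decreases $\Phi$ by an amount that can be lower bounded in terms of the split leaf's score, so the entire analysis reduces to maintaining a lower bound on the maximum score throughout the execution.

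Next I would establish the key monotone influence lemma: for every monotone $g:\{0,1\}^n\to\{\pm1\}$ computed by a decision tree of size $t$, there is a variable with $\Inf_i(g) \ge \Omega(\Var(g)^2/\sqrt{\log t})$. This is quadratically sharper in $\log t$ than the general-function bound used in \Cref{thm:TD-upper}, and it is precisely the source of the quantitative improvement claimed in \Cref{thm:TD-upper-monotone}. The proof combines three ingredients: \Cref{fact:mono-influence}, which identifies $\Inf_i(g)$ with a nonnegative, shifted linear Fourier coefficient of $g$; the standard bound $\sum_i \Inf_i(g) \le \log t$ for size-$t$ decision trees (obtained by averaging root-to-leaf depths); and a Cauchy--Schwarz step across coordinates that exploits the nonnegativity guaranteed by monotonicity to turn this sum bound into a pointwise lower bound on $\max_i \Inf_i(g)$ in terms of $\Var(g)$.

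Since restrictions of monotone functions are monotone and have decision tree size at most $s$, this lemma applies to every subfunction $f_\ell$ arising during the execution of $\BuildTopDownDT(f,\eps)$. Averaging scores across leaves (using $\sum_\ell 2^{-|\ell|} = 1$) and again invoking Cauchy--Schwarz yields a lower bound on the maximum score of the form $\Omega(\Phi^2/(L\sqrt{\log s}))$, where $L$ is the current number of leaves. Plugging this into the recurrence for $\Phi$ across iterations and solving in the standard way gives $\Phi_k \le \eps$ within $k = s^{O(\sqrt{\log s}/\eps)}$ iterations, which also bounds the final tree size.

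The main technical obstacle is pinning down the monotone influence lemma with the correct $\sqrt{\log t}$ dependence; a naive averaging approach gives only $\log t$, and extracting the square-root improvement requires careful use of the nonnegativity of the linear Fourier coefficients of monotone $g$ together with a Cauchy--Schwarz across the nonzero coordinates. Once this lemma is in place, the rest of the argument is a mechanical adaptation of the proof of \Cref{thm:TD-upper}, with $\sqrt{\log s}$ in place of $\log s$ at every quantitative step.
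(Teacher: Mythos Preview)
Your key lemma is false. You claim that for every monotone $g$ computed by a size-$t$ decision tree, $\max_i \Inf_i(g) \ge \Omega(\Var(g)^2/\sqrt{\log t})$. Take $g = \Tribes_r$: it is monotone, has $\Var(g) = \Theta(1)$, decision tree size $s$ with $\log s = \Theta(r\log\log r/\log r)$, and every variable has influence $\Theta((\log r)/r)$ (\Cref{fact:Tribes-properties}). Your bound would force $\max_i \Inf_i(g) \ge \Omega(1/\sqrt{\log s}) = \Omega\big(\sqrt{\log r/(r\log\log r)}\big)$, which for large $r$ is far larger than the actual value $\Theta((\log r)/r)$. No Cauchy--Schwarz manipulation of the nonnegative degree-$1$ coefficients can rescue this, because monotonicity only gives you an \emph{upper} bound $\sum_i \hat g(\{i\})^2 \le \Var(g)$ on the degree-$1$ weight, never a lower bound in terms of $\Var(g)$.

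The paper's proof does not improve the per-subfunction influence bound at all: it uses exactly the same OSSS inequality $\max_i \Inf_i(f_\ell) \ge \Var(f_\ell)/\log s$ as in the general case (\Cref{lemma: minimum score additive}). The monotone gain comes from a single different place: the \emph{initial} value of the potential. The paper tracks $\cost_f(T^\circ) = \sum_\ell 2^{-|\ell|}\Inf(f_\ell)$, and for monotone size-$s$ decision trees the O'Donnell--Servedio bound (\Cref{thm:OS}) gives $\Inf(f)\le\sqrt{\log s}$, so $C_0\le\sqrt{\log s}$ rather than $\log s$. Running only the additive phase of the general argument, $C_{j+1}\le C_j - \eps/((j{+}1)\log s)$ and hence $C_0 - C_k \ge (\eps/\log s)\log k$, then yields $C_k\le 0$ after $k = \exp\big((\log s)\cdot C_0/\eps\big) = s^{\sqrt{\log s}/\eps}$ steps. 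In short, monotonicity enters through an upper bound on total influence, not through a sharper lower bound on the maximal influence.
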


In analogy with~\Cref{thm:TD-lower}, we also obtain lower bounds for exact and approximate representations of monotone functions: 

\begin{theorem}[Lower bounds for exact and approximate representations of monotone functions] \ 
\label{thm:TD-lower-monotone} 
\begin{enumerate}
\item[(a)] {\emph{Exact representation:}} There is a monotone  $f : \zo^n \to \{ \pm 1\}$ with decision tree size $s = \Theta(n)$ such that $\TopDownDTsize(f) \ge 2^{\Omega(s)}$. 
\item[(b)] {\emph{Approximate representation:}} For every $\eps \in (0,\frac1{2})$ and function $s(n) \le 2^{\tilde{O}(n^{4/5})}$, there is an $f: \{0,1\}^n \to \{\pm 1\}$ with decision tree size $s$ such that $\TopDownDTsize(f,\eps) \ge s^{\tilde{\Omega}(\sqrt[4]{\log s})}$. 
\end{enumerate} 
\end{theorem}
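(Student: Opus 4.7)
The general strategy for both parts is to construct monotone functions whose most influential variables consistently mislead the heuristic. In the non-monotone lower bound of \Cref{thm:TD-lower}(b), one can use parity-style gadgets to drive every variable's correlation to zero and thereby hide structure from any correlation-based rule. By \Cref{fact:mono-influence}, this is unavailable in the monotone setting, since influence equals correlation up to an additive constant. The hard monotone instances must instead \emph{concentrate} the correlation on a ``decoy'' variable whose presence at the root destroys the succinct representation of $f$ while barely reducing the residual complexity of $f_{x_i=0}$ and $f_{x_i=1}$.

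For part~(a), I would construct a single monotone ``hiding'' gadget $f$ on $\Theta(n)$ variables admitting a decision tree of linear size but whose strictly-top-influence variable $x_i$ is such that both $f_{x_i=0}$ and $f_{x_i=1}$ recursively inherit the same pathology, in a way that is only resolvable at zero error by querying essentially the full truth table. Tracking the recursive explosion over $\Theta(n)$ levels then yields a heuristic tree of size $2^{\Omega(n)} = 2^{\Omega(s)}$. A plausible concrete candidate is a monotone ``shifted threshold'' or weighted-majority gadget in which one coordinate controls an almost-trivial perturbation of the output (carrying nearly all the correlation) while the remaining coordinates encode a small monotone DNF that the heuristic cannot exploit after splitting on the decoy.

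For part~(b), I would amplify this pathology by composing many copies of a base hard gadget inside an outer monotone scaffold, in the spirit of recursive-majority or tribes compositions. Starting from a base $g$ on $n_0$ variables with decision tree size $s_0$ on which the heuristic blows up by a factor of $s_0^{\Omega(1)}$, the $k$-fold composition $G_k$ has decision tree size $s = s_0^{\Theta(k)}$. Running \BuildTopDownDT\ on $G_k$ then forces, at each outer level, the score-maximizing leaves to descend into one copy of the inner construction and incur the inner blow-up \emph{before} the leaf-scheduling rotates to a sibling copy; the multiplicative compounding across the $k$ levels gives a heuristic output of size $s^{\tilde{\Omega}(\sqrt[4]{\log s})}$. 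Balancing $n_0$, the recursion depth $k$, and the total variable count $n$ to optimize the ratio $\log(\text{heuristic size})/\log s$ delivers the claimed parameter regime $s \le 2^{\tilde{O}(n^{4/5})}$.

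The main technical obstacles are twofold. First, designing the base gadget so that it is monotone and yet every one of its subfunctions still contains a decoy variable strictly dominating the useful ones in influence: monotonicity forces nonnegative correlation with every variable, so the calibration between decoy influence, useful influence, and residual error is delicate and must survive arbitrary tie-breaking. Second, for part~(b), one must argue that the score $\score(\ell) = 2^{-|\ell|}\Inf_{i(\ell)}(f_\ell)$ continues to prefer the already-expanded, depth-penalized copies of the inner gadget over shallower, still-pristine copies, so that the heuristic cannot spread its budget across copies and amortize the per-copy cost. Controlling this scoring balance across levels is what ultimately pins the exponent to $\sqrt[4]{\log s}$ rather than the $\sqrt{\log s}$ appearing in the matching upper bound of \Cref{thm:TD-upper-monotone}, and is the step I expect to be hardest.
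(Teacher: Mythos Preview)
Your intuition for part~(a)---a decoy variable with strictly dominant influence whose removal duplicates the residual problem---is correct, and the paper's construction follows exactly this pattern (a gadget on $\zo^4 \times \zo$ in which the single $y$-variable has influence $\tfrac{9}{16}$ while the four $x$-variables have influence at most $\tfrac12$). For part~(b), however, your proposal has two genuine gaps.

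First, you misidentify the central technical challenge. You worry about leaf scheduling: whether score-maximizing leaves prefer ``already-expanded, depth-penalized copies'' over ``shallower, still-pristine copies.'' But the paper's lower bound is entirely agnostic to which leaf \BuildTopDownDT{} splits next; it only uses that \emph{when} a leaf is split, the most influential variable is placed. The argument lower-bounds the size of \emph{any} $\eps$-approximating pruning of the exact tree $T_\exact$, which is precisely why it extends to all impurity-based heuristics (\Cref{thm:TD-lower-monotone-any}). Your scoring-balance concern is a red herring, and it is not what pins the exponent.

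Second, and more importantly, you are missing the structural obstacle that actually drives the $\sqrt[4]{\log s}$ exponent. The non-monotone construction of \Cref{thm:TD-lower}(b) uses $\Parity_k$ as the decoy block: every $\Parity_k$ variable has influence~$1$ under \emph{every} restriction, so all $k$ of them provably precede everything else on every path. The monotone replacement is $\Maj_k$, but its variable influences are only $\Theta(1/\sqrt{k})$ and \emph{decay} under restrictions (to zero along biased paths). One therefore cannot prove that all $y^{(i)}$-variables precede the useful variables on every path. The paper's real technical contribution here is a probabilistic ``mostly precedes'' argument (\Cref{def:mostly preceds}, \Cref{lem: haha}, \Cref{cor: whee}): over a uniformly random $\by$, with high probability at least $\Omega(k/\log k)$ of the $y^{(i)}$-variables are queried before any routing or $z$-variable. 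The $\sqrt[4]{\log s}$ exponent then comes from balancing $k$ (Majority arity, influence $\sim 1/\sqrt{k}$), $\ell$ (arity of the biased $\Tribes$ gates used for monotone four-way routing), $\delta$ (the bias), and $h$ (recursion depth)---not from composition depth or scoring dynamics.

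Your proposed ``$k$-fold composition'' architecture is also structurally different: the paper uses a single-branch nested recursion (one copy of $f_{h-1}$ per level, selected by a pair of biased $\Tribes$ functions on $x^{(h,1)},x^{(h,2)}$), and the blow-up comes from the heuristic querying the $\Maj_k(y^{(h)})$ block \emph{before} the routing block, creating $2^{\Omega(k/\log k)}$ copies of the recursive subtree where the optimal tree creates only $2^{O(\ell)}$. Without the $\Maj_k$/biased-$\Tribes$ calibration and the mostly-precedes analysis, your plan does not yet contain the load-bearing ideas.
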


Although we have stated~\Cref{thm:TD-lower-monotone} in terms of the specific heuristic $\BuildTopDownDT$ that we study, the actual lower bounds that we establish are significantly stronger: they apply to all ``impurity-based top-down heuristics".  This is a broad class that captures a wide variety of decision tree learning heuristics used in machine learning practice, including ID3, C4.5, and CART; see~\Cref{sec:ID3} for details.

\begin{theorem}[Stengthening of~\Cref{thm:TD-lower-monotone}(b)] 
\label{thm:TD-lower-monotone-any}
For every $\eps \in (0,\frac1{2})$ and function $s(n) \le 2^{\tilde{O}(n^{4/5})}$, there is a size-$s$ monotone decision tree $f$ such that the $\eps$-approximator built by any impurity-based top-down heuristic must have size $s^{\tilde{\Omega}(\sqrt[4]{\log s})}$.
\end{theorem}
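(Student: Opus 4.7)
The plan is to extend the lower bound of Theorem~\ref{thm:TD-lower-monotone}(b) to all impurity-based top-down heuristics by exploiting a structural equivalence specific to monotone functions. Recall that an impurity-based heuristic is specified by a concave \emph{impurity function} $\mathcal{G}: [0,1] \to \mathbb{R}_{\ge 0}$ with $\mathcal{G}(0) = \mathcal{G}(1) = 0$ (for instance, Gini impurity or binary entropy), and at each step selects the (leaf, variable) pair $(\ell, x_i)$ maximizing
\[
2^{-|\ell|} \cdot \bigl( \mathcal{G}(p_\ell) - \tfrac{1}{2}\mathcal{G}(p_{\ell,0}) - \tfrac{1}{2}\mathcal{G}(p_{\ell,1}) \bigr),
\]
where $p_\ell = \Pr[f_\ell = 1]$ and $p_{\ell,b} = \Pr[f_{\ell,x_i=b} = 1]$. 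My first step is the observation that for monotone $f_\ell$, Fact~\ref{fact:mono-influence} implies $p_{\ell,1} - p_{\ell,0} = \Inf_i(f_\ell)$ and $p_\ell = (p_{\ell,0} + p_{\ell,1})/2$; concavity of $\mathcal{G}$ then forces $\mathcal{G}(p_\ell) - \tfrac{1}{2}\mathcal{G}(p_\ell - t/2) - \tfrac{1}{2}\mathcal{G}(p_\ell + t/2)$ to be strictly increasing in $t \ge 0$ for $p_\ell$ fixed. Consequently, at every leaf, every impurity-based heuristic selects exactly the same splitting variable as the influence-based heuristic---namely, the most influential variable of $f_\ell$.

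The only remaining freedom of an impurity-based heuristic is the \emph{order} in which leaves are expanded. My plan is to reuse the construction underlying Theorem~\ref{thm:TD-lower-monotone}(b), and argue that its $s^{\tilde{\Omega}(\sqrt[4]{\log s})}$ lower bound is robust to this scheduling freedom. I expect the hard function to consist of many ``equivalent'' copies of a single hard gadget, structured so that (i) within each copy the variable-choice equivalence above forces a recursive branching tree of size $s^{\tilde{\Omega}(\sqrt[4]{\log s})}$, and (ii) each copy contributes roughly the same amount to the total approximation error, so that no scheduling strategy can achieve an $\eps$-approximation while leaving most copies under-expanded.

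The main obstacle is establishing (ii) in a scheduling-invariant way, which is a strictly global property not required by the original proof (where the influence-based heuristic has a canonical leaf-expansion order determined by its score function). I plan to handle this via a potential-function argument: define a potential $\Phi(T^\circ)$ measuring the remaining work needed to extend a partial bare tree $T^\circ$ to an $\eps$-approximator of $f$, show that each split performed by any impurity-based heuristic decreases $\Phi$ by at most a uniformly bounded amount (independent of $\mathcal{G}$), and show that reaching error at most $\eps$ requires $\Phi$ to fall below a threshold that necessitates at least $s^{\tilde{\Omega}(\sqrt[4]{\log s})}$ splits. A more direct alternative would be to strengthen the hard construction so that its approximation error becomes a symmetric function of the partial tree---depending only on the number of expanded copies of the gadget rather than on which copies---in which case scheduling-invariance of the lower bound becomes immediate.
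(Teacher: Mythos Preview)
Your first step---that for monotone $f$ every impurity-based heuristic chooses the most influential variable at each leaf---is correct and is exactly the paper's Proposition~\ref{prop: impurity gain is influence}. You also correctly identify that the only remaining degree of freedom among impurity-based heuristics is the \emph{order} in which leaves are split.

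Where you diverge from the paper is in how to handle this scheduling freedom. You assert that scheduling-invariance is ``a strictly global property not required by the original proof,'' and therefore propose to introduce a new potential-function argument or to modify the hard construction to be symmetric. This is a misreading of the proof of Theorem~\ref{thm:TD-lower-monotone}(b). If you re-examine Lemmas~\ref{lem: haha} and Corollary~\ref{cor: whee} and Claim~\ref{claim: monotone approx TD is large}, you will see that none of them ever appeal to the leaf-scoring rule of \textsc{BuildTopDownDT}; they use only that whenever a leaf \emph{is} split, it is split on the most influential variable of the corresponding subfunction. Consequently the lower bound already applies to \emph{every} pruning of $T_\exact = \textsc{BuildTopDownDT}(f_h,0)$ that $\eps$-approximates $f_h$. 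Since your first step shows that any tree produced by any impurity-based heuristic is such a pruning, the theorem follows immediately---no potential function, no new construction.

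So your plan is not wrong, but it is substantially more work than needed: the ``main obstacle'' you identify dissolves once you notice that the existing proof is already scheduling-invariant. The paper's proof of Theorem~\ref{thm:TD-lower-monotone-any} is essentially two sentences pointing this out.
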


\smallskip

\noindent {\bf Disproving conjectures of Fiat--Pechyony and Lee.}   Motivated by applications in learning theory (discussed next in~\Cref{sec:learning}), Fiat and Pechyony~\cite{FP04} and Lee~\cite{Lee09} also considered the quality of $\BuildTopDownDT$ as a heuristic for building decision trees for monotone functions.

\cite{FP04} conjectured that for all monotone functions $f$, even in the case of exact representation ($\eps = 0$), $\BuildTopDownDT$ returns a tree of minimal depth and size ``not far from minimal." \Cref{thm:TD-lower-monotone}(a) provides a counterexample to the conjectured bound on size, and the function in~\Cref{thm:TD-lower-monotone}(b) disproves the conjecture about depth; see~\Cref{remark: depth separation}.\footnote{For clarity of exposition, throughout this overview we discuss our results with decision tree {\sl size} as the complexity measure.  There are analogues of all of our results, both upper and lower bounds, for decision tree {\sl depth} as the complexity measure. }  

Stated in the notation of our paper,~\cite{Lee09} raised the possibility that $\TopDownDTsize(f,\eps) \le \poly(s,1/\eps)$ for all size-$s$ monotone decision trees $f$ and $\eps \in (0,\frac1{2})$. The author further remarked that ``showing $\TopDownDTsize(f,\eps) \le \poly(s)$, even only for constant accuracy $\eps$,\footnote{That is, a bound of the form $\TopDownDTsize(f,\eps) \le s^{O_\eps(1)}$.} would be a huge advance".  \Cref{thm:TD-lower-monotone}(b) rules this out.  

\subsection{Algorithmic applications: Properly learning decision trees} 
\label{sec:learning} 

Learning decision trees has been a touchstone problem in uniform-distribution PAC learning for more than thirty years.  It sits right at the boundary of our understanding of efficient learnability, and continues to be the subject of intensive research.   The seminal work of Ehrenfeucht and Haussler~\cite{EH89} gave a $\poly(n^{\log s},1/\eps)$-time algorithm for learning decision trees using random examples (see also~\cite{Blu92} for an alternative proof based on Rivest's algorithm for learning decision lists~\cite{Riv87});\footnote{In fact, the algorithm of~\cite{EH89}  learns decision trees in the more challenging setting of {\sl distribution-free} PAC learning.  All other results in this discussed in this section, including ours, are specific to uniform-distribution learning, and we focus our exposition on this setting.} subsequently, Linial, Mansour, and Nisan~\cite{LMN93} gave an algorithm that also runs in quasipolynomial time, but achieves polynomial sample complexity; Kusilevitz and Mansour~\cite{KM93}, leveraging a novel connection to cryptography~\cite{GL89}, gave a polynomial-time algorithm using membership queries;  Gopalan, Kalai, and Klivans~\cite{GKK08} obtained an {\sl agnostic} analogue of~\cite{KM93}'s algorithm, extending it to tolerate adversarial noise; O'Donnell and Servedio~\cite{OS07} gave a polynomial-time algorithm for learning {\sl monotone} decision trees from  random examples; recent work of Hazan, Klivans, and Yuan~\cite{HKY18} gives an algorithm agnostically learning decision trees with {\sl polynomial sample complexity}; even more recent work of Chen and Moitra~\cite{CM19} gives an algorithm for learning {\sl stochastic} decision trees. 

\bigskip

\noindent {\bf Properly learning decision trees.}  When learning decision trees, it is natural to seek a  hypothesis that is itself a decision tree. Indeed, it may be natural to seek a decision tree hypothesis even when learning other concept classes.  The simple structure of decision trees makes them desirable both in terms of interpretability and explanatory power, which is why they are ubiquitous in empirical machine learning. A further advantage of decision tree hypotheses is that they are very fast to evaluate: evaluating a depth-$d$ decision tree on a given input takes time $O(d)$,\footnote{Every size-$s$ decision tree is well-approximated by a decision tree of depth $O(\log s)$.} whereas evaluating say a degree-$d$ polynomial---another canonical and ubiquitous representation class in learning theory---can take time $\Theta(n^d)$,  the number of monomials in the polynomial.

In learning theory, algorithms that return a hypothesis belonging to the concept class are known as {\sl proper}.   Understanding the complexity of proper learning (vis-\`a-vis improper learning) is an important research direction in learning theory~\cite{Fel16}; proper learning also has deep connections to proof complexity~\cite{ABFKP08} and property testing~\cite{GGR98}.  

\subsubsection{New proper learning algorithms}  

Among the decision tree learning algorithms discussed at the beginning of this subsection, the only one that is proper is the one of Ehrenfeucht and Haussler~\cite{EH89}. Our upper bounds on $\TopDownDTsize$ yield new algorithms for properly learning decision trees under the uniform distribution: 

\begin{theorem}[Algorithmic consequence of~\Cref{thm:TD-upper}]
\label{thm:learn-general} 
Size-$s$ decision trees can be properly learned under the uniform distribution in time $\poly(n,s^{\log(s/\eps)\log(1/\eps)})$ using membership queries.\footnote{We remark that our algorithm only requires fairly ``mild" use of membership queries.  Our algorithm only requires \emph{random edge samples} (\Cref{def:random edges}), and hence falls within both the random walk model of Bshouty et al.~\cite{BMOS05} and the local membership queries model of Awasthi et al.~\cite{AFK13}.  These (incomparable) models are natural relaxations of the standard model of learning from random examples, and do not allow the learning algorithm unrestricted membership query access to the target function.} 
\end{theorem}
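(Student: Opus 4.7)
The plan is to obtain the proper learning algorithm by simulating $\BuildTopDownDT(f,\eps/4)$ with empirical estimates computed from random edge samples, and invoking \Cref{thm:TD-upper} to bound the size of the output tree. For each current leaf $\ell$ of the partial bare tree $T^\circ$, the algorithm estimates the score
\[
2^{-|\ell|}\Inf_{i}(f_\ell) \;=\; \Prx_{\bx}\big[\,\bx\text{ reaches }\ell \,\wedge\, f(\bx)\ne f(\bx^{\oplus i})\,\big]
\]
for every free variable $i$ at $\ell$ by querying $f$ on pairs $(\bx,\bx^{\oplus i})$ with $\bx$ uniform. In parallel it estimates the majority value $\sign(\E[f_\ell])$ at each leaf and the overall approximation error of the $f$-completion from ordinary uniform samples. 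At each iteration it splits the leaf attaining the largest estimated score on the variable achieving that maximum, and halts as soon as the estimated error drops below $\eps/2$.

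To control sample complexity and runtime, I would take a Chernoff-type union bound over all $n\cdot S$ score estimates produced during the run, where $S\coloneqq s^{O(\log(s/\eps)\log(1/\eps))}$ is the size budget guaranteed by \Cref{thm:TD-upper} at accuracy $\eps/4$. Driving every estimate to additive accuracy $\eta = \eps/(100 S)$, each with failure probability $\delta/(nS)$, requires $\poly(S/\eps,\log(nS/\delta))$ random edge samples per estimate and $\poly(n, S, 1/\eps)$ total time, matching the $\poly(n, s^{\log(s/\eps)\log(1/\eps)})$ bound claimed in the theorem.

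The main conceptual obstacle is showing that this noisy, approximately-greedy execution of $\BuildTopDownDT$ still halts within $S$ steps and still returns a genuine $\eps$-approximator. I plan to handle this robustness issue as follows. On the good event where every estimate is within $\eta$ of the truth, the noisy algorithm always splits a leaf whose \emph{true} score is within $2\eta$ of the maximum true score; since the score of a split equals (up to lower-order terms) the drop in approximation error induced by that split, running for at most $S$ iterations loses at most $2S\eta \le \eps/50$ of progress relative to the exact heuristic. Hence an $\eps/4$-approximator of size $\le S$ in the exact setting corresponds to an $(\eps/4+\eps/50)$-approximator of size $\le S$ in the noisy setting, and together with the $\eta$-slack in the stopping rule this yields true error $\le \eps$. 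The output is by construction a decision tree, so the algorithm is proper.

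Finally, note that all estimates access $f$ only at pairs $(\bx,\bx^{\oplus i})$ with $\bx$ uniform (for scores) or at single uniform points $\bx$ (for leaf labels and the global error check), so the algorithm indeed fits within the random edge model of \Cref{def:random edges} promised by the footnote. The principal technical risk is the robustness argument in the previous paragraph, which I would formalize by making the per-step noise $\eta$ polynomially smaller than the size of the tree being constructed, ensuring that accumulated greedy-choice errors stay within the $\eps/4$ slack built into the stopping criterion.
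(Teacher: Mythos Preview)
Your overall strategy—simulate $\BuildTopDownDT$ using random edge samples to estimate scores and pick the empirically best split—is exactly what the paper does. The gap is in the robustness paragraph.

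First, the claim that ``the score of a split equals (up to lower-order terms) the drop in approximation error'' is false. By \Cref{lemma: cost properties}(2) the score equals the drop in $\cost_f(T^\circ)=\sum_\ell 2^{-|\ell|}\Inf(f_\ell)$, not the drop in $\error(T^\circ,f)$; these can differ by a lot (e.g.\ splitting a two-variable parity leaf has score $2^{-|\ell|}$ but zero error decrease).

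Second, and more seriously, the ``accumulated $2S\eta$ loss relative to the exact heuristic'' argument does not go through. Once the noisy and exact runs make a single different choice they build different bare trees, so there is no step-by-step coupling: the max score available at step $j$ of the noisy run is a max over a \emph{different} set of leaves than at step $j$ of the exact run, and you cannot conclude that the two costs stay within $2j\eta$ of each other.

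The fix the paper uses is to notice that the proofs of \Cref{lemma: minimum score additive} and \Cref{lemma: minimum score multiplicative} never use that $T^\circ$ was produced by exact greedy—they only use that the $f$-completion of $T^\circ$ still has error $>\eps$ (resp.\ large cost). Hence the entire proof of \Cref{thm:TD-upper} goes through verbatim, with constants adjusted, for \emph{any} rule that at each step splits some leaf--variable pair whose true score is at least a quarter of the maximum true score. Then one uses a multiplicative Chernoff argument (distinguish score $\ge t$ from score $\le t/4$, with $t$ the current true max, lower-bounded via \Cref{lemma: minimum score additive}) to show that with $m=O\big(k\log s\,(\log k+\log(1/\delta))/\eps\big)$ random edges, where $k=s^{O(\log(s/\eps)\log(1/\eps))}$, the empirical argmax has true score at least a quarter of the optimum at every step. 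Your additive choice $\eta=\eps/(100S)$ is also not quite enough for this: to force $\max-2\eta\ge \max/4$ one needs $\eta\lesssim \eps/(S\log s)$, since \Cref{lemma: minimum score additive} only guarantees $\max\ge \eps/(S\log s)$ at the last step.
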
 

Analogously,~\Cref{thm:TD-upper-monotone} yields a new algorithm for learning  monotone decision trees using only random examples.  The learnability of monotone functions with respect to various complexity measures has been the subject of intensive study in uniform-distribution  learning~\cite{HM91,KV94,KLV94,Bsh95,BT96,BBL98,Ver98,SM00,Ser04,OS07,Sel08,DLMSWW09,Lee09,JLSW11,OW13}.

\begin{restatable}[Algorithmic consequence of~\Cref{thm:TD-upper,thm:TD-upper-monotone}]{theorem}{monotoneLearning}
\label{thm:learn-monotone} 
Size-$s$ monotone decision trees can be properly learned under the uniform distribution in time $\poly(n,\min(s^{O(\log(s/\eps)\log(1/\eps))}, s^{O(\sqrt{\log s}/\eps)}))$ using only random examples.
\end{restatable}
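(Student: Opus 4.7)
The plan is to convert the structural upper bounds of~\Cref{thm:TD-upper} and~\Cref{thm:TD-upper-monotone} into a proper learning algorithm by empirically simulating $\BuildTopDownDT$ from random examples. Set
\[ T \coloneqq \min\bigl(s^{O(\log(s/\eps)\log(1/\eps))},\ s^{O(\sqrt{\log s}/\eps)}\bigr). \]
By those two theorems, $\BuildTopDownDT(f,\eps/2)$ applied to any size-$s$ monotone $f$ produces an $(\eps/2)$-approximator of size at most $T$; the goal is to simulate this construction using only random labeled examples.

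The crucial feature of the monotone case is that every quantity $\BuildTopDownDT$ inspects can be estimated from uniform random labeled examples alone. \Cref{fact:mono-influence} gives $\Inf_i(f_\ell) = 2\,\E[f_\ell(\bx)\bx_i] - \E[f_\ell(\bx)]$, so the influence of $x_i$ on a subfunction $f_\ell$ reduces to a pair of correlations; the $f$-completion of a leaf $\ell$ is determined by the sign of $\E[f_\ell(\bx)]$; and the approximation error is just a labeling-discrepancy probability. All three can be empirically estimated from the subsample of examples whose assignment to the queried variables reaches $\ell$.

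I would therefore draw $N = \poly(n,T,1/\eps,\log(1/\delta))$ uniform random labeled examples $S$ and run the following empirical variant of $\BuildTopDownDT$: at each iteration, for every live leaf $\ell$ of the current bare tree, let $S_\ell \subseteq S$ be the examples reaching $\ell$; use $S_\ell$ together with \Cref{fact:mono-influence} to form an empirical score for each variable, and split the leaf with the largest empirical score. To deal with data-starved leaves, I would \emph{freeze} any leaf reached by fewer than a $\gamma \coloneqq \eps/(4T)$ fraction of $S$ and label it with the empirical majority of $S_\ell$; at most $T$ such leaves ever arise, so their combined contribution to the error is at most $T\gamma = \eps/4$ regardless of labeling. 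For every non-frozen leaf, $|S_\ell| \gtrsim \gamma N$, and Chernoff plus a union bound over the (at most $T$) leaves ever constructed and the $n$ variables yields additive error $\tau = o(1/T)$ on every score estimate with probability at least $1-\delta$.

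The main obstacle is \emph{robustness}: showing that the structural bounds of~\Cref{thm:TD-upper} and~\Cref{thm:TD-upper-monotone} still hold when the exact scores driving $\BuildTopDownDT$ are replaced by estimates accurate to within $\tau$. I would address this by revisiting the potential-function argument underlying those theorems and verifying that its per-split progress guarantees survive additive $\tau$-perturbations of the scores---intuitively, since the analysis charges progress against the score of the leaf being split, splitting a near-maximal rather than an exactly-maximal leaf affects only lower-order terms in the final size bound. Combining this stability argument with the sampling analysis above yields a proper, random-examples-only learner with the claimed runtime, since each of the at most $T$ iterations inspects at most $T$ live leaves and $n$ variables using $|S|$ examples.
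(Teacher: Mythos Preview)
Your proposal is correct and follows essentially the same approach as the paper: empirically simulate $\BuildTopDownDT$, using~\Cref{fact:mono-influence} to estimate influences from random labeled examples, and argue that the size bounds of~\Cref{thm:TD-upper,thm:TD-upper-monotone} tolerate approximate scores. The paper's execution is simpler in two ways worth noting. First, it avoids your freezing mechanism by estimating the score $2^{-|\ell|}\Inf_i(f_\ell)$ directly as $\Ex_{\bx\in\bS}\bigl[\Ind[\bx\text{ reaches }\ell]\cdot f(\bx)(2\bx_i-1)\bigr]$, an average of $\{-1,0,1\}$-valued terms over the \emph{entire} sample; Chernoff then gives uniform additive accuracy on every score regardless of how deep $\ell$ is, so data-starved leaves need no special treatment and the circularity between freezing and the size bound never arises. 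Second, rather than revisiting the potential argument in the abstract, the paper observes concretely that the proofs of~\Cref{thm:TD-upper,thm:TD-upper-monotone} use only the score \emph{lower bounds} of~\Cref{lemma: minimum score additive,lemma: minimum score multiplicative}, so selecting any leaf whose score is at least a quarter of those bounds preserves the conclusions up to constants in the exponent; it then sets the sample size so that relative Chernoff bounds certify this quarter-of-optimal guarantee at every one of the at most $T$ steps.
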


\noindent We now compare our results with the prior state of the art for properly learning decision trees.

\begin{itemize}[leftmargin=0.5cm]
\item[$\circ$]  {\bf Polynomial-time algorithms for superlogarithmic size.}  \Cref{thm:learn-general,thm:learn-monotone} give the first polynomial-time algorithms for properly learning decision trees of size $\omega(\log n)$ to constant accuracy.  To see this, we first note that~\cite{EH89}'s runtime of $\poly(n^{\log s},1/\eps)$ is superpolynomial time for any $s = \omega(1)$.   Alternatively, functions depending on $k \ll n$ variables (``$k$-juntas") can be properly learned in time $\poly(n,2^k)$, using random examples for monotone juntas, and membership queries otherwise~\cite{BL97,MOS04}. Since every size-$s$ decision tree certainly depends on at most $k \le s$ variables, this runtime is polynomial for decision trees of size $s = O(\log n)$, but becomes superpolynomial once $s = \omega(\log n)$.  In contrast, the runtimes of our algorithms in~\Cref{thm:learn-general,thm:learn-monotone} remain polynomial for $s = 2^{\Omega(\sqrt{\log n})}$ and $s = 2^{\Omega((\log n)^{2/3})}$ respectively. 

\item[$\circ$] {\bf Dimension-independent hypothesis size.}  Related to the above, the sizes of the hypotheses returned by the algorithms of~\Cref{thm:learn-general,thm:learn-monotone} are $s^{O(\log(s/\eps)\log(1/\eps))}$ and $s^{O(\sqrt{\log s}/\eps)}$ respectively, independent of $n$, whereas the size of the hypotheses returned by~\cite{EH89}'s algorithm can be as large as $n^{\Omega(\log s)}$.  This is gap can be exponential or even larger for small values of $s$.  

\item[$\circ$] {\bf Average depth as the complexity measure.} Our algorithms and analyses extend easily to accommodate {\sl average depth} as the complexity measure.  The average depth of a decision tree, $\triangle(T)$, is the number of queries $T$ makes on a uniform random input.  Average depth is a stronger complexity measure than size since $\triangle(T) \le \log(\mathrm{size}(T))$.\footnote{Furthermore, it is easy to construct examples of decision trees $T$ with the largest possible gap between these measures: $\triangle(T) = O(1)$ and $\log(\mathrm{size}(T)) = \Omega(n)$.}  

\begin{theorem}[Learning trees with small average depth] 
\label{thm: learning average depth trees}
Decision trees of average depth $\triangle$ can be properly learned under the uniform distribution in time $\poly(n,2^{\triangle^2/\eps})$ using membership queries, and monotone decision trees of average depth $\triangle$ can be properly learned in time $\poly(n,2^{\triangle^{3/2}/\eps})$ using random examples.
\end{theorem}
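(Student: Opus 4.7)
The plan is to reduce the average-depth setting to the size-based setting already handled by \Cref{thm:learn-general,thm:learn-monotone}, via truncation of the target decision tree.

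\emph{Step 1 (truncation).} Let $T^\star$ be a decision tree of average depth $\triangle$ computing $f$. For a depth parameter $d$ to be chosen, let $T^\star_d$ denote $T^\star$ truncated at depth $d$, with each truncated path relabeled by an arbitrary sign. Applying Markov's inequality to the random variable ``depth of the leaf of $T^\star$ reached by a uniform random input'', the probability that a uniform input reaches depth $\ge d$ is at most $\triangle/d$. Hence $T^\star_d$ is a $(\triangle/d)$-approximator of $f$ of size at most $2^d$.

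\emph{Step 2 (robust size-based bound).} I would verify that the proofs of \Cref{thm:TD-upper,thm:TD-upper-monotone} extend without essential change to the following more robust statement: if $f$ is $\delta$-close to a size-$s$ decision tree and $\eps \ge 2\delta$, then $\BuildTopDownDT(f,\eps)$ outputs a tree of size $s^{O(\log(s/\eps)\log(1/\eps))}$ in general, and $s^{O(\sqrt{\log s}/\eps)}$ in the monotone case. This should follow because the potential-function argument underlying those theorems uses the existence of a small target tree only to bound the number of iterations needed before the current $f$-completion reaches error $\eps$ against $f$; since $T^\star_d$ suffices at error $\eps/2$, the argument goes through.

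\emph{Step 3 (balance and invoke the learners).} Choosing $d \asymp \triangle/\eps$ makes $T^\star_d$ an $(\eps/2)$-approximator of size $s = 2^{O(\triangle/\eps)}$. Substituting into the robust bound of Step 2 yields a bound on $\TopDownDTsize(f,\eps)$ of the desired form $2^{\poly(\triangle,1/\eps)}$. Feeding this into the proper learning algorithms of \Cref{thm:learn-general,thm:learn-monotone} produces proper learning algorithms with the claimed runtimes: membership queries suffice in general, while random examples suffice in the monotone case via the correlation identity of \Cref{fact:mono-influence}.

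\emph{Main obstacle.} The main difficulty is precise accounting: the naive Markov truncation of Step 1 combined with the black-box bounds of \Cref{thm:TD-upper,thm:TD-upper-monotone} gives exponents of roughly $\triangle^2/\eps^2$ and $\triangle^{3/2}/\eps^2$, which are a $1/\eps$ factor worse than the stated $\triangle^2/\eps$ and $\triangle^{3/2}/\eps$. To close this gap I expect one must redo the underlying potential-function argument directly, with the ``depth'' $d \asymp \triangle/\eps$ of the truncated tree $T^\star_d$ replacing $\log s$ in a more favorable way than the generic size bound allows---intuitively, because $T^\star_d$ is shallow (depth $d$) rather than merely small (size $2^d$), each iteration of $\BuildTopDownDT$ should be able to make progress at a rate governed by $\triangle$ instead of $\log s$, saving the lost $1/\eps$ factor.
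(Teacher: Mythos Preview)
Your proposal has a genuine gap, and the paper's route is different from (and simpler than) the one you sketch.

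First, your Step 2 is not as routine as you suggest. The proofs of \Cref{thm:TD-upper,thm:TD-upper-monotone} apply the OSSS inequality to each restriction $f_\ell$, which requires $f_\ell$ itself to be computed by a size-$s$ tree. If $f$ is merely $\delta$-close to a size-$s$ tree $g$, the restrictions $f_\ell$ need not have small decision tree size (nor be close to small trees), and $\BuildTopDownDT$ splits according to the influences of $f_\ell$, not of $g_\ell$. So the potential argument does not carry over ``without essential change.''

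Second, and more importantly, even granting Step 2 you correctly observe that truncation loses a $1/\eps$ factor in the exponent. The paper does not recover this by refining the truncation; it avoids truncation altogether. The key observation is that the two inequalities driving the upper bound have \emph{average-depth} versions that apply directly to the original tree $T^\star$:
\begin{itemize}
\item[(i)] $\Inf(f) \le \triangle$ always (average sensitivity is at most average depth), and $\Inf(f) \le \sqrt{\triangle}$ in the monotone case (this is the form in which~\cite{OS07} actually prove \Cref{thm:OS});
\item[(ii)] the OSSS inequality in~\cite{OSSS05} actually gives $\max_i \Inf_i(f) \ge \Var(f)/\triangle$, not merely $\Var(f)/\log s$ (\Cref{thm: osss} is the weakened form).
\end{itemize}
With (ii), \Cref{lemma: minimum score additive} becomes $\score(\ell^\star) \ge \eps/((j+1)\triangle)$; with (i), the initial cost $C_0$ is at most $\triangle$ (resp.\ $\sqrt{\triangle}$). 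Rerunning the Phase-2 argument of \Cref{thm:TD-upper-monotone} with $\triangle$ in place of $\log s$ then gives $\TopDownDTsize(f,\eps) \le 2^{O(\triangle^2/\eps)}$ (resp.\ $2^{O(\triangle^{3/2}/\eps)}$), and the learning consequences follow exactly as in \Cref{sec:learn}. No factor of $1/\eps$ enters the depth parameter because one works with $T^\star$ itself, whose average depth is $\triangle$ rather than $\triangle/\eps$.
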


 To our knowledge, these represent the first polynomial-time algorithms for properly learning decision trees of superconstant average depth, $\triangle = \omega(1)$.  Prior to our work, the fastest algorithm ran in time $\poly(n^{\triangle/\eps})$; this algorithm, which uses random examples, follows implicitly from the results of Mehta and Raghavan~\cite{MR02}. 

\end{itemize}

\subsection{Proper learning with polynomial sample and memory complexity} 
\label{sec:memory}

For our final contribution, we revisit the classic algorithm of Ehrenfeucht and Haussler~\cite{EH89}.  As discussed above, this remains the fastest algorithm for properly learning decision trees.  We extend it to give the first uniform-distribution proper algorithm that achieves polynomial sample and memory complexity, while matching its state-of-the-art quasipolynomial runtime (\Cref{thm:our-proper}).

\vspace{5pt} 

\begin{table}[H]
  \captionsetup{width=.9\linewidth}
\begin{adjustwidth}{-2em}{}
\renewcommand{\arraystretch}{1.7}
\centering
\begin{tabular}{|c|c|c|c|c|}
\hline
  Reference  & Running time  & Sample complexity   & Memory complexity &  Proper? \\ \hline
\cite{EH89} & $\poly(n^{\log s},1/\eps)$ & $\poly(n^{\log s},1/\eps)$ & $\poly(n^{\log s},1/\eps)$ & $\checkmark$ \\ [.2em] \hline
 \cite{LMN93} & $ \poly(n^{\log(s/\eps)})$  & $\poly(s,1/\eps)\cdot \log n$ & $\poly(n,s,1/\eps)$ & $\times$ \\ [.2em]  \hline 
 \cite{MR02} & $\poly(n^{\log(s/\eps)})$ & $\poly(s,1/\eps)\cdot \log n$ & $\poly(n^{\log(s/\eps)})$ & $\checkmark$ \\ [.2em] 
 \hline \hline
  {\bf This work} & $\poly(n^{\log s},1/\eps)$  & $\poly(s,1/\eps)\cdot \log n$ & $\poly(n,s,1/\eps)$ & $\checkmark$ \\ [.2em] \hline
\end{tabular}
\caption{Algorithms for learning size-$s$ decision trees from random examples under the uniform distribution}
\label{table:EH-approx}
\end{adjustwidth}
\end{table}
\vspace{-7pt} 

Ehrenfeucht and Haussler had posed (as the first open problem of their paper) the question of achieving polynomial sample complexity.  Such algorithms were subsequently obtained by Linial, Mansour, and Nisan~\cite{LMN93} and Mehta and Raghavan~\cite{MR02}.  Interestingly, these two algorithms are very different from each other and from~\cite{EH89}: the algorithm of~\cite{LMN93}, being Fourier-based, is non-proper, whereas the algorithm of~\cite{MR02}, which uses dynamic programming, has a large memory footprint.  Furthermore, both algorithms have a quasipolynomial dependence on $1/\eps$ in their runtimes, rather than~\cite{EH89}'s polynomial dependence.

This state of affairs raises the question of whether there is a {\sl single} algorithm that achieves ``the best of \cite{EH89}, \cite{LMN93}, and~\cite{MR02}" in each of the four metrics discussed above; see~\Cref{table:EH-approx}.  We give such an algorithm in this work (\Cref{thm:our-proper}).  Our algorithm is a surprisingly simple modification of~\cite{EH89}'s algorithm, but our analysis is more involved.  At a high level, the idea is to terminate~\cite{EH89}'s algorithm early to achieve our improved sample and memory complexity.  However, incorporating this plan with the inherently bottom-up nature of~\cite{EH89}'s algorithm necessitates a delicate error analysis. (In particular,~\cite{EH89}'s algorithm is an Occam algorithm, whereas ours is not.)\footnote{Although our algorithm, like the others in~\Cref{table:EH-approx}, only uses random examples, to our knowledge there are no known membership query algorithms that achieves our guarantees.}\footnote{We note that it is possible to combine the ideas in~\cite{EH89} and~\cite{MR02} to give an algorithm that runs in $\poly(n^{\log(s/\eps)})$ time and has sample and memory complexity $\poly(s,1/\eps)\cdot \log n$ and $\poly(n,s,1/\eps)$ respectively.  We do not provide the details in this paper since our main result (\Cref{thm:our-proper}) achieves strictly better guarantees.} 

We remark that there is an ongoing flurry of research activity on the memory complexity of learning basic concept classes under the uniform distribution, with a specific focus on tradeoffs between memory and sample complexity~\cite{Sha14,SVW16,Raz17,KRT17,MM17,Raz18,MM18,BOGY18,GRT18,GRT19}. 

\section{Discussion and related work}

\subsection{Relationship to practical machine learning heuristics}
\label{sec:ID3} 

Our work is motivated in part by the tremendous popularity and empirical success of top-down decision tree learning heuristics in machine learning practice, such as ID3~\cite{Qui86}, its successor C4.5~\cite{Qui93}, and CART~\cite{Bre17}.  The data mining textbook~\cite{WFHP16} describes C4.5 as ``a landmark decision tree program that is probably the machine learning workhorse most widely used in practice to date".  In a similar vein, quoting Kearns and Mansour~\cite{KM99}, ``In experimental and applied machine learning work, it is hard to exaggerate the influence of top-down heuristics for building a decision tree from labeled sample data  [...]~Dozens of papers describing experiments and applications involving top-down decision tree learning algorithms appear in the machine learning literature each year".

We give a high-level description of how these heuristics work, using the framework of uniform-distribution learning.  As we will soon see, they serve as motivation for the heuristic that we study, $\BuildTopDownDT$ (\Cref{fig:TopDown}).  These heuristics grow a bare tree $T^\circ$ for a function $f : \zo^n \to \zo$ as follows.  Consider the progress measure
\[ \mathcal{H}(T^\circ) \coloneqq \sum_{\ell \in \mathrm{leaves}(T^{\circ})} \Prx_{\bx \sim \zo^n}[ \,\text{$\bx$ reaches $\ell$}\,] \cdot \mathscr{G}(\E[f_{\ell}]), \] 
where $\mathscr{G} : [0,1] \to [0,1]$ is known as the {\sl impurity function}, and encapsulates the splitting criterion of the heuristic.  This carefully chosen function is restricted to be concave, symmetric around $\frac1{2}$, and to satisfy $\mathscr{G}(0) = \mathscr{G}(1) = 0$ and $\mathscr{G}(\frac1{2}) = 1$. For example, $\mathscr{G}$ is the binary entropy function in ID3 and C4.5; CART uses $\mathscr{G}(p) = 4p(1-p)$, known as the {\sl Gini criterion};~\cite{KM99} studies the variant $\mathscr{G}(p) = 2\sqrt{p(1-p)}$.\footnote{The work of Dietterich, Kearns, and Mansour~\cite{DKM96} gives a detailed experimental comparison of various impurity functions.}  Writing $T^{\circ}_{\ell,i}$ to denote $T^{\circ}$ with its leaf $\ell$ replaced with a query to the variable $x_i$, these heuristics, in a single iteration, grow $T^{\circ}$ to $T^{\circ}_{\ell^\star,i^\star}$, where 
\begin{equation} 
\text{$(\ell^\star,i^\star)$ is the leaf-variable pair that maximizes
$\mathcal{H}(T^\circ) - \mathcal{H}(T^{\circ}_{\ell^\star,i^\star})$.} \label{eq:progress}
\end{equation}
We refer to any such top-down heuristic as an {\sl impurity-based} heuristic, and the progress measure $\mathcal{H}(T^\circ) - \mathcal{H}(T^{\circ}_{\ell^\star,i^\star})$ as the {\sl purity gain}.

\pparagraph{Inherent limitations of impurity-based heuristics.} It is easy to see (and has been well known~\cite{Kea96}) that impurity-based heuristics can, in general, fare very badly, in the sense of building a decision tree that is much larger than the optimal decision tree.   For example, consider $f(x) = x_j \oplus x_k$ for $j,k\in [n]$, the parity of two variables.  For such a target function, {\sl regardless of the choice of the impurity function $\mathscr{G}$}, splitting on any of the $n$ variables results in zero purity gain.  This is because $\E[f] = \E[f_{x_i=b}]$ for all $i\in [n]$ and $b \in \zo$.   Therefore, {\sl any} impurity-based heuristic may build a tree of size $\Omega(2^n)$ before achieving any non-trivial error $\eps < \frac1{2}$, whereas the size of the optimal tree of $f$ is only~$4$.  

One could exclude such ``parity-like" examples by considering only {\sl monotone} functions.  Monotonicity is a ubiquitous condition in machine learning since many data sets are naturally monotone in their attributes.  In the case of monotone functions, it can be shown that for {\sl any impurity function $\mathscr{G}$}, the variable split that results in the most progress in the sense of (\ref{eq:progress}), i.e.~the variable $x_i$ that maximizes the purity gain
\[ \mathscr{G}(\E[f]) - \lfrac1{2}(\mathscr{G}(\E[f_{x_i=0}])+ \mathscr{G}(\E[f_{x_i=1}])),\]
is precisely the most influential variable of $f$ (we prove this in~\Cref{sec:monotone-gain}; see~\Cref{prop: impurity gain is influence}).   In other words, in the case of monotone functions, $\BuildTopDownDT$ closely models impurity-based heuristics.  The works of Fiat and Pechyony~\cite{FP04} and Lee~\cite{Lee09} (recall our discussion following~\Cref{thm:TD-lower-monotone-any}) were explicitly motivated by this observation, as are our results on monotone functions (\Cref{thm:TD-upper-monotone,thm:TD-lower-monotone,thm:TD-lower-monotone-any,thm:learn-monotone}). 

 As we will show, our monotone lower bounds for $\BuildTopDownDT$ actually apply to all impurity-based heuristics (\Cref{thm:TD-lower-monotone-any}), regardless of the choice of the impurity function $\mathscr{G}$ (hence including ID3, C4.5, and CART).\footnote{Different impurity functions $\mathscr{G}$ lead to different {\sl orderings} of leaves to split, and hence result in different trees.}  Since one could argue that real-world data sets are unlikely to be ``parity-like", we view our monotone lower bounds as providing more robust (albeit still only theoretical) evidence of the limitations and potential shortcomings of the impurity-based top-down heuristics used in practice. 
\bigskip

\noindent {\bf Top-down versus bottom-up: from practice to theory?}  We find it especially intriguing that the algorithm of Ehrenfeucht and Haussler~\cite{EH89}---which as discussed, remains the fastest algorithm for properly learning decision trees with provable runtime guarantees---builds its hypothesis tree {\sl bottom up}, in exactly the {\sl opposite} order from the top-down heuristics used in practice.   It is  natural to ask if top-down heuristics can serve as inspiration for the design and analyses of fundamentally different algorithms for properly learning decision trees.

Our algorithmic upper bounds for $\BuildTopDownDT$ (\Cref{thm:learn-general,thm:learn-monotone}) provide affirmative answers, and as discussed above, these new algorithms even have certain qualitative advantages over~\cite{EH89}.   Our lower bounds (\Cref{thm:TD-lower,thm:TD-lower-monotone}), on the other hand, establish their inherent limitations.  They imply that  $\BuildTopDownDT$ is provably {\sl not} a polynomial-time algorithm for properly learning decision trees using membership queries, or a polynomial-time algorithm for properly learning monotone decision trees using random examples.  Either of these results would constitute a major advance in learning theory, and $\BuildTopDownDT$---and other impurity-based variants of it---had been a natural candidate for obtaining them. Indeed, the results of~\cite{Lee09} were explicitly motivated by the goal of showing that $\BuildTopDownDT$ {\sl is} a polynomial-time algorithm for properly learning monotone decision trees.  This is now ruled out by~\Cref{thm:TD-lower-monotone,thm:TD-lower-monotone-any}.\footnote{Blum et al.~\cite{BFJKMR94} gave an information-theoretic lower bound showing that no ``statistical query" algorithm can learn decision trees in polynomial time.  However, this lower bound does not apply when membership queries are allowed or when the function is assumed to be monotone.}

\subsection{Related work} 
\label{sec:related} 

Fiat and Pechyony~\cite{FP04} considered linear threshold functions and read-once DNF formulas, and showed that $\BuildTopDownDT$, when run on such functions, returns a decision tree of optimal size computing them exactly. (Stated in the notation of~\Cref{thm:TD-upper}, $\TopDownDTsize(f) = s$ for such functions.)  

Kearns and Mansour~\cite{KM99} (see also~\cite{Kea96,DKM96}) showed that impurity-based heuristics are {\sl boosting algorithms}, where one views the functions labeling internal nodes of the tree (single variables in our case) as weak learners.  At a high level, the proofs of our upper bounds (\Cref{thm:TD-upper,thm:TD-upper-monotone}) are similar in spirit to their analysis, in the sense that they are all incremental in nature, showing that each split contributes to the accuracy of the decision tree hypothesis.  However, our results and analyses are incomparable---for example,~\cite{KM99} does not relate the size of the resulting hypothesis to the size of the optimal decision tree;~\cite{KM99}'s analysis assumes the existence of weak learners for all filtered-and-rebalanced versions of the target distribution, whereas we carry out the entirety of our analyses with respect to the uniform distribution.\footnote{Indeed,~\cite{KM99}'s results concern impurity-based heuristics, and as discussed above, statements like~\Cref{thm:TD-upper} that apply to all functions cannot hold for such heuristics because of parity-like functions.}  

Recent work of Brutzkus, Daniely, and Malach~\cite{BDM19} studies a variant of ID3 proposed by~\cite{KM99}, focusing on learning conjunctions and read-once DNF formulas under product distributions.  They provide theoretical and empirical evidence showing that for such functions, the size-$t$ tree grown by~\cite{KM99}'s variant of ID3 achieves optimal or near-optimal error among all trees of size~$t$.  Concurrent work by the same authors~\cite{BDM19-smooth} shows that ID3 efficiently learns $(\log n)$-juntas in the setting of smoothed analysis.

\section{Preliminaries} 

Throughout this paper, we use bold font (e.g.~$\bx$ and $\bS$) to denote random variables; all probabilities and expectations are with respect to the uniform distribution unless otherwise stated.  

For any decision tree $T$, we say the \textit{size} of $T$ is the number of leaves in $T$, and the \textit{depth} of $T$ is length of the longest path between the root and a leaf. If a tree has size $1$, then it contains a single leaf, computes either the constant $+1$ or constant $-1$ function, and has depth $0$. For a function $f:\{0,1\}^n \rightarrow \{\pm 1\}$, the \textit{optimal decision tree size} of $f$ is the smallest $s$ for which there exists a decision tree of size $s$ that exactly computes $f$, and we write $\size(f)$ to denote this quantity. If $T$ is a decision tree that computes $f$, then we will often use $T$ interchangeably with $f$.

Choose any $f,g:\{0,1\}^n \rightarrow \{\pm 1\}$. Then, the \textit{error} is defined as
\begin{align*}
    \text{error}(f,g) = \underset{\boldsymbol{x} \sim \{0,1\}^n}{\text{Pr}}[f(\boldsymbol{x}) \neq  g(\boldsymbol{x})].
\end{align*}
We say that $f$ is an $\varepsilon$-approximation of $g$ if $\text{error}(f,g) \leq \varepsilon$. If $T^\circ$ is a bare tree, then $\text{error}(T^\circ, f)$ is shorthand for $\text{error}(T, f)$ where $T$ is the $f$-completion of $T^\circ$. We also use the following shorthand.
\begin{align*}
    \text{error}(f, \pm 1) = \min(\text{error}(f, - 1),\text{error}(f,  1)).
\end{align*}
The \textit{variance} of $f:\{0,1\}^n \rightarrow \{\pm 1\}$, denoted $\text{Var}(f)$, is
\begin{align*}
    \text{Var}(f) = 4 \cdot \Pr[f(\boldsymbol{x}) = -1] \cdot \Pr[f(\boldsymbol{x}) = 1].
\end{align*}
The \textit{total influence} of $f$, denoted $\text{Inf}(f)$, is
\begin{align*}
    \text{Inf}(f) = \sum_{i=1}^n \text{Inf}_i(f).
\end{align*}
It is easy to see that for any decision tree $T:\{0,1\}^n \rightarrow \{\pm 1\}$,
\begin{align*}
    \text{error}(T, \pm 1) \leq \text{Inf}(T)
\end{align*}
and
\begin{equation*}
    \frac{\text{Var}(T)}{2} \leq \text{error}(T, \pm 1) \leq \text{Var}(T)
\end{equation*}
always hold.

\newcommand{\topdown}{\BuildTopDownDT}

\section{Upper bounds on $\TopDownDTsize$: Proofs of \Cref{thm:TD-upper,thm:TD-upper-monotone}} 

Recall that \topdown{}$(f, \varepsilon)$ continually grows a bare tree, $T^\circ$, until the $f$-completion of $T^\circ$ is an $\varepsilon$-approximation of $f$. At a high level, the proofs of our upper bounds on \TopDownDTsize{} proceed as follows.
\begin{enumerate}[align=left]
    \item[\Cref{subsection: cost}] We define a progress metric, the ``cost" of $T^\circ$, which upper bounds the error of the $f$-completion of $T^\circ$ with respect to $f$. Hence, when the ``cost" drops below $\varepsilon$, \topdown{} can terminate. We show that whenever $\topdown{}$ grows $T^\circ$, the ``cost" of $T^\circ$ decreases by exactly the score of the leaf selected.
    \item[\Cref{subsection: lower bound on score}] We lower bound the score of the leaf that $\topdown{}$ selects.
    \item[\Cref{subsection: upper bound proofs}] We put the above together to prove upper bounds on \TopDownDTsize{}. At each step, the ``cost" of $T^\circ$ must decrease by at least the lower bounds in \Cref{subsection: lower bound on score}, which allows us to upper bound the number of steps until the ``cost" falls below $\varepsilon$. This is sufficient since the size of the tree that $\topdown{}$ produces is exactly one more than the number of steps it takes.
\end{enumerate}

\subsection{Definition and properties of ``Cost"}
\label{subsection: cost}

\begin{definition}[Cost of a bare tree]
    Let $f:\{0,1\}^n \rightarrow \{\pm 1\}$ be a function and $T^\circ$ be a bare tree. Then the \emph{cost of $T^\circ$ relative to $f$} is defined as
    \begin{align*}
        \cost_f(T^\circ) = \sum_{\text{leaf }\ell \in T^\circ} 2^{-|\ell|}\cdot \Inf(f_\ell).
    \end{align*}
\end{definition}

This cost function is useful to track because it naturally decreases during \textsc{BuildTopDownDT} and upper bounds the error of the completion.
\begin{lemma}[Properties of cost of a bare tree]
\label{lemma: cost properties}
    For any $f:\{0,1\}^n \rightarrow \{\pm 1\}$ and bare tree $T^\circ$, the following hold: 
    \begin{enumerate}
        \item \label{cost prop 1} $\error(T^\circ, f) \leq \cost_f(T^\circ)$.
        \item \label{cost prop 2} Choose any leaf $\ell$ of $T^\circ$ and variable $x_i$. Let $(T^\circ)'$ be the bare tree that results from replacing $\ell$ in $T^\circ$ with a query to $x_i$. Then,
        \begin{align*}
            \cost_f((T^\circ)') = \cost_f(T^\circ) - 2^{-|\ell|} \cdot \Inf_i(f_\ell).
        \end{align*}
    \end{enumerate}
\end{lemma}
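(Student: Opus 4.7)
The plan is to handle the two parts separately, both via standard identities about influence.

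For \Cref{cost prop 1}, I would start by decomposing the error over leaves. For any bare tree $T^\circ$ with $f$-completion $T$, a uniform random input $\bx$ reaches leaf $\ell$ with probability $2^{-|\ell|}$, and conditioned on reaching $\ell$, the contribution to the error is exactly $\error(f_\ell,\pm 1)$ (since the $f$-completion labels $\ell$ with the majority value of $f_\ell$). Thus
\[
\error(T^\circ, f) \;=\; \sum_{\ell \in T^\circ} 2^{-|\ell|}\cdot \error(f_\ell, \pm 1).
\]
Now I would invoke the preliminary inequality $\error(g,\pm 1) \le \Inf(g)$ (stated at the end of Section 4) on each $f_\ell$, which immediately gives $\error(T^\circ,f) \le \cost_f(T^\circ)$.

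For \Cref{cost prop 2}, the replacement only affects the contribution of $\ell$ in the sum defining $\cost_f$. Writing $\ell_0$ and $\ell_1$ for the new leaves at depth $|\ell|+1$ with restrictions $f_{\ell_0} = (f_\ell)_{x_i=0}$ and $f_{\ell_1}=(f_\ell)_{x_i=1}$, the difference $\cost_f((T^\circ)') - \cost_f(T^\circ)$ collapses to
\[
2^{-(|\ell|+1)}\bigl(\Inf(f_{\ell_0})+\Inf(f_{\ell_1})\bigr) \;-\; 2^{-|\ell|}\,\Inf(f_\ell).
\]
So the lemma reduces to proving the identity
\[
\Inf(f_\ell) \;=\; \Inf_i(f_\ell) \;+\; \tfrac{1}{2}\bigl(\Inf(f_{\ell_0})+\Inf(f_{\ell_1})\bigr).
\]
I would establish this by splitting $\Inf(f_\ell) = \sum_j \Inf_j(f_\ell)$ into the $j=i$ term and the $j \ne i$ terms. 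For $j\ne i$, conditioning on the value of $x_i$ in the definition of $\Inf_j$ gives $\Inf_j(f_\ell) = \tfrac12(\Inf_j(f_{\ell_0})+\Inf_j(f_{\ell_1}))$, since flipping $x_j$ does not touch $x_i$. For $j=i$, the restrictions $f_{\ell_0},f_{\ell_1}$ do not depend on $x_i$, so $\Inf_i(f_{\ell_0}) = \Inf_i(f_{\ell_1}) = 0$. Summing and rearranging yields the identity, and substituting back completes the proof.

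No step here looks like a real obstacle --- both claims are essentially folklore facts about influence, and the only thing that requires care is book-keeping the $2^{-|\ell|}$ weights and confirming that the restriction-and-conditioning argument for the influence decomposition handles the $j=i$ versus $j \ne i$ cases correctly. I expect the whole proof to fit in under a page.
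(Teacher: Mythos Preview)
Your proposal is correct and follows essentially the same approach as the paper: both parts decompose over leaves, invoke $\error(g,\pm 1)\le \Inf(g)$ for part~(\ref{cost prop 1}), and establish the identity $\Inf(f_\ell)-\Inf_i(f_\ell)=\tfrac12(\Inf(f_{\ell_0})+\Inf(f_{\ell_1}))$ by splitting into $j=i$ and $j\ne i$ terms for part~(\ref{cost prop 2}).
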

At each step, $\textsc{BuildTopDownDT}$ splits the leaf with the largest score, resulting in the cost decreasing by exactly the score selected. Once the cost decreases to below $\varepsilon$, we know the completion of $T^\circ$ is an $\varepsilon$-approximation of $f$, meaning \textsc{BuildTopDownDT} can terminate.
\begin{proof}
    The proof of (\ref{cost prop 1}) is a simple application of the fact that $\error(g, \pm 1) \leq \Inf(g)$ for any boolean function $g$:
    \begin{align*}
         \text{error}(T^\circ, f)  &= \underset{\boldsymbol{x} \sim \{0,1\}^n}{\text{Pr}}[(\text{Completion of }T^\circ)(\boldsymbol{x}) \neq f(\boldsymbol{x})] \\
        &= \sum_{\text{leaf }\ell \in T^\circ} \underset{\boldsymbol{x} \sim \{0,1\}^n}{\text{Pr}}[\boldsymbol{x}\text{ reaches }\ell] \cdot \text{error}(f_\ell, , \pm 1) \\
        &\leq \sum_{\text{leaf }\ell \in T^\circ} 2^{-|\ell|} \cdot \text{Inf}_i(f_\ell) = \text{cost}_f(T^\circ).
    \end{align*}
   
    The proof of (\ref{cost prop 2}) follows from the fact that if $T$ is a tree with $x_i$ at the root, $T_0$ as its $0$-subtree, and $T_1$ as its $1$-subtree, then $\text{Inf}(T) - \text{Inf}_i(T) = \frac{1}{2}(\text{Inf}(T_0) + \text{Inf}(T_1))$. This fact is true because
     \begin{align*}
        \text{Inf}(T) - \text{Inf}_i(T) &= \sum_{j \neq i} \text{Inf}_j(T) \\
        &= \sum_{j \neq i} \frac{1}{2}\text{Inf}_j(T_0) + \frac{1}{2}\text{Inf}_j(T_1) \\
        &= \frac{1}{2} \left( \sum_{j=1}^n \text{Inf}_j(T_0) + \sum_{j =1}^n \Inf_j(T_1)\right) \\
        &=\frac{1}{2}(\text{Inf}(T_0) + \text{Inf}(T_1)). \qedhere
    \end{align*}
\end{proof}

\subsection{Lower bounds on the score of the leaf \topdown{} selects}
\label{subsection: lower bound on score}

We give two different lower bounds. These lower bounds are incomparable, so when proving \Cref{thm:TD-upper,thm:TD-upper-monotone}, we use whichever is better. Both of these lower bounds rely on a powerful inequality from the analysis of boolean functions due to O'Donnell, Saks, Schramm, and Servedio~\cite{OSSS05}, which we restate in the form most convenient for us.
\begin{theorem}
[Corollary of Theorem 1.1 from \cite{OSSS05}]
    \label{thm: osss}
    Let $f$ be a size-$s$ decision tree. Then,
    \begin{align*}
        \max_i \big(\Inf_i(f)\big) \geq \frac{\Var(f)}{\log s}.
    \end{align*}
\end{theorem}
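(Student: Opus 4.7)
The plan is to derive the stated inequality from Theorem~1.1 of~\cite{OSSS05}, which asserts that for any boolean $f$ computed by a decision tree $T$,
$$\Var(f) \;\leq\; \sum_{i=1}^n \delta_i(T) \cdot \Inf_i(f), \qquad \delta_i(T) \coloneqq \Prx_{\bx}[\text{$T$ queries $x_i$ on input $\bx$}].$$
First I would pull the maximum out of the sum on the right-hand side,
$$\Var(f) \;\leq\; \bigl(\max_i \Inf_i(f)\bigr) \cdot \sum_{i=1}^n \delta_i(T),$$
reducing the task to showing $\sum_i \delta_i(T) \leq \log s$ whenever $T$ has size $s$.

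The key observation is that $\sum_i \delta_i(T)$ is exactly the expected number of variables queried by $T$ on a uniform random input, i.e.\ the average depth $\triangle(T) = \sum_{\ell \in \mathrm{leaves}(T)} 2^{-|\ell|}\cdot |\ell|$. I would then apply an entropy argument: the weights $\{2^{-|\ell|}\}_{\ell}$ form a probability distribution on the (at most) $s$ leaves of $T$ (a uniform random input reaches $\ell$ with probability $2^{-|\ell|}$), and $\triangle(T) = \sum_\ell 2^{-|\ell|}\cdot \log(1/2^{-|\ell|})$ is precisely the Shannon entropy of this distribution. By the maximum-entropy principle, the entropy of any distribution on $s$ atoms is at most $\log s$, so $\triangle(T) \leq \log s$.

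Chaining the three inequalities yields $\Var(f) \leq (\max_i \Inf_i(f))\cdot \log s$, which rearranges to the claim. There is no real obstacle beyond invoking OSSS itself; the reduction to the stated corollary is a short calculation, and if one wanted to avoid explicit mention of Shannon entropy, the last step could equivalently be phrased as Kraft's inequality $\sum_\ell 2^{-|\ell|} = 1$ combined with the concavity of $\log$ (equivalently, Jensen's inequality) applied to $\triangle(T) = \sum_\ell 2^{-|\ell|} \cdot |\ell|$.
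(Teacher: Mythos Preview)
The paper does not supply its own proof of this statement; it simply records it as a corollary of Theorem~1.1 of~\cite{OSSS05} and uses it as a black box. Your derivation is correct and is exactly the standard way one extracts this corollary: apply the OSSS inequality $\Var(f)\le \sum_i \delta_i(T)\,\Inf_i(f)$, bound the sum by $(\max_i\Inf_i(f))\cdot\sum_i\delta_i(T)$, identify $\sum_i\delta_i(T)$ with the average depth $\triangle(T)$, and then use the entropy/Jensen argument to get $\triangle(T)\le \log s$ for a size-$s$ tree.
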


We prove our first lower bound on the score of the leaf selected.
\begin{lemma}
    \label{lemma: minimum score additive}
    Let $f$ be a size $s$ decision tree. At step $j$, $\topdown{}(f, \varepsilon)$ selects a leaf, $\ell^*$ with score at least
    \begin{align*}
        \score(\ell^*) \geq \frac{\varepsilon}{(j+1) \log(s)}.
    \end{align*}
\end{lemma}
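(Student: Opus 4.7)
The plan is to combine an averaging argument over the leaves of $T^\circ$ with the OSSS inequality (\Cref{thm: osss}). The intuition is that since the $f$-completion of $T^\circ$ still has error exceeding $\varepsilon$, some leaf $\ell'$ must contribute a reasonable share of this error, and at such a leaf OSSS forces the existence of a variable influential enough to lower-bound $\score(\ell')$, and hence the score of the maximum-score leaf $\ell^*$.

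The first step is to pin down the leaf count: at step $j$ the bare tree $T^\circ$ has exactly $j+1$ leaves, since the empty tree has one leaf and each prior iteration of the while loop replaced a single leaf with an internal node having two fresh leaves (this is what produces the factor $(j+1)$, rather than $j$, in the denominator of the claimed bound). Since \topdown{} has not yet terminated at step $j$, the $f$-completion of $T^\circ$ fails to be an $\varepsilon$-approximation of $f$, so $\error(T^\circ, f) > \varepsilon$. Expanding the error exactly as in the proof of \Cref{lemma: cost properties}(1) gives
\[
\sum_{\text{leaf } \ell \in T^\circ} 2^{-|\ell|}\, \error(f_\ell, \pm 1) \;>\; \varepsilon,
\]
so by averaging over the $j+1$ leaves there is some $\ell'$ with $2^{-|\ell'|}\, \error(f_{\ell'}, \pm 1) \ge \varepsilon/(j+1)$.

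The second step is to apply OSSS to $f_{\ell'}$. Because $f_{\ell'}$ is a restriction of $f$, it is itself computed by a decision tree of size at most $s$, and \Cref{thm: osss} yields $\max_i \Inf_i(f_{\ell'}) \ge \Var(f_{\ell'})/\log s$. Combining with the preliminary inequality $\Var(g) \ge \error(g, \pm 1)$ gives $\max_i \Inf_i(f_{\ell'}) \ge \error(f_{\ell'}, \pm 1)/\log s$, and multiplying by $2^{-|\ell'|}$ yields
\[
\score(\ell') \;=\; 2^{-|\ell'|} \max_i \Inf_i(f_{\ell'}) \;\ge\; \frac{\varepsilon}{(j+1)\log s}.
\]
Since \topdown{} selects the leaf of maximum score, $\score(\ell^*) \ge \score(\ell')$, and the bound follows.

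The proof is essentially a direct composition of OSSS, the error/variance/influence chain of inequalities, and a one-line averaging step, so I do not anticipate any real obstacle. The only subtlety is ensuring the correct bookkeeping of the leaf count at step $j$, which is what produces the $(j+1)$ in the denominator; using total influence and the cost function as the averaging quantity instead of $\error(f_\ell,\pm 1)$ would give a weaker bound, so it is important to average using the error and then invoke $\Var(f_{\ell'}) \ge \error(f_{\ell'}, \pm 1)$ only at the single leaf $\ell'$ where OSSS is applied.
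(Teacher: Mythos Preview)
Your proposal is correct and follows essentially the same approach as the paper's own proof: expand the error as a sum over leaves, average to find a leaf with weighted error at least $\varepsilon/(j+1)$, pass from error to variance, and apply OSSS (\Cref{thm: osss}) using that $f_{\ell'}$ has decision tree size at most $s$. The sequence of steps and the key ingredients are identical.
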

\begin{proof}
   If \textsc{BuildTopDownDT} has not terminated at step $j$, then, the completion of $T^\circ$ is not an $\varepsilon$-approximation of $f$. Equivalently,
    \begin{align*}
         \sum_{\text{leaf }\ell \in T^\circ} 2^{-|\ell|}\cdot \text{error}(f_\ell, \pm 1) > \varepsilon
    \end{align*}
    At step $j$, there are exactly $j+1$ leaves in $T^\circ$, so there must be at least one leaf, $\ell$, where
    \begin{align*}
        2^{-|\ell|}\cdot \error(f_\ell, \pm 1) > \frac{\varepsilon}{j+1}.
    \end{align*}
    Since $\Var(f_\ell) \geq \error(f_\ell, \pm 1) $, we also know
    \begin{align*}
        2^{-|\ell|}\cdot \Var(f_\ell) > \frac{\varepsilon}{j+1}.
    \end{align*}
    By~\Cref{thm: osss}, we know that there is some variable $x_i$ such that $\Inf_i(f_\ell) \geq \Var(f_\ell)/\log(\size(f_\ell))$. The optimal size of any restriction of $f$ is certainly at most the optimal size of $f$ itself, so
    \begin{align*}
        2^{-|\ell|}\cdot \Inf_i(f_\ell) > \frac{\varepsilon}{(j+1)\log(s)}.
    \end{align*}
    Since \textsc{BuildTopDownDT} picks a leaf with maximum score, and $\ell$ has a score at least $\eps/{(j+1)\log(s)}$, it must pick a leaf with at least that score.
\end{proof}

A standard fact from the analysis of boolean functions gives a $\log s$ upper bound on the total influence of a size-$s$ decision tree (see e.g.~\cite{OS07}). In order to prove a second lower bound on the score of the leaf that \topdown{} selects, we will need a refinement of this bound that takes into account the variance of the function.  The following lemma is a slight variant of a related (though incomparable) result in~\cite{BT15}, which upper bounds the total influence of an $s$-term DNF formula by $2\mu \log(s/\mu)$, where $\mu \coloneqq \Pr[f(\bx) = 1]$.  

\begin{lemma}[Total influence of size-$s$ DTs]
\label{lem:total-inf-DTs}
Let $f : \zo^n \to \{ \pm 1\}$ be computed by a size-$s$ decision tree $T$.  Then 
\[ \Inf(f) \le \Var(f) \log(4s/\Var(f)). \] 
\end{lemma}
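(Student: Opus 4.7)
The plan is to first establish a ``DNF-style'' influence bound
\[ \Inf(f) \le 2p \log(s/p), \qquad p \coloneqq \Pr[f(\bx) = 1], \]
and then convert it to the desired variance form by applying the same bound to $\neg f$ and taking the better of the two. The key structural observation is that the size-$s$ decision tree $T$ computing $f$ gives rise to a natural \emph{disjoint} DNF representation: for each positive leaf $\ell \in L_+$ (those labeled $+1$), let $\phi_\ell$ denote the conjunction of literals along the root-to-$\ell$ path. Then $f(x) = 1$ iff $x$ satisfies some $\phi_\ell$; the terms $\{\phi_\ell\}_{\ell \in L_+}$ partition the set of positive inputs; and the width of $\phi_\ell$ equals $|\ell|$, with $\Pr[\bx \text{ satisfies } \phi_\ell] = 2^{-|\ell|}$.

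For the influence bound, I would start from the symmetric expression $\Inf_i(f) = 2\Pr[\bx \in A_+,\ \bx^{\oplus i} \in A_-]$ (where $A_\pm$ is the $\pm$-preimage of $f$), use the partition of $A_+$ to rewrite it as $2\sum_{\ell \in L_+} \Pr[\bx \in \phi_\ell,\ \bx^{\oplus i} \in A_-]$, and then use the containment $A_- \subseteq \phi_\ell^c$ to upper bound each summand by $\Pr[\bx \in \phi_\ell,\ \bx^{\oplus i} \notin \phi_\ell]$. Since flipping $x_i$ takes $\bx$ out of $\phi_\ell$ precisely when $x_i$ appears in $\phi_\ell$, the latter probability equals $2^{-|\ell|} \cdot \mathbf{1}\{x_i \in \phi_\ell\}$. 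Summing over $i$ then gives
\[ \Inf(f) \le 2 \sum_{\ell \in L_+} 2^{-|\ell|} \cdot |\ell| \;=\; 2 \sum_{\ell \in L_+} p_\ell \log(1/p_\ell), \]
writing $p_\ell \coloneqq 2^{-|\ell|}$. A standard Jensen/entropy calculation on the conditional distribution $\{p_\ell/p\}_{\ell \in L_+}$, whose entropy is at most $\log|L_+| \le \log s$, upper bounds this by $p\log(|L_+|/p) \le p\log(s/p)$, establishing the DNF-style bound.

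Repeating the argument on $\neg f$ (which has the same total influence and whose disjoint DNF representation has $|L_-| \le s$ terms) yields $\Inf(f) \le 2(1-p)\log(s/(1-p))$. Taking the better of the two gives $\Inf(f) \le 2m\log(s/m)$ where $m \coloneqq \min(p, 1-p) \le 1/2$. Finally, since $\Var(f) = 4m(1-m) \ge 2m$ (using $1-m \ge 1/2$) and $4s/\Var(f) = s/(m(1-m)) \ge s/m$, we obtain
\[ \Var(f) \log\!\big(4s/\Var(f)\big) \;\ge\; 2m\log(s/m) \;\ge\; \Inf(f), \]
which is the claimed inequality.

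There is no single ``hard step'' here: the content is (i) observing that a size-$s$ decision tree yields a disjoint DNF of size $\le s$ for both $f$ and $\neg f$, and (ii) the routine Jensen/entropy manipulation inside the proof of the DNF-style bound. The main point requiring care is the factor-of-$2$ bound on $\Inf_i(f)$ in the first step, whose justification relies on the disjointness of the terms $\phi_\ell$; everything else reduces to bookkeeping and elementary inequalities in $m$ and $\Var(f)$.
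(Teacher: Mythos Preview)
Your proposal is correct and follows essentially the same route as the paper: both arguments reach the intermediate bound $\Inf(f)\le 2\sum_{1\text{-leaves }\ell}2^{-|\ell|}|\ell|$ (the paper via $\sens_f(x)\le|\ell|$, you via the equivalent per-coordinate DNF count), then use concavity of $t\mapsto t\log(1/t)$ to obtain $\Inf(f)\le 2m\log(s/m)$ for $m=\min(p,1-p)$, and finally convert to the variance form via $\Var(f)=4m(1-m)\ge 2m$ and $4s/\Var(f)\ge s/m$. The only cosmetic difference is that the paper assumes $\mu\le\frac12$ up front rather than applying the bound to both $f$ and $\neg f$ and taking the minimum.
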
 

\begin{proof} 
We may assume without loss of generality that $\mu \coloneqq \Pr[f(\bx) = 1] \le \frac1{2}$, since $\Inf(f) = \Inf(\neg f)$ and if $f$ is a size-$s$ decision tree then so is its negation $\neg f$. Since 
\begin{align*} 
\Inf(f) &= \Ex_{\bx \sim\zo^n}[ \sens_f(\bx)]   \tag*{(where $\sens_f(x) \coloneqq |\{i \in [n] \colon f(x)\ne f(x^{\oplus i})\}|$)} \\
&= 2\cdot \Ex\big[\sens_f(\bx)\Ind[f(\bx)=1]\big]  \\
&\le 2 \sum_{\text{$1$-leaves $\ell \in T$}} 2^{-|\ell|} \cdot |\ell| \tag*{($\sens_f(x) \le |\ell|$ for every $x$ that reaches $\ell$)} \\
&\le 2\mu \log(s/\mu)  \tag*{(Concavity of $t \mapsto t\log(1/t)$, and $\mathrm{size}(T)\le s$)} \\
&\le \Var(f) \log(4s/\Var(f)) \tag*{($\Var(f) = 4\mu(1-\mu)$, and our assumption that $\mu \le \frac1{2}$)}, 
\end{align*}  
the lemma follows.
\end{proof} 

We now provide a second lower bound on the score of the leaf \topdown{} selects. The lower bound provided below in~\Cref{lemma: minimum score multiplicative} is better than the bound provided by~\Cref{lemma: minimum score additive} when $\text{cost}_f(T^\circ)$ is large.

\begin{lemma}
    \label{lemma: minimum score multiplicative}
    Let $f$ be a size $s$ decision tree. Suppose, at step $j$, that $\topdown{}(f, \varepsilon)$ has already constructed the bare tree $T^\circ$ and that $\text{cost}_f(T^\circ) \geq \varepsilon \log(4s/\eps)$ Then, the next leaf, $\ell^*$, that \topdown{} picks has score at least
    \begin{align*}
        \score(\ell^*) \geq \frac{\cost_f(T^\circ)}{(j+1)\log(4s/\eps)\log(s)}.
    \end{align*}
\end{lemma}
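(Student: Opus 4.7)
The plan is to combine the two tools already at our disposal: the OSSS-style bound from \Cref{thm: osss}, which converts variance into a lower bound on the maximum single-coordinate influence, and \Cref{lem:total-inf-DTs}, which converts total influence into a bound in terms of variance. Applied to each leaf $\ell$ of $T^\circ$ (each $f_\ell$ being a decision tree of size at most $s$), chaining these inequalities yields
\[ \max_i \Inf_i(f_\ell) \;\geq\; \frac{\Var(f_\ell)}{\log s} \;\geq\; \frac{\Inf(f_\ell)}{\log s \cdot \log(4s/\Var(f_\ell))}. \]
If $\log(4s/\Var(f_\ell))$ were always bounded by $\log(4s/\eps)$, we would be done by averaging $2^{-|\ell|}\Inf(f_\ell)$ over the $j+1$ leaves of $T^\circ$. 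The obstacle is exactly that this factor is unbounded when $\Var(f_\ell)$ is very small, and the role of the hypothesis $\cost_f(T^\circ)\geq \eps\log(4s/\eps)$ is to let us throw away the leaves on which this happens.

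First, I would isolate the low-variance contribution. The function $V\mapsto V\log(4s/V)$ is increasing on $(0,1]$ (its derivative is $\log(4s/V)-1/\ln 2$, which is positive for $V\leq 4s/e$, and $4s/e\geq 1$ for $s\geq 1$). Combined with \Cref{lem:total-inf-DTs} applied to each $f_\ell$, this gives $\Inf(f_\ell) < \eps\log(4s/\eps)$ whenever $\Var(f_\ell) < \eps$, and hence
\[ \sum_{\ell:\Var(f_\ell)<\eps} 2^{-|\ell|}\Inf(f_\ell) \;<\; \eps\log(4s/\eps) \sum_{\ell} 2^{-|\ell|} \;\leq\; \eps\log(4s/\eps). \]
Using the hypothesis on $\cost_f(T^\circ)$, the remaining high-variance leaves carry the bulk of the cost: $\sum_{\ell:\Var(f_\ell)\geq\eps}2^{-|\ell|}\Inf(f_\ell) \geq \cost_f(T^\circ)-\eps\log(4s/\eps)$, which is a constant fraction of $\cost_f(T^\circ)$ (strengthen the hypothesis by a factor of $2$ if a cleaner constant is desired; the statement of the lemma is tight up to such a factor).

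Next, I would apply averaging among the at most $j+1$ leaves with $\Var(f_\ell)\geq \eps$ to find some leaf $\ell'$ with
\[ 2^{-|\ell'|}\Inf(f_{\ell'}) \;\geq\; \frac{\cost_f(T^\circ)-\eps\log(4s/\eps)}{j+1} \;=\; \Omega\!\left(\frac{\cost_f(T^\circ)}{j+1}\right). \]
Because $\Var(f_{\ell'})\geq\eps$, the bad factor satisfies $\log(4s/\Var(f_{\ell'}))\leq\log(4s/\eps)$, and the chained inequality above becomes
\[ \score(\ell') \;=\; 2^{-|\ell'|}\max_i \Inf_i(f_{\ell'}) \;\geq\; \frac{2^{-|\ell'|}\Inf(f_{\ell'})}{\log s \cdot \log(4s/\eps)} \;\geq\; \frac{\cost_f(T^\circ)}{(j+1)\log(4s/\eps)\log s}. \]
Since $\topdown$ selects the leaf $\ell^*$ of maximum score, $\score(\ell^*)\geq\score(\ell')$, finishing the proof.

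The main obstacle is precisely the variance-dependent factor $\log(4s/\Var(f_\ell))$ coming from \Cref{lem:total-inf-DTs}; the entire proof organizes itself around neutralizing it. The decomposition of leaves at the threshold $\Var(f_\ell)=\eps$, combined with the monotonicity of $V\log(4s/V)$, is what lets the hypothesis $\cost_f(T^\circ)\geq\eps\log(4s/\eps)$ absorb the bad leaves and leave us with a clean averaging argument on the rest.
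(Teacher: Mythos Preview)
Your argument is correct up to the constant you flag, but it takes a genuinely different route from the paper. The paper packages \Cref{lem:total-inf-DTs} as $\Inf(f_\ell)\le h_s(\Var(f_\ell))$ with $h_s(t)=t\log(4s/t)$, observes that $h_s$ is increasing and concave so that $h_s^{-1}$ is convex with $h_s^{-1}(0)=0$, and applies Jensen directly to the leaf weights $2^{-|\ell|}$:
\[ \sum_\ell 2^{-|\ell|}\, h_s^{-1}(\Inf(f_\ell)) \;\ge\; h_s^{-1}\big(\cost_f(T^\circ)\big). \]
The secant-line bound $h_s^{-1}(t)\ge t/\log(4s/\eps)$ for $t\ge \eps\log(4s/\eps)$ (which is just $h_s(\eps)=\eps\log(4s/\eps)$ plus convexity through the origin) then yields a leaf with $2^{-|\ell|}\Var(f_\ell)\ge \cost_f(T^\circ)/((j{+}1)\log(4s/\eps))$, after which \Cref{thm: osss} finishes with the exact constant in the statement. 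Your threshold decomposition at $\Var(f_\ell)=\eps$ is a more elementary substitute for this Jensen step: you handle the low-variance leaves pointwise (where the factor $\log(4s/\Var(f_\ell))$ blows up) and average only over the rest, at the price of the subtraction $\cost_f(T^\circ)-\eps\log(4s/\eps)$ you noted. So as written you prove the bound with the numerator replaced by that difference, not the lemma exactly as stated; this is harmless for the downstream \Cref{thm:TD-upper} (Phase~1 only runs while $\cost_f(T^\circ)>\eps\log(4s/\eps)$, and the final exponent is $O(\cdot)$), but the paper's convexity argument avoids the loss altogether.
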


\begin{proof}
     We will show that when $\text{cost}_f(T^\circ)$ is large, there is some leaf with high total influence, which means it must have high variance, and finally a variable with high influence.

    We define:
    \begin{align*}
        h_s: [0,1] \rightarrow \R \text{ where } h_s(t) = t \log\left(\frac{4s}{t}\right) \text{ and } h_s(0) = 0.
    \end{align*}
    Then, for any tree $T$ of size at most $s$, we have that
    \begin{align*}
        \text{Inf}(T) \leq h_s(\text{Var}(T)).
    \end{align*}
    As long as $s \geq 1$, $h_s$ is an increasing concave function. This means it has a convex inverse, $h_s^{-1}$, and that for any tree $T$ of size at most $s$, the following lower bounds the variance.
    \begin{figure}[tb]
      \captionsetup{width=.9\linewidth}
        \centering
        \includegraphics[width=0.4\linewidth]{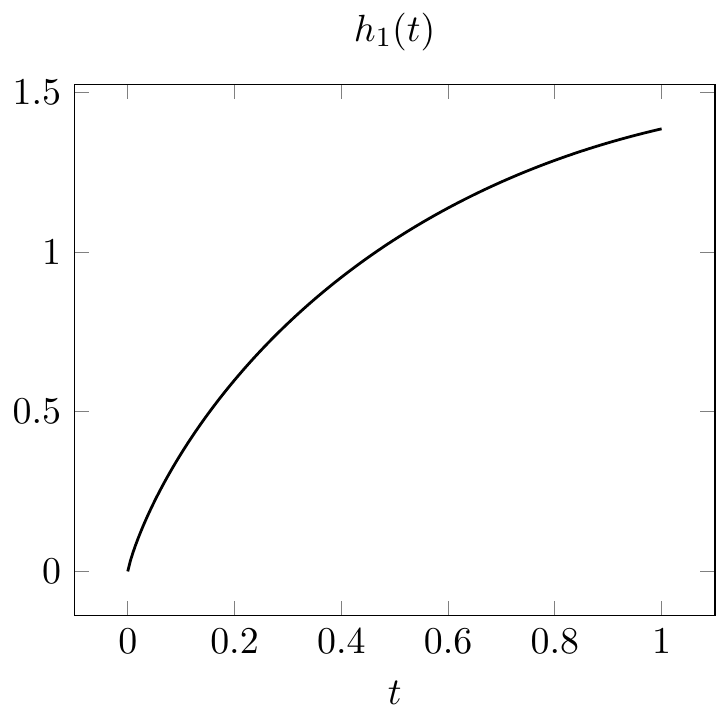}
         \includegraphics[width=0.4\linewidth]{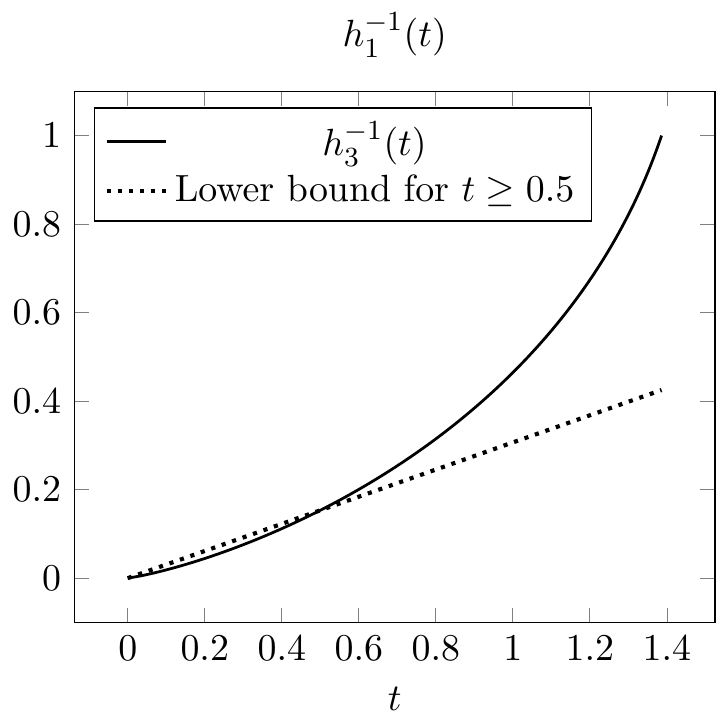}
        \caption{Graphs of the function $h_1(t) = t \cdot \log(\frac{4}{t})$ on the left, and of its inverse, $h_1^{-1}(t)$ on the right. Since the inverse is convex, we can use a linear lower bound as the dotted line in the right plot shows.}
        \label{fig:h and h inverse}
    \end{figure}
    
    \begin{align}
    \label{eq: bound variance by total influence}
        \text{Var}(T) \geq h_s^{-1}(\text{Inf}(T)).
    \end{align}
    Since $h_s^{-1}$ is convex and $h_s^{-1}(0) = 0$, we can lower bound it as follows. Choose arbitrary $a \in \mathbb{R}$. Then, for $t \geq a$ we have that $h_s^{-1}(t) \geq t \cdot \frac{h_s^{-1}(a)}{a}$. Choosing $a = \varepsilon\log(4s/\eps)$, we have that,
    \begin{align*}
        h_s^{-1}(t) \geq  \frac{t}{\log( 4s/\eps)} \text{ for all } t \geq \varepsilon\log\left( \frac{4s}{\varepsilon}\right).
    \end{align*}
    Consider the bare tree, $T^\circ$, at step $j$. By definition, it has cost
    \begin{align*}
        \sum_{\text{leaf }\ell \in T^\circ} 2^{-|\ell|}\cdot \text{Inf}(f_\ell) = \text{cost}_f(T^\circ).
    \end{align*}
    We next apply Jensen's inequality.
    \begin{align*}
        \sum_{\text{leaf }\ell \in T^\circ} 2^{-|\ell|}\cdot h_s^{-1}(\text{Inf}(f_\ell)) \geq h_s^{-1}(\text{cost}_f(T^\circ)).
    \end{align*}
    Since, at step $j$, there are $j+1$ leaves in $T^\circ$, for at least one of the leaves, $\ell$,
    \begin{align*}
    \label{eq: jensens applied to h}
        2^{-|\ell|}\cdot h_s^{-1}(\text{Inf}(f_{\ell})) \geq \frac{h_s^{-1}(\text{cost}_f(T^\circ))}{j+1} \geq \frac{\text{cost}_f(T^\circ)}{(j+1)\log(\frac{4s}{\varepsilon})}.
    \end{align*}
    By~\Cref{eq: bound variance by total influence}, we can lower bound the variance of $f_\ell$:
    \begin{align*}
        2^{-|\ell^*|} \cdot \text{Var}(f_{\ell}) \geq  2^{-|\ell|}\cdot h_s^{-1}(\text{Inf}(f_{\ell})) \geq \frac{\text{cost}_f(T^\circ)}{(j+1) \log(\frac{4s}{\varepsilon})}.
    \end{align*}
    Then, using~\Cref{thm: osss} and the fact that if $f$ is exactly computed by a size $s$ tree, then $f_{\ell}$ is exactly computed by a tree of size at most $s$.
    \begin{align*}
        2^{-|\ell|} \cdot \max_i\big(\Inf_i[f_\ell]\big) \geq \frac{\text{cost}_f(T^\circ)}{(j+1)\log(\frac{4s}{\varepsilon}) \log(s)}.
    \end{align*}
    Recall that \topdown{} picks the leaf with largest score, so it will pick a leaf with score at least $\frac{\text{cost}_f(T^\circ)}{(j+1)\log(4s/\eps) \log(s)}$.
\end{proof}

\subsection{Proofs of \Cref{thm:TD-upper,thm:TD-upper-monotone}}
\label{subsection: upper bound proofs}

Armed with the above Lemmas, we are now ready to prove our upper bounds on the size of the tree that \topdown{} produces.

\TDUpper*

\begin{proof}
    We use $C_j$ to refer to $\text{cost}_f(T^\circ)$ after $j$ steps of \topdown{}. The size of the tree returned is one more than the number of steps \topdown{} takes. Furthermore, if $C_j \leq \varepsilon$, then $T^\circ$ has error at most $\varepsilon$ at step $j$, so \topdown{} will return a tree of size at most $j + 1$. 
    
    Our analysis proceeds in two phases:
    \begin{itemize}[align=left]
        \item[\textbf{Phase 1:}] We will show that the larger $C_j$ is, the faster it must decrease at each step. This multiplicative reduction of $C_j$ will allow us to conclude that after at most $k =  s^{{\log(4s/\varepsilon) \log(1/\varepsilon)}}$ steps, that $C_k \leq \varepsilon\log(\frac{4s}{\varepsilon})$.
        \item[\textbf{Phase 2:}] We will argue that $C_j$ makes additive progress towards $0$ once it is less than $\varepsilon\log(\frac{4s}{\varepsilon})$, showing that after $m = s^{2\log(4s/\varepsilon) \log(1/\varepsilon)}$ steps, that $C_m \leq \varepsilon$.
    \end{itemize}
    Once $C_m \leq \varepsilon$, the algorithm must terminate.\\

    \noindent\textbf{Phase 1}: Based on~\Cref{lemma: minimum score multiplicative}, we know that during phase 1, \topdown{} will select a leaf with influence at least $\frac{\cost_f(T^\circ)}{(j+1)\log(4s/\eps) \log s}$ at each step $j$. From~\Cref{lemma: cost properties}, we know that:
    
    \begin{align*}
        C_{j} &\leq C_{j-1} - \frac{C_j}{j\log(4s/\eps) \log s} \\
        &= C_{j-1} \cdot \bigg(1 - \frac{1}{j\log(4s/\eps) \log s}\bigg).
    \end{align*}
    We can use this to bound $C_k$, the cost after some ($k$) number of steps, in terms of $C_0$.
    \begin{align*}
        C_k &\leq C_0 \prod_{j=1}^k \left(1 - \frac{1}{j\log(4s/\eps) \log s}\right)  \\
        &= C_0 \exp\Bigg(\sum_{j=1}^k \log\bigg(1 - \frac{1}{j\log(4s/\eps) \log s}\bigg)\Bigg).
    \end{align*}
    Using the fact that $\log(1 + t) < t$,
    \begin{align*}
        C_k &\leq C_0\exp\bigg(-\sum_{j=1}^k \frac{1}{j\log(4s/\eps) \log s}\bigg) \\
        &\leq C_0\exp\bigg(- \frac{\log k}{\log(4s/\eps) \log s}\bigg).
    \end{align*}
    We know that $C_0 \leq \log s$ because a size-$s$ decision tree has total influence at most $\log s$ (see e.g.~\cite{OS07}). Choosing
    \begin{align*}
        k = \exp\big(\log(4s/\eps) \log(s)\log(1/\eps)\big) = s^{\log(4s/\eps)\log(1/\eps)}
    \end{align*}
    it must be true that $C_k \leq \varepsilon \log(4s/\eps)$.\\

    \noindent\textbf{Phase 2}. This phase combines~\Cref{lemma: minimum score additive,lemma: cost properties}, which together imply that
    \begin{align*}
        C_{j+1} \leq C_j - \frac{\varepsilon}{(j+1) \log s}.
    \end{align*}
    This means that, for $m > k$,
    \begin{align*}
        C_k - C_m \geq \sum_{j=k+1}^{m} \frac{\varepsilon}{(j+1) \log s} \geq \frac{ \varepsilon}{\log s} (\log m - \log k).
    \end{align*}
    We are guaranteed to terminate at the first $j$ such that $C_j \leq \varepsilon$, or earlier.  Choosing
    \begin{align*}
        \log m = \frac{\log s}{\varepsilon} C_k + \log k
    \end{align*}
    ensures that $C_m \leq 0$, which means \topdown{} must terminate before step $m$. Plugging in $C_k \leq \varepsilon \log(4s/\eps)$ and $k = s^{\log(4s/\eps)\log(1/\eps)}$ gives that
    \begin{align*}
        m \leq s^{2 \log(4s/\eps) \log(1/\eps)}.
    \end{align*}
    Since \topdown{} terminates after at most $m$ steps, it returns a tree of size at most $m + 1$.
\end{proof}

The proof of \Cref{thm:TD-upper-monotone} is mostly the same as Phase 2 from the proof of \Cref{thm:TD-upper}, except we have a better guarantee on the starting cost.  We will use the following upper bound on the total influence of monotone decision trees, due to O'Donnell and Servedio~\cite{OS07}: 

\begin{theorem}[\cite{OS07}]
\label{thm:OS} 
Let $f$ be a size-$s$ monotone decision tree.  Then $\Inf(f) \le \sqrt{\log s}$. 
\end{theorem}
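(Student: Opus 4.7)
The plan is to express $\Inf(f)$ as a signed sum over the leaves of $T$ via the monotone Fourier identity, then bound the resulting quantity using Cauchy--Schwarz combined with a Wald/optional-stopping argument on a random walk naturally induced by $T$. All the ``monotonicity mileage'' is extracted in the very first step.

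By~\Cref{fact:mono-influence} (in its $\{\pm 1\}^n$ form from the footnote), $\Inf_j(f) = \hat f(\{j\})$, so $\Inf(f) = \sum_j \hat f(\{j\})$. Writing $f = \sum_\ell b_\ell \cdot \mathbf{1}[\,\cdot \in C_\ell]$ as a sum over the leaves of $T$ and computing each $\hat f(\{j\})$ directly (using that a Fourier character $\chi_{\{j\}}$ averages to $0$ on any sub-cube that does not pin $x_j$), the double sum collapses to the clean identity
\[
\Inf(f) \;=\; \sum_{\ell} b_\ell\, p_\ell\, S_\ell, \qquad S_\ell \;:=\; d_\ell^+ - d_\ell^-,
\]
where $p_\ell = 2^{-|\ell|}$, $b_\ell \in \{\pm 1\}$ is the leaf label, and $d_\ell^+$ (resp.\ $d_\ell^-$) counts the number of $x_i = 1$ (resp.\ $x_i = 0$) edges on the path to $\ell$. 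Only non-monotone functions could produce cancellation here, but for monotone $f$ we still apply the triangle inequality freely; the bound will remain strong enough.

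Since $|b_\ell| = 1$, the triangle inequality followed by Cauchy--Schwarz gives
\[
\Inf(f) \;\le\; \sum_\ell p_\ell\, |S_\ell| \;=\; \Ex\bigl[\,|\mathbf{S}|\,\bigr] \;\le\; \sqrt{\Ex\bigl[\mathbf{S}^2\bigr]},
\]
where $\mathbf{S}$ denotes the signed depth of the (random) leaf reached by a uniform $\bx \sim \{0,1\}^n$. The critical step is to identify $\Ex[\mathbf{S}^2]$ with the expected path length $\Ex[\mathbf{T}]$. To do so, view the navigation of $\bx$ through $T$ as a simple random walk: at each internal node visited, an independent Rademacher coin $\boldsymbol{\xi}_k \in \{\pm 1\}$ (encoding whether $\bx$ went to the $1$-child or $0$-child) determines the next step, and the walk stops at the random time $\mathbf{T}$ when a leaf is reached. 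Then $\mathbf{S} = \sum_{k \le \mathbf{T}} \boldsymbol{\xi}_k$, and since $M_t^2 - t$ is a martingale (with $M_t := \sum_{k \le t} \boldsymbol{\xi}_k$) and $\mathbf{T}$ is bounded by the tree depth, optional stopping (Wald's second identity) yields $\Ex[\mathbf{S}^2] = \Ex[\mathbf{T}]$.

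Finally, the standard entropy/Kraft bound for decision trees gives $\Ex[\mathbf{T}] = \sum_\ell p_\ell \log_2(1/p_\ell) = H(\text{leaf distribution}) \le \log_2 s$. Chaining all steps gives $\Inf(f) \le \sqrt{\log s}$. The main obstacle is the Wald-type step identifying $\Ex[\mathbf{S}^2] = \Ex[\mathbf{T}]$: this is where the product structure of the uniform distribution on $\{0,1\}^n$ is used crucially, ensuring that each branch really is an independent fair coin given the past (and hence that $M_t^2 - t$ is a martingale). The leaf-wise Fourier expansion and the entropy depth bound are routine.
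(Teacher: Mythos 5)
Your proof is correct. The paper does not prove this statement---it imports it directly from [OS07]---and your argument is a faithful reconstruction of the standard proof from that reference: monotonicity gives $\Inf(f)=\sum_j \hat{f}(\{j\})$, the leaf-wise Fourier computation identifies this with the expected signed depth, Cauchy--Schwarz and the Wald/optional-stopping identity convert this to the square root of the expected depth, and the entropy bound $\sum_\ell p_\ell \log(1/p_\ell)\le \log s$ finishes. All steps, including the crucial use of monotonicity only in the first identity and the martingale step for $\E[\mathbf{S}^2]=\E[\mathbf{T}]$, are sound.
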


\TDUpperMonotone* 

\begin{proof}
    We use $C_j$ to refer to $\text{cost}_f(T^\circ)$ after $j$ steps of \topdown{}. By combining \Cref{lemma: minimum score additive} and \Cref{lemma: cost properties}, we know that
    \begin{align*}
        C_{j+1} \leq C_j - \frac{\varepsilon}{(j+1) \log s}.
    \end{align*}
    At any step $k$,
    \begin{align*}
        C_0 - C_k \geq \sum_{j=0}^{k-1} \frac{\varepsilon}{(j+1) \log s} \geq \frac{ \varepsilon \cdot \log k}{\log s}. 
    \end{align*}
    Since $f$ is a monotone decision tree of size $s$, it has total influence at most $\sqrt{\log s}$ (\Cref{thm:OS}). This means that $C_0 \leq \sqrt{\log s}$. We choose
    \begin{align*}
        k = \exp(\log(s)^{1.5}/ \varepsilon) = s^{\sqrt{\log s}/\varepsilon}
    \end{align*}
    at which point, $C_k \leq 0 \leq \varepsilon$, so \topdown returns a tree of size $k+1$.
\end{proof}

\section{Lower bounds on $\TopDownDTsize$ for general functions: Proof of~\Cref{thm:TD-lower}}

\subsection{Size separation for exact representation: Proof of~\Cref{thm:TD-lower}(a)}

We begin with a simple family of functions $\{ f_h \}_{h \in \N}$ whose \textsc{BuildTopDownDT} tree has exponential size compared to the optimal tree.  Each $f_h$ is a function over $3h+1$ boolean variables $x^{(1)}_1,x^{(1)}_2,\ldots,x^{(h)}_1,x^{(h)}_2, y^{(1)},\ldots,y^{(h)},z$, and is defined inductively as follows:   
\[f_0(z) = z, \] 
and for $h\ge 1$,
\[ f_h(x,y,z) = 
\begin{cases}
y^{(h)} & \text{if $x^{(h)}_1 \vee x^{(h)}_2$} \\
f_{h-1}(x,y,z) & \text{otherwise}.
\end{cases}
\]
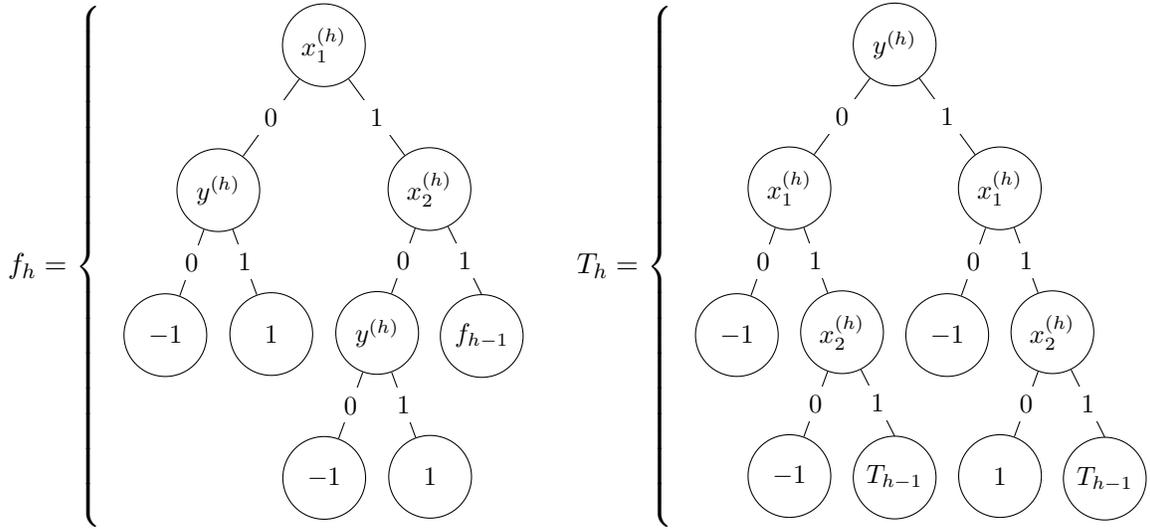
\begin{figure}[tb]
  \captionsetup{width=.9\linewidth}
    \centering
    \forestset{
  triangle/.style={
    node format={
      \noexpand\node [
      draw,
      shape=regular polygon,
      regular polygon sides=3,
      inner sep=0pt,
      outer sep=0pt,
      \forestoption{node options},
      anchor=\forestoption{anchor}
      ]
      (\forestoption{name}) {\foresteoption{content format}};
    },
    child anchor=parent,
  }
 }

\begin{align*}
    &f_h = 
    \left\{\hspace{3mm}{\small
    \!\begin{gathered}
    \begin{forest}
    for tree={
        grow=south,
        circle, draw, minimum size=11mm, l sep = 8mm, s sep = 3mm, inner sep = 1mm
    }
    [$x_1^{(h)}$
        [$y^{(h)}$, edge label = {node [midway, fill=white] {$0$} }
            [$-1$, edge label = {node [midway, fill=white] {$0$} }]
            [$1$, , edge label = {node [midway, fill=white] {$1$} }]
        ]
        [$x_2^{(h)}$, edge label = {node [midway, fill=white] {$1$} }
            [$y^{(h)}$, edge label = {node [midway, fill=white] {$0$} }
                [$-1$, edge label = {node [midway, fill=white] {$0$} }]
                [$1$, edge label = {node [midway, fill=white] {$1$} }]
            ]
            [$f_{h-1}$, triangle, edge label = {node [midway, fill=white] {$1$} }]
        ]
    ]
    \end{forest}
    \end{gathered}}\right.
    &T_h = \left\{\hspace{3mm}\small{
    \!\begin{gathered}
    \begin{forest}
    for tree={
        grow=south,
        circle, draw, minimum size=11mm, l sep = 8mm, s sep = 3mm, inner sep = 1mm
    }
    [$y^{(h)}$
        [$x_1^{(h)}$, edge label = {node [midway, fill=white] {$0$} }
            [$-1$, edge label = {node [midway, fill=white] {$0$} }]
            [$x_2^{(h)}$, , edge label = {node [midway, fill=white] {$1$} }
                [$-1$, edge label = {node [midway, fill=white] {$0$} }]
                [$T_{h-1}$, triangle, edge label = {node [midway, fill=white] {$1$} }]
            ]
        ]
        [$x_1^{(h)}$, edge label = {node [midway, fill=white] {$1$} }
            [$-1$, edge label = {node [midway, fill=white] {$0$} }]
            [$x_2^{(h)}$, edge label = {node [midway, fill=white] {$1$} }
                [$1$, edge label = {node [midway, fill=white] {$0$} }]
                [$T_{h-1}$, triangle, edge label = {node [midway, fill=white] {$1$} }]
            ]
        ]
    ]
    \end{forest}
    \end{gathered}}\right.
\end{align*}
    \caption{Diagrams exhibiting a function with exponential difference between the optimal decision tree size and \TopDownDTsize. The left diagram shows how to compute $f_h$ with a decision tree of size $O(h)$. The right diagram shows $T_h$, the tree \topdown{} builds, which has size $2^{\Omega(h)}$.}
    \label{fig:Non Monotone Exact Tree}
\end{figure}

\pparagraph{The structure of $\BuildTopDownDT(f_h)$.} We see that $y^{(h)}$ has influence $\frac{3}{4}$, both $x^{(h)}_1$ and $x^{(h)}_2$ have influence $\frac{1}{4}$, and each variable in $f_{h-1}$ has influence $< \frac{1}{4}$. $\BuildTopDownDT(f_h)$ therefore queries $y_k$ at the root. In the restrictions of $f_h$ obtained by setting $y^{(h)}$ to a constant, $x^{(h)}_1$ and $x^{(h)}_2$ have equal influence of $\frac1{4}$ and each variable in $f_{h-1}$ has influence $< \frac1{4}$. By setting either $x^{h)}_1$ or $x^{(h)}_2$ to a constant, we get a subfunction where the other $x^{(h)}$-variable has influence $\frac1{2}$ and each node in $f_{h-1}$ has influence $< \frac1{2}$. Thus, $\BuildTopDownDT(f_h)$ builds the tree $T_h$ depicted in~\Cref{fig:Non Monotone Exact Tree}.  

We see that each $T_h$ contains two copies of $T_{h-1}$. It follows that the optimal size of $f_h$ is $O(h)$, whereas the size of $T_h$ is $2^{\Omega(h)}$: a size separation of $\TopDownDTsize(f_h) = 2^{\Omega(s)}$ where $s$ denotes the optimal size of $f_h$.

\subsection{Size separation for approximate representation: Proof of~\Cref{thm:TD-lower}(b)}

\noindent {\bf Warmup/intuition: An $s$ versus $s^{\Omega(\log(1/\eps))}$ separation.} Before proving~\Cref{thm:TD-lower}(b), we first give a brief, informal description of how  a simple modification to the family of functions $\{f_h\}_{h\in \N}$ in~\Cref{thm:TD-lower}(a) above yields a separation of $\TopDownDTsize(f,\eps) = s^{\Omega(\log(1/\eps))}$ for approximate representation.~\Cref{thm:TD-lower}(b)---which improves this to a superpolynomial separation even for constant $\eps$---builds on these ideas, but the family of functions and the proof of the lower bound are significantly more involved. 

Consider replacing each $y^{(h)}$ variable in the definition of $f_h$ with the parity of $k$ variables $y^{(h)}_1 \oplus \cdots \oplus y^{(h)}_k$, i.e.~consider the following variant $\tilde{f}_h$ of $f_h$: 
\[ \tilde{f}_h(x,y,z) = 
\begin{cases}
y^{(h)}_1 \oplus \cdots \oplus y^{(h)}_k & \text{if $x^{(h)}_1 \vee x^{(h)}_2$} \\
\tilde{f}_{h-1}(x,y,z) & \text{otherwise}.
\end{cases}
\]
Just like the single $y^{(h)}$ variable in $f_h$, we see that the $k$ many $y^{(h)}_i$ variables are the most influential in $\tilde{f}_h$ (each having influence $\frac{3}{4}$).  Furthermore, each $y^{(h)}_i$ variable remains the most influential even under \emph{any} restriction to \emph{any} number of the other $y^{(h)}_j$ variables.  Therefore the tree $\tilde{T}_h$ that $\BuildTopDownDT$ builds for $\tilde{f}_h$ first queries all $k$ many $y^{(h)}$ variables.  At each of the $2^k$ resulting leaves, $x^{(h)}_1$ and $x^{(h)}_2$ are then queried, followed by a copy of $\tilde{T}_{h-1}$, the tree that $\BuildTopDownDT$ recursively constructs for $\tilde{f}_h$, in the branch corresponding to $x^{(h)}_1 = x^{(h)}_2 = 1$.  The fact that there are $2^k$ copies of $\tilde{T}_{h-1}$ within $\tilde{T}_{h}$ should be contrasted with the fact that the tree $T_h$ in~\Cref{thm:TD-lower}(a) contains just  two copies of $T_{h-1}$; recall~\Cref{fig:Non Monotone Exact Tree}. 

It is straightforward to see that there is a tree of size $O(h\cdot 2^k)$ that computes $\tilde{f}_h$.  This tree is built by first querying the $x^{(h)}$ variables before the $y^{(h)}$ variables, and recursing on just one of the $\Omega(2^k)$ many resulting leaves.  On the other hand, by first querying the $y^{(h)}$ variables followed by the $x^{(h)}$ variables, $\BuildTopDownDT$ recurses on $\Omega(2^k)$ many branches while only correctly classifying a $\frac{3}{4}$ fraction of inputs.  Choosing $h = \Theta(\log(1/\eps))$, we get a separation of $O(h\cdot 2^k)$ versus $2^{\Omega(kh)}$, or equivalently, $s$ versus $s^{\Omega(\log(1/\eps))}$.

\subsubsection{Proof of~\Cref{thm:TD-lower}(b)}

Before defining the family of functions witnessing the separation, we define a couple of basic boolean functions and state a few of  their properties that will be useful for our analyses:  

\begin{definition}[$\Tribes$]
    \label{def:Tribes} For any input length $r$, let $w$ be the largest integer such that $(1-2^{-w})^{r/w} \le \frac1{2}$.  The $\Tribes_r : \zo^r \to \{ \pm 1\}$ function is defined to be the function computed by the read-once DNF with $\lfloor \frac{r}{w}\rfloor$ terms (over disjoint sets of variables) of width exactly $w$: 
    \[ \Tribes_r(z) = (z_{1,1}\wedge \cdots \wedge z_{1,w}) \vee \cdots \vee (z_{t,1} \wedge \cdots \wedge z_{t,w}) \quad \text{where $t \coloneqq \lfloor \lfrac{r}{w}\rfloor,$} \]
    and where we adopt the convention that $-1$ represents logical \textsc{False} and $1$ represents logical \textsc{True}.
    \end{definition}

The following facts about the $\Tribes$ function are standard  (see Chapter \S4.2 of~\cite{ODBook}) and can be easily verified:

\begin{fact}[Properties of $\Tribes_r$]\label{fact:Tribes-properties} \ 
\begin{itemize} 
\item[$\circ$] $\Pr[\Tribes_r(\bz) = 1] = \frac1{2} -O\big(\frac{\log r}{r}\big).$
\item[$\circ$] $\Inf(\Tribes_r) = (1\pm o(1))\cdot \ln r$ and consequently, $\Inf_i(\Tribes_r) = (1\pm o(1))\cdot \frac{\ln r}{r}$ for all $i\in [n]$.
\item[$\circ$] $w = \log r - \log \ln r \pm O(1)$.
\item[$\circ$] $\size(\Tribes_r) \le w^{O(r/w)} = 2^{O(r\log\log r/\log r)}$.
\end{itemize} 

\end{fact}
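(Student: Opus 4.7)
The plan is to verify each of the four bullets separately, leveraging the one structural observation that drives everything: since the $t = \lfloor r/w \rfloor$ terms of $\Tribes_r$ act on disjoint blocks of $w$ variables, they are mutually independent under the uniform distribution. With this in place, each bullet is a short calculation of the kind carried out in Chapter~4.2 of~\cite{ODBook}.

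For the first bullet, independence gives $\Pr[\Tribes_r(\bz) = -1] = (1 - 2^{-w})^t$. I would show that this is within an additive $O(\log r / r)$ of $(1-2^{-w})^{r/w}$ by absorbing the floor in $t$ via the identity $(1-2^{-w})^t = (1-2^{-w})^{r/w} \cdot (1 + O(2^{-w}))$, and then invoke the definition of $w$ as the \emph{largest} integer with $(1-2^{-w})^{r/w} \leq 1/2$ to argue that the next increment $(1-2^{-w-1})^{r/(w+1)}$ is forced just above $1/2$, which pins $(1-2^{-w})^{r/w}$ itself within $O(\log r / r)$ of $1/2$. For the third bullet, I would take logarithms of the defining inequality, use $\ln(1-x) = -x - O(x^2)$ for small $x$ to reduce it to the asymptotic equation $w \cdot 2^w \asymp r / \ln 2$, and solve to get $w = \log r - \log \log r \pm O(1)$.

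For the second bullet, I would use the standard pivotality formula for read-once DNFs: a variable $i$ in term $k$ is pivotal precisely when the other $w-1$ variables of term $k$ are all $1$ (so flipping $i$ toggles whether term $k$ is True) and no other term is satisfied, giving $\Inf_i(\Tribes_r) = 2^{-(w-1)} \cdot (1-2^{-w})^{t-1}$. Summing over the $r = tw$ variables and substituting $2^{-w} = \Theta(\log r / r)$ and $(1-2^{-w})^{t-1} = 1/2 \pm o(1)$ from the first bullet produces $\Inf(\Tribes_r) = (1 \pm o(1)) \ln r$; the per-variable version then follows by symmetry. For the fourth bullet, a natural decision tree reads the variables of the first term one at a time, branches to a Tribes subtree over the remaining $t-1$ terms the moment it sees a $0$, and outputs $+1$ if all $w$ variables come up $1$. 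This yields the recursion $\size(\Tribes_r) \leq w \cdot \size(\Tribes \text{ with } t-1 \text{ terms}) + 1$, which telescopes to $w^{O(t)} = w^{O(r/w)} = 2^{O(r \log \log r / \log r)}$ after plugging in the formula for $w$.

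There is no substantial mathematical obstacle here, since each bullet is standard textbook material. The care required is in the bookkeeping: tracking the $\{0,1\}$-input versus $\{\pm 1\}$-output conventions when reading off probabilities and influences, and making sure the $o(1)$ error terms are consistent across the bullets so that, for instance, the $(1\pm o(1)) \ln r$ total influence bound matches $r$ times the per-variable influence $(1\pm o(1)) \ln r / r$ obtained by symmetry.
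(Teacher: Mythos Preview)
Your proposal is correct. The paper does not actually prove this fact: it simply states that these properties ``are standard (see Chapter~\S4.2 of~\cite{ODBook}) and can be easily verified,'' and moves on. Your sketch fills in exactly the standard calculations from that reference --- independence of terms for the acceptance probability, the pivotality formula $\Inf_i = 2^{-(w-1)}(1-2^{-w})^{t-1}$ for read-once DNFs, solving $w\cdot 2^w \asymp r/\ln 2$ for the term width, and the obvious term-by-term decision tree --- so there is nothing to compare beyond noting that you have supplied what the paper deliberately omitted. One cosmetic nit: you write ``summing over the $r = tw$ variables,'' but $tw$ may be slightly less than $r$ when $w \nmid r$; this is harmless for the $(1\pm o(1))$ bounds since the unused variables have zero influence and $r - tw < w = O(\log r)$.
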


\begin{definition}[$\Threshold$]
    \label{def:Threshold} 
    For any input length $\ell$ and $t \in \{0,1,\ldots,\ell\}$, the $\Threshold_{\ell,t} : \zo^\ell \to \{ \pm 1\}$ function is defined to be 
    \[ \Threshold_{\ell,t}(x) = 1 \Longleftrightarrow \sum_{i=1}^\ell x_i \le t. \] 
\end{definition}

\pparagraph{Defining the family of functions witnessing the separation.} Consider the following family of functions $\{ f_h \}_{h \in \N}$.  Each $f_h$ is a function over $h(\ell + k)+r$ boolean variables $x^{(1)},x^{(2)},\ldots,x^{(h)} \in \zo^\ell$,$ y^{(1)},\ldots,y^{(h)} \in \zo^k$, and $z\in \zo^r$, and is defined inductively as follows:   

\[f_0(z) = \Tribes_r(z), \] 
and for $h\ge 1$,
\[ f_h(x,y,z) = 
\begin{cases}
\Parity_k(y^{(h)}) & \text{if $\Threshold_{\ell,1}(x^{(h)}) = 1$} \\
f_{h-1}(x,y,z) & \text{otherwise}.
\end{cases}
\]

\begin{claim}[Optimal decision tree size of $f_h$]
\label{claim:opt-size-general-approx} 
\begin{align*} \size(f_h) &\le \ell^{O(h)}\cdot (\size(\Parity_k) + \size(\Tribes_r)) \\
&\le \ell^{O(h)} \cdot (2^k + 2^{O(r\log\log r/\log r)}). 
\end{align*} 
\end{claim}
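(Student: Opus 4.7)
The plan is to construct an explicit decision tree for $f_h$ and bound its size by induction on $h$. The high-level structure mirrors the recursive definition of $f_h$: at each level, we first test the threshold predicate on the $x^{(h)}$ variables, then either branch into a parity computation or recurse on $f_{h-1}$.

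More concretely, I would first construct a decision tree $D_h$ for $\Threshold_{\ell, 1}(x^{(h)})$ by querying $x^{(h)}_1, x^{(h)}_2, \ldots, x^{(h)}_\ell$ in order, maintaining (implicitly, via the tree's branching structure) a count of the number of $1$'s seen so far, and terminating early with output $-1$ (\textsc{False}) as soon as a second $1$ appears. A straightforward counting argument shows that $D_h$ has exactly $\ell + 1$ leaves labeled $+1$ (the all-zeros path and the $\ell$ one-hot paths) and exactly $\binom{\ell}{2}$ leaves labeled $-1$ (indexed by the positions of the first two $1$'s along the path), for a total of $O(\ell^2)$ leaves. I would then form the decision tree for $f_h$ by replacing each $+1$-leaf of $D_h$ with a complete decision tree for $\Parity_k(y^{(h)})$, which has size exactly $\size(\Parity_k) = 2^k$, and each $-1$-leaf with a decision tree for $f_{h-1}$, which exists by induction (noting that $f_{h-1}$ does not depend on the $x^{(h)}$ or $y^{(h)}$ variables, so no further restriction is needed).

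This construction yields the recurrence
\[
\size(f_h) \;\le\; (\ell + 1)\cdot \size(\Parity_k) \;+\; \binom{\ell}{2}\cdot \size(f_{h-1}),
\]
with base case $\size(f_0) = \size(\Tribes_r)$. Unrolling this recurrence gives the claimed bound $\size(f_h) \le \ell^{O(h)}\cdot(\size(\Parity_k)+\size(\Tribes_r))$. The second inequality in the claim then follows immediately from the trivial bound $\size(\Parity_k)\le 2^k$ together with the size bound $\size(\Tribes_r) \le 2^{O(r\log\log r/\log r)}$ recorded in \Cref{fact:Tribes-properties}.

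There is no real obstacle here; the only mildly nontrivial step is the combinatorial counting of the leaves of $D_h$ and verifying that the constant in the exponent of $\ell$ is absolute (in particular, independent of $h$, $k$, and $r$), which follows because each layer of the induction contributes only the multiplicative factor $O(\ell^2)$ from $\binom{\ell}{2}$.
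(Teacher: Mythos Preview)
Your proposal is correct and follows essentially the same approach as the paper's proof: both build a size-$O(\ell^2)$ tree for $\Threshold_{\ell,1}(x^{(h)})$, attach $\Parity_k$ trees at the $\ell+1$ accepting leaves and copies of the tree for $f_{h-1}$ at the remaining leaves, and unroll the resulting recurrence. Your leaf-counting is in fact more explicit than the paper's, which simply asserts the $O(\ell)$ and $O(\ell^2)$ counts without the combinatorial justification you provide.
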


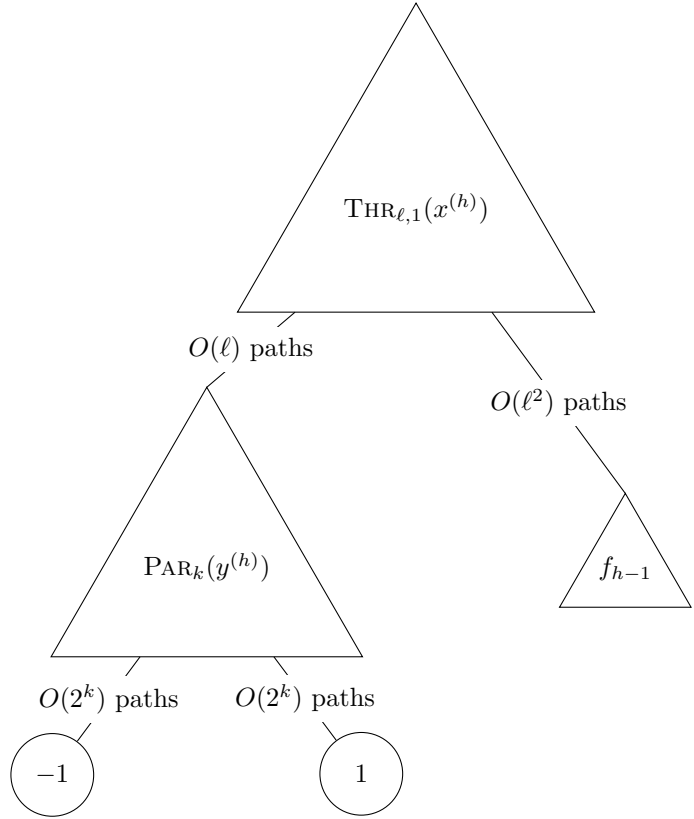
\begin{figure}[h]
  \captionsetup{width=.9\linewidth}
    \centering
    \forestset{
  triangle/.style={
    node format={
      \noexpand\node [
      draw,
      shape=regular polygon,
      regular polygon sides=3,
      inner sep=0pt,
      outer sep=0pt,
      \forestoption{node options},
      anchor=\forestoption{anchor}
      ]
      (\forestoption{name}) {\foresteoption{content format}};
    },
    child anchor=parent,
  }
 }

\[ {\small
\!\begin{gathered}
\begin{forest}
for tree={
    grow=south,
    minimum size=11mm, l sep = 10mm, s sep = 30mm
}
[$\textsc{Thr}_{\ell,1}(x^{(h)})$, triangle, draw
    [$\textsc{Par}_k(y^{(h)})$, triangle, edge label = {node[midway, fill=white] {$O(\ell)$ paths}}
        [$-1$, circle, draw, edge label = {node[midway, fill=white] {$O(2^k)$ paths}}]
        [$1$, circle, draw, edge label = {node[midway, fill=white] {$O(2^k)$ paths}}]
    ]
    [$f_{h-1}$, triangle, edge label = {node[midway, fill=white] {$O(\ell^2)$ paths}}]
  ]
]    
\end{forest}
\end{gathered} }
\]
    \caption{A small decision tree for $f_h$.}
    \label{fig:Non Monotone Approx Optimal Tree}
\end{figure}

\begin{proof}
Please refer to figure~\Cref{fig:Non Monotone Approx Optimal Tree}.  We first build a tree of size $O(\ell^2)$ that evaluates $\Threshold_{\ell,1}(x^{(h)})$.  Of these leaves, $\ell + 1$ descend into a tree computing $\Parity_k(y^{(h)})$, which has size $2^k$. The others descend into a tree computing $f_{h-1}$. This yields the recurrence 
\begin{align*}
    \size(f_h) &\le O(\ell) \cdot \size(\Parity_k) + O(\ell^2) \cdot \size(f_{h-1}) \\
    &\le O(\ell \cdot 2^k) + O(\ell^2) \cdot \size(f_{h-1}) \\
    \size(f_0) &= \size(\Tribes_r) \le 2^{O(r\log\log r/\log r)}, \tag*{(Recall \Cref{fact:Tribes-properties})}
\end{align*}
and the claim follows. 
\end{proof}

The remainder of this section will be devoted to proving a lower bound on $\TopDownDTsize(f,\eps)$. \Cref{fig:Non Monotone Approx TD Tree} should be contrasted with~\Cref{fig:Non Monotone Approx Optimal Tree}. 

\begin{figure}[tb]
  \captionsetup{width=.9\linewidth}
    \centering
    \forestset{
  triangle/.style={
    node format={
      \noexpand\node [
      draw,
      shape=regular polygon,
      regular polygon sides=3,
      inner sep=0pt,
      outer sep=0pt,
      \forestoption{node options},
      anchor=\forestoption{anchor}
      ]
      (\forestoption{name}) {\foresteoption{content format}};
    },
    child anchor=parent,
  }
 }

\[ {\small
\!
\begin{gathered}
\begin{forest}
for tree={
    grow=south,
    minimum size=11mm, l sep = 10mm, s sep = 30mm
}
[$\textsc{Par}_k(y^{(1)})$, triangle, draw
    [$\textsc{Thr}_{\ell,1}(x^{(1)})$, triangle, edge label = {node[midway, fill=white] {$O(2^k)$ paths}}
        [$-1$, circle, draw, edge label = {node[midway, fill=white] {$O(\ell)$ paths}}]
        [$~T_0~$, triangle, draw, edge label = {node[midway, fill=white] {$O(\ell^2)$ paths}}]
    ]
    [$\textsc{Thr}_{\ell,1}(x^{(1)})$, triangle, edge label = {node[midway, fill=white] {$O(2^k)$ paths}}
        [$1$, circle, draw, edge label = {node[midway, fill=white] {$O(\ell)$ paths}}]
        [$~T_0~$, triangle, draw, edge label = {node[midway, fill=white] {$O(\ell^2)$ paths}}]
    ]
  ]
]    
\end{forest}
\end{gathered} }
\]
    \caption{The tree $T_1$ that $\topdown{}$ builds for $f_1$. Since $y^{(1)}$ has all the most influential variables, $\topdown{}$ puts them all at the root. As a result, it ends up building a significantly larger tree than optimal (cf.~\Cref{fig:Non Monotone Approx Optimal Tree}). Notice that the size of $T_1$ is $\Omega(2^k)$ times as large as $T_0$. This leads to exponential growth of the tree size as a function of $h$.}
    \label{fig:Non Monotone Approx TD Tree}
\end{figure}
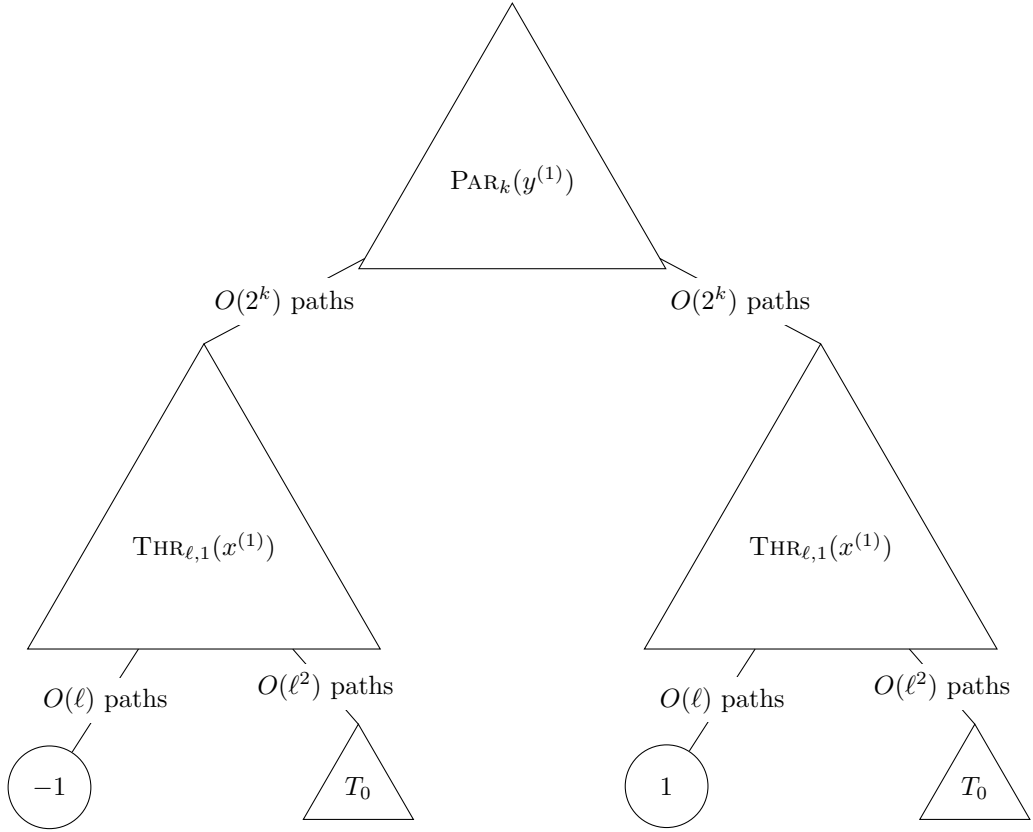

\pparagraph{The structure of $\BuildTopDownDT(f_h)$} The following helper lemma will be useful in determining the structure of the tree \topdown{} produces.

\begin{lemma}[Preservation of influence order]
    \label{lemma: subfunction most influential}
    Let $f:\{0,1\}^{\overline{S}} \times \{0,1\}^{S} \rightarrow \{\pm 1\}$ and $\tilde{f} : \zo^{S} \to \{\pm 1\}$ be two functions satisfying the following: there is a function $g:\{0,1\}^{\overline{S}} \times \{\pm 1\} \rightarrow \{\pm 1\}$ such that: 
    \begin{equation*}
        f(a,b) = g(a, \tilde{f}(b)) \quad \text{for all $a \in \zo^{\overline{S}}$, $b\in \zo^S$.} 
    \end{equation*}
    Then for all variables $v_1,v_2 \in S$, \[ \Inf_{v_1}(\tilde{f}) \ge \Inf_{v_2}(\tilde{f})\quad \text{if and only if}\quad \Inf_{v_1}(f) \ge \Inf_{v_2}(f). \] 
\end{lemma}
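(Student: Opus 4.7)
\medskip
\noindent\textbf{Proof plan.} The approach is to establish the exact identity
\[
\Inf_v(f) \;=\; p \cdot \Inf_v(\tilde f) \qquad \text{for every } v \in S,
\]
where $p \coloneqq \Pr_{\ba}[g(\ba, +1) \ne g(\ba, -1)]$ is a single nonnegative constant depending only on $g$ (and crucially not on which variable $v \in S$ we picked). Once this identity is in hand, the claimed equivalence is immediate: if $p > 0$, dividing by $p$ preserves the order of the influences, and if $p = 0$ then $f$ does not depend on $S$ at all (so all $S$-influences of $f$ are $0$ and the ``if and only if'' holds vacuously from the $f$-side---this degenerate case is consistent with the statement because the lemma will only be applied when $g$ genuinely uses its last argument, as is the case for all invocations elsewhere in the paper).

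\medskip
\noindent\textbf{Deriving the identity.} Fix $v \in S$ and unfold the definition of influence, using $f(a,b) = g(a,\tilde f(b))$:
\[
\Inf_v(f) \;=\; \Prx_{\ba,\bb}\bigl[g(\ba,\tilde f(\bb)) \ne g(\ba,\tilde f(\bb^{\oplus v}))\bigr].
\]
Split on whether $\tilde f(\bb) = \tilde f(\bb^{\oplus v})$. When these are equal, the two arguments passed to $g$ are identical and the event fails with probability $1$. When they are unequal, $\{\tilde f(\bb),\tilde f(\bb^{\oplus v})\} = \{-1,+1\}$, so the event reduces to $g(\ba,+1) \ne g(\ba,-1)$, whose probability over $\ba$ is $p$. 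Because $\ba$ and $\bb$ are independent uniform random variables (the input to $f$ is a single uniform string on $\overline S \sqcup S$), conditioning on $\bb$ fixes the indicator $\Ind[\tilde f(\bb) \ne \tilde f(\bb^{\oplus v})]$ while leaving $\ba$ uniform. Thus
\[
\Inf_v(f) \;=\; \Prx_{\bb}\bigl[\tilde f(\bb) \ne \tilde f(\bb^{\oplus v})\bigr] \cdot \Prx_{\ba}\bigl[g(\ba,+1) \ne g(\ba,-1)\bigr] \;=\; \Inf_v(\tilde f)\cdot p.
\]

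\medskip
\noindent\textbf{Conclusion.} Applying this identity to both $v_1$ and $v_2$, we have $\Inf_{v_1}(f) - \Inf_{v_2}(f) = p\cdot\bigl(\Inf_{v_1}(\tilde f) - \Inf_{v_2}(\tilde f)\bigr)$, so the two differences have the same sign, yielding the desired equivalence. The only mildly delicate step is recognizing that the ``coupling factor'' $p$ separates cleanly from the $v$-dependent factor $\Inf_v(\tilde f)$; this separation is exactly what the independence of $\ba$ and $\bb$ (along with the fact that $v \in S$, so flipping $v$ does not touch $\ba$) buys us, and it is the only real content of the argument.
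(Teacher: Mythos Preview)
Your proof is correct and follows essentially the same route as the paper: both derive the identity $\Inf_v(f) = \Inf_v(\tilde f)\cdot \Pr_{\ba}[g(\ba,-1)\ne g(\ba,1)]$ by unfolding the definition and using the independence of $\ba$ and $\bb$, then observe that the multiplicative factor is independent of $v$. You are in fact slightly more careful than the paper in flagging the degenerate case $p=0$, where the ``only if'' direction of the biconditional can fail as literally stated; the paper's proof does not address this, and (as you note) the lemma is only ever invoked when $g$ genuinely depends on its second argument.
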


\begin{proof} 
This holds by noting that for $v \in \{v_1,v_2\}$, 
\begin{align*} 
\Inf_{v}(f) &= \Prx_{\ba,\bb}[f(\ba,\bb) \ne f(\ba,\bb^{\oplus v})] \\
&= \Prx_{\ba,\bb}[g(\ba,\tilde{f}(\bb))\ne g(\ba,\tilde{f}(\bb^{\oplus v}))] \\
&= \Prx_{\bb}[\tilde{f}(\bb) \ne \tilde{f}(\bb^{\oplus v})] \cdot \Prx_{\ba}[g(\ba,-1)\ne g(\ba,1)] \\
&= \Inf_v(\tilde{f}) \cdot \Prx_{\ba}[g(\ba,-1)\ne g(\ba,1)].
\end{align*} 
The lemma follows since $\displaystyle \Prx_{\ba}[g(\ba,-1)\ne g(\ba,1)]$ does not depend on $v$ (and hence is the same regardless of whether $v = v_1$ or $v = v_2$).
\end{proof}

\Cref{lemma: subfunction most influential} is especially well-suited for our inductively-defined family of functions $\{ f_h \}_{h\in \N}$.  For each $i \in \{0,1,\ldots,h\}$, we let $S_i$ denote the relevant variables of $f_i$.  Therefore 
\begin{align*}
S_0 &= \{z_1,\ldots,z_r\} \\
S_{i+1} &= S_i \sqcup \{x^{(i)}_1,\ldots,x^{(i)}_\ell, y^{(i)}_1,\ldots, y^{(i)}_k\} 
\end{align*} 

\begin{observation} {\it 
For all $i \in \{0,1,\ldots,h\}$, there exists $g_i$ such that 
\begin{equation} 
f_h(a,b)  = g_i(a,h_i(b)) \quad \text{for all $a\in \zo^{S_h\setminus S_i}$ and $b \in \zo^{S_i}$.} \label{eq:nested}
\end{equation} 
Consequently, we may apply~\Cref{lemma: subfunction most influential} to get that for all $v_1,v_2 \in S_i$, we have that 
\[  \Inf_{v_1}(f_i) \ge \Inf_{v_2}(f_i) \quad \text{if and only if}\quad  \Inf_{v_1}(f_h) \ge \Inf_{v_2}(f_h).  \]} 
\end{observation}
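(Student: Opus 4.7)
The plan is to establish the decomposition in (\ref{eq:nested}) by downward induction on $i$ from $i = h$ down to $i = 0$, after which the influence-order statement will fall out as an immediate application of \Cref{lemma: subfunction most influential}. The base case $i = h$ is vacuous: $S_h \setminus S_h = \emptyset$, so taking $g_h : \{\pm 1\} \to \{\pm 1\}$ to be the identity gives $f_h(b) = g_h(f_h(b))$ trivially.

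For the inductive step, suppose the claim holds at level $i+1 \le h$ via some decomposition function $g_{i+1} : \zo^{S_h \setminus S_{i+1}} \times \{\pm 1\} \to \{\pm 1\}$. Writing an arbitrary input $c \in \zo^{S_{i+1}}$ as $c = (x^{(i+1)}, y^{(i+1)}, b)$ with $b \in \zo^{S_i}$, the recursive definition of the family gives
\[ f_{i+1}(c) = \begin{cases} \Parity_k(y^{(i+1)}) & \text{if } \Threshold_{\ell,1}(x^{(i+1)}) = 1, \\ f_i(b) & \text{otherwise.} \end{cases} \]
I would then absorb $(x^{(i+1)}, y^{(i+1)})$ into the first argument to define $g_i : \zo^{S_h \setminus S_i} \times \{\pm 1\} \to \{\pm 1\}$ by a case split: if $\Threshold_{\ell,1}(x^{(i+1)}) = 1$, set $g_i(a', x^{(i+1)}, y^{(i+1)}, t) \coloneqq g_{i+1}(a', \Parity_k(y^{(i+1)}))$, discarding $t$; otherwise set it to $g_{i+1}(a', t)$. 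A direct case analysis on $\Threshold_{\ell,1}(x^{(i+1)})$ then verifies $f_h(a, b) = g_i(a, f_i(b))$ for all $a \in \zo^{S_h \setminus S_i}$ and $b \in \zo^{S_i}$, closing the induction.

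With (\ref{eq:nested}) in hand, the consequence is immediate: for any $v_1, v_2 \in S_i$, I would apply \Cref{lemma: subfunction most influential} with $S \leftarrow S_i$, $\overline{S} \leftarrow S_h \setminus S_i$, $\tilde{f} \leftarrow f_i$, $f \leftarrow f_h$, and $g \leftarrow g_i$, which yields $\Inf_v(f_h) = \Inf_v(f_i) \cdot \Prx_{\ba}[g_i(\ba,-1) \ne g_i(\ba, 1)]$ for both $v \in \{v_1, v_2\}$; since the multiplicative factor is the same for both variables, the ordering of influences is preserved. There is no substantive obstacle here; the induction is purely structural, and the only care required is in bookkeeping which variables belong to $S_h \setminus S_i$ versus $S_i$ at each unwinding step.
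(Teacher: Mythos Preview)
Your proposal is correct and is essentially the natural unfolding of the observation; the paper itself does not give a separate proof here, treating the nested decomposition as evident from the inductive definition of $f_h$ and then immediately invoking \Cref{lemma: subfunction most influential}. Your downward induction makes this explicit and your case analysis on $\Threshold_{\ell,1}(x^{(i+1)})$ is exactly the right way to verify that $f_{i+1}$ factors through $f_i$, so nothing further is needed.
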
 

We note the following corollary, which is a straightforward consequence of the observation that the property (\ref{eq:nested}) is preserved under restrictions: 

\begin{corollary}[Preservation of influence order under restrictions] 
\label{cor: subfunction most influential} 
Let $\pi$ be any restriction.  For all $i\in \{0,1,\ldots,h\}$, we have that 
\[ \Inf_{v_1}((f_i)_\pi) \ge \Inf_{v_2}((f_i)_\pi) \quad \text{if and only if}\quad  \Inf_{v_1}((f_h)_\pi) \ge \Inf_{v_2}((f_h)_\pi).  \] 
\end{corollary}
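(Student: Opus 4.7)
The plan is to reduce the corollary directly to the observation just above and to \Cref{lemma: subfunction most influential}, by showing that the nesting identity $f_h(a,b) = g_i(a, f_i(b))$ is preserved under restrictions. Since the coordinate sets $S_h\setminus S_i$ and $S_i$ are disjoint, an arbitrary restriction $\pi$ of the variables in $S_h$ decomposes canonically as $\pi = \pi_1 \cup \pi_2$, where $\pi_1 \coloneqq \pi|_{S_h\setminus S_i}$ and $\pi_2 \coloneqq \pi|_{S_i}$.

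Substituting this decomposition into the identity from (\ref{eq:nested}) gives
\[
(f_h)_\pi(a', b') \;=\; (g_i)_{\pi_1}\bigl(a',\, (f_i)_{\pi_2}(b')\bigr),
\]
where $a'$ ranges over the free coordinates of $\pi_1$ in $S_h\setminus S_i$ and $b'$ over the free coordinates of $\pi_2$ in $S_i$. Because $f_i$ depends only on variables in $S_i$, we have $(f_i)_{\pi_2} = (f_i)_\pi$, so the equation above is precisely the hypothesis of \Cref{lemma: subfunction most influential} applied to the pair $(f_h)_\pi$ and $(f_i)_\pi$, with combining function $(g_i)_{\pi_1}$. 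Invoking the lemma immediately yields the desired biconditional for any pair $v_1,v_2$ of variables in $S_i$ left free by $\pi$. For variables fixed by $\pi$, or for variables lying outside $S_i$, the relevant influence on $(f_i)_\pi$ is zero and the statement of the corollary is either vacuous or trivial, so no extra argument is needed in those cases.

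The only real work is therefore the bookkeeping step verifying that restriction distributes across the nested composition. I do not anticipate any genuine obstacle here: because restrictions act coordinate-wise and the two coordinate sets $S_h\setminus S_i$ and $S_i$ are disjoint, the decomposition of $\pi$ and the substitution into the nesting identity are purely syntactic, with no order-of-operations subtlety. In this sense the corollary is, as the excerpt hints, a formal consequence of the nesting property being closed under restrictions together with \Cref{lemma: subfunction most influential}.
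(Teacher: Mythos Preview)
Your proposal is correct and matches the paper's approach exactly: the paper's entire justification is the one-line remark that the nesting property~(\ref{eq:nested}) is preserved under restrictions, and you have simply spelled out that remark by decomposing $\pi$ across the disjoint coordinate sets and reapplying \Cref{lemma: subfunction most influential}. One small caveat: your claim that the statement is ``vacuous or trivial'' for variables outside $S_i$ is not literally true as a biconditional (if the combining function $(g_i)_{\pi_1}$ happens to ignore its second argument, the right-hand side collapses to $0\ge 0$ while the left need not), but the corollary, like the observation preceding it, is implicitly quantified over $v_1,v_2\in S_i$, which is how it is used throughout the paper, so this does not affect correctness.
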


\pparagraph{Lower bounding the size of $\BuildTopDownDT(f,\eps)$.}  Let $T_\exact$ denote the tree returned by $\BuildTopDownDT(f_h)$, and $T_\mathrm{approx}$ denote the tree returned by $\BuildTopDownDT(f_h,\eps)$.  (So $T_\exact$ computes $f_h$, and $T_{\mathrm{approx}}$ is an $\eps$-approximation of $f_h$.)  Our goal is to lower bound the size of $T_{\mathrm{approx}}$.  We will in fact establish something stronger: our lower bound holds for\emph{any pruning} of $T_\exact$ that is an $\eps$-approximation of $T_\exact$, where a pruning of a tree $T$ is any tree obtained by iteratively removing leaves from $T$ in a bottom-up fashion.  Since $\BuildTopDownDT(f,\eps)$ is simply $\BuildTopDownDT(f_h)$ terminated early, we have that $T_{\mathrm{approx}}$ is indeed a pruning of $T_\exact$.

Let $V_\exact$ be defined as follows: 
\[ V_\exact \coloneqq \{ v \colon \text{$v$ is the first node in a path of $T_\exact$ that queries a $z$-variable} \}. \] 
We define $V_{\mathrm{approx}} \sse V_\exact$ analogously. At a very high level, our proof of~\Cref{thm:TD-lower}(b) will proceed by showing that $V_\exact$ has large size, and $V_\mathrm{approx}$ has to contain many nodes in $V_\exact$. For the remainder of this proof, we will need that $r$ and $\ell$ are chosen to satisfy: 
\begin{equation} 
\frac{2\ln r}{r} < 2^{-\ell}.   \label{eq:tribes small influence}
\end{equation}

\begin{lemma}[All nodes in $V_\exact$ occur deep within $T_\exact$]
\label{lem:long path} 
    Fix $v \in V_\exact$ and let $\pi$ denote the path in $T_\exact$ that leads to $v$.  Then $|\pi| \ge kh$.  
\end{lemma}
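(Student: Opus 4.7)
The plan is to show that $\BuildTopDownDT$ must query all $kh$ of the $y^{(i)}_j$ variables before querying any $z$-variable. Since the algorithm queries the most-influential variable at each split, this reduces to comparing influences of the three types of variables under the restrictions $\pi$ that arise along the path.

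First I would write closed-form expressions for the relevant influences. Let $A_{i'}(\pi) \coloneqq \Pr_{\bx}[\Threshold_{\ell,1}(\bx^{(i')}) = 0 \mid \pi]$ and $B_i(\pi) \coloneqq \Pr_{\bx}[\Threshold_{\ell,1}(\bx^{(i)}) = 1 \mid \pi]$, and let $b_i(\pi)$ denote the number of $x^{(i)}$-bits that $\pi$ sets to $1$. For any restriction $\pi$ that does not touch the $z$-variables, one has
\[
\Inf_{y^{(i)}_j}((f_h)_\pi) = \Big(\prod_{i' > i} A_{i'}(\pi)\Big)\, B_i(\pi) \quad\text{(when $y^{(i)}_j$ is free)},
\]
\[
\Inf_{z^*}((f_h)_\pi) = \Big(\prod_{i' \ge 1} A_{i'}(\pi)\Big)\, \Inf_{z^*}(\Tribes_r),
\]
and by \eqref{eq:tribes small influence} the latter is strictly less than $\prod_{i' \ge 1} A_{i'}(\pi)\cdot 2^{-\ell}$.

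The heart of the argument is an inductive \emph{priority lemma}: at every node $u$ along the path to $v$, whenever $\BuildTopDownDT$ queries a variable $x^{(i)}_m$ at $u$, every $y^{(i')}_j$ with $i' \ge i$ must already be set by $\pi_u$. The base case is the root, where only $y^{(h)}$-variables can be maximally influential. For the inductive step, one uses the analogous formula $\Inf_{x^{(i)}_m}((f_h)_\pi) = (\prod_{i' > i} A_{i'}(\pi))\cdot P_m(\pi)/2$, where $P_m$ is the probability that flipping $x^{(i)}_m$ changes $\Threshold_{\ell,1}(x^{(i)})$; the factor of $1/2$ comes from $\Parity_k(y^{(i)})$ being uniform and independent of $f_{i-1}$ conditional on $\pi$ whenever some $y^{(i)}$-bit is free. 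A direct calculation shows that if $b_i(\pi_u) = 0$ then $\Inf_{y^{(i)}_j}/\Inf_{x^{(i)}_m} \ge (\ell+1)/(\ell-1) > 1$, with the analogous ratio for $i' > i$ only enlarged by the factor $1/\prod_{i < i'' \le i'} A_{i''}(\pi_u) > 1$. The remaining case $b_i(\pi_u) \ge 1$ is handled by invoking the inductive hypothesis at the ancestor $u'$ where the first $x^{(i)}$-bit was set to $1$: the priority statement at $u'$ already forced every $y^{(i)}$-bit to be set before any $x^{(i)}$-query, so no unset $y^{(i')}_j$ with $i' \ge i$ can exist at $u$ in that case.

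To conclude, I would apply essentially the same inequality at $v$ itself. Since $z^*$ is the most-influential variable at $v$, every free $y^{(i)}_j$ must satisfy $\Inf_{y^{(i)}_j}((f_h)_\pi) \le \Inf_{z^*}((f_h)_\pi)$, which reduces to $B_i(\pi) \le \prod_{i'\le i} A_{i'}(\pi)\cdot 2^{-\ell} \le 2^{-\ell}$. A case analysis of $B_i$ shows this forces $b_i(\pi) \ge 2$, which in turn requires an $x^{(i)}$-query on the path to $v$; the priority lemma at that query then contradicts $y^{(i)}_j$ being free at $v$. Hence every $y$-variable is set by $\pi$, giving $|\pi| \ge kh$.

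The main obstacle is the situation where an unset $y^{(i)}_j$ already has influence $0$ under $\pi$ (because $B_i(\pi) = 0$): a naive influence comparison at $v$ then holds trivially and yields no contradiction. The resolution is to phrase the priority statement inductively about every internal node and trace back to the ancestor where $b_i$ first exceeded $1$, where the priority rule forces $y^{(i)}_j$ to have been queried earlier.
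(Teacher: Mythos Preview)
Your approach---computing influences directly in $(f_h)_\pi$ and inducting along the path---differs from the paper's, which instead uses an influence-order preservation principle (\Cref{cor: subfunction most influential}) to pass from $(f_h)_\pi$ to the much simpler $(f_i)_\pi$ and then inducts, for each fixed $i$, on the number of $S_i$-variables queried so far. The paper's reduction means one never has to compare a variable at level $i$ against one at a different level $i'$: every comparison takes place inside a single $f_i$ in which only some $y^{(i)}$-bits are restricted, where the remaining $y^{(i)}$-variables are transparently dominant.

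Your direct route is workable, but the $i'>i$ case of the priority lemma is not established as written; there are two gaps. First, the claim that the ratio for $i'>i$ is ``only enlarged by $1/\prod_{i<i''\le i'}A_{i''}(\pi_u)$'' tacitly assumes $B_{i'}(\pi_u)=B_i(\pi_u)$, which you have not argued. The right move is to invoke the inductive hypothesis at \emph{any} earlier $x^{(i'')}$-query with $i''\le i'$ (not only those where the bit was set to $1$): if none exists, then all of $x^{(i)}$ and $x^{(i')}$ are unrestricted at $\pi_u$, so indeed $B_i=B_{i'}=(\ell+1)/2^\ell$ and your computation goes through. Second, and more seriously, your formula $\Inf_{x^{(i)}_m}((f_h)_{\pi_u})=\big(\prod_{i''>i}A_{i''}\big)\cdot P_m/2$ requires some $y^{(i)}$-bit to be free, but when $i'>i$ you only know $y^{(i')}_j$ is free; if every $y^{(i)}$-bit is already set, $\Pr[\Parity_k(y^{(i)})_{\pi_u}\ne (f_{i-1})_{\pi_u}]$ need not equal $\tfrac12$, and the ratio can fall below $1$. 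To close this gap you must trace back to the first $y^{(i)}$-query node $u'$ and compare there: since no $x^{(i'')}$ with $i''\le i'$ has been queried one gets $\Inf_{y^{(i')}_j}/\Inf_{y^{(i)}_?}=1/\prod_{i<i''\le i'}A_{i''}(\pi_{u'})>1$, contradicting the choice made at $u'$. With these two patches your argument goes through; the concluding step at $v$ is correct as written.
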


\begin{proof}
Suppose without loss of generality that $v$ is a query to $z_1$.  We claim that $y^{(i)}_j \in \pi$ for all $i \in [h]$ and $j\in [k]$, from which the lemma follows.  Fix $i \in [h]$.  We will prove there are at least $k$ queries to variables in $S_i$ within $\pi$, and that the first $k$ of these queries have to be $y^{(i)}_j$ for $j \in [k]$.  We prove both these claims simultaneously by induction on $k$.    
\begin{itemize}[leftmargin=0.5cm]
\item[$\circ$] (Base case.)  Seeking a contradiction, suppose $\pi$ does not contain any queries to variables in $S_i$, in which case $(f_i)_\pi \equiv f_i$.  Since $z_1$ is the variable queried at the root of $(f_h)_\pi$, it is the most influential variable within $(f_h)_\pi$.  By~\Cref{cor: subfunction most influential}, it follows that $z$ is the most influential variable within $(f_i)_\pi \equiv f_i$.  This contradicts~\Cref{eq:tribes small influence} since 
\[ \Inf_{y^{(i)}_1}(f_i) = \frac{\ell+1}{2^\ell} \quad \text{and} \quad \Inf_{z_1}(f_i) < \Inf_{z_1}(\Tribes_r) = (1\pm o(1)) \cdot \frac{\ln r}{r}.\] 
Therefore $\pi$ has to contain at least one variable in $S_i$.  Let $u \in \pi$ be the first query to a variable in $S_i$, which we claim must be $y^{(i)}_j$ for some $j$.  Let $\pi_u \subset \pi$ be the path in $T_\exact$ that leads to $u$.  Again, we have that $u$ must be the most influential variable within $(f_h)_{\pi_u}$, and hence, by~\Cref{cor: subfunction most influential}, it is the most influential within $(f_i)_{\pi_u}$.  Since $\pi_u$ does not contain any queries to variables in $S_i$, we have that $(f_i)_{\pi_u} \equiv f_i$, and hence $u$ must be $y^{(i)}_j$ for some $j$ since these are the most influential variables within $f_i$. 
\item[$\circ$] (Inductive step.)  Fix $k' < k$, and suppose we have established that there are at least $k'$ queries to variables in $S_i$ within $\pi$, the first $k'$ of which are to $y^{(i)}$-variables.  We first claim that there is at least one more query to variable in $S_i$ within $\pi$.  Suppose not.  It follows that $z_1$ must be the most influential variable within $(f_h)_\pi$, and hence, by~\Cref{cor: subfunction most influential}, it is the most influential variable within $(f_i)_\pi$.  This is a contradiction, since $z_1$ is less influential than any of the $k-k'$ many $y^{(i)}_j$ variables that are not queried by $\pi$. 

Therefore $\pi$ has to contain at least one more query to a variable in $S_i$.  Let $u \in \pi$ be the $(k+1)^{st}$ query to a variable in $S_i$, which we claim must be $y^{(i)}_j$ for some $j$.  Let $\pi_u \subset \pi$ be the path in $T_\exact$ that leads to $u$.  Again, we have that $u$ must be the most influential variable within $(f_h)_{\pi_u}$, and hence, by~\Cref{cor: subfunction most influential}, it is the most influential within $(f_i)_{\pi_u}$.  Since $\pi_u$ contains exactly $k'$ queries to variables $S_i$, and all these queries are to $y^{(i)}$ variables, we have that $u$ must be $y^{(i)}_j$ for one of the remaining $k-k'$ many $y^{(i)}$-variables since these are the most influential variables within $(f_i)_{\pi_u}$. 
\end{itemize} 
This completes the inductive proof of~\Cref{lem:long path}.  \qedhere

\end{proof}

\begin{lemma}
\label{lem:err on half} 
    Fix $v \in V_\exact$ and let $\pi$ denote that path in $T_\exact$ that leads to $v$.  Then $(f_h)_{\pi} \equiv \Tribes_r$.  
\end{lemma}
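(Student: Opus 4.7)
The plan is to show $(f_h)_\pi \equiv \Tribes_r$ by establishing that the restriction $\pi$ forces the ``otherwise'' branch at every level $i \in [h]$, causing $f_h$ to collapse inductively down to $f_0 = \Tribes_r$. \Cref{lem:long path} supplies the first half of this for free: $\pi$ fixes every $y^{(i)}_j$, so each $\Parity_k(y^{(i)})$ becomes a constant $c_i \in \{\pm 1\}$ under $\pi$. What remains is to verify that for each $i \in [h]$, $\pi$ sets at least two of the $\ell$ coordinates of $x^{(i)}$ to $1$, since this is precisely the condition forcing $\Threshold_{\ell,1}(x^{(i)}) \equiv 0$ on the unrestricted variables.

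I will argue this by contradiction. Let $i^\star \in [h]$ be the \emph{smallest} level at which $\pi$ sets at most one of the $x^{(i^\star)}$-coordinates to $1$. Because the threshold is already forced to $0$ at every level $i < i^\star$ (by minimality of $i^\star$) and all $y$-variables are fixed by $\pi$, one can write
\[ (f_{i^\star})_\pi \;=\; \begin{cases} c_{i^\star} & \text{if } \Threshold_{\ell,1}(x^{(i^\star)}) = 1, \\ \Tribes_r(z) & \text{otherwise,} \end{cases} \]
and this function genuinely depends on the free $x^{(i^\star)}$-coordinates. A direct influence calculation on this expression shows that each free $x^{(i^\star)}$-variable has influence of order at least $(\ell-1)/2^\ell$: the probability that flipping the coordinate flips the threshold is $\Omega((\ell-1)/2^{\ell-1})$ whenever at most one $x^{(i^\star)}$-coordinate is preset to $1$, and, conditioned on the threshold flipping, the value changes with probability $\Pr[c_{i^\star} \ne \Tribes_r(\bz)] = \Theta(1)$ by \Cref{fact:Tribes-properties}. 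On the other hand, each $z_j$ has influence in $(f_{i^\star})_\pi$ at most $\Inf_{z_j}(\Tribes_r) \le (1+o(1)) \ln r / r$, which by our choice of parameters~\eqref{eq:tribes small influence} is strictly smaller than the $x^{(i^\star)}$ influence just bounded.

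Now I apply \Cref{cor: subfunction most influential} with $i = i^\star$ to lift this strict inequality to the full restricted function: the same free $x^{(i^\star)}$-coordinate is strictly more influential than every $z$-variable inside $(f_h)_\pi$. But the node $v$ at the end of $\pi$ queries a $z$-variable, and by the definition of $\topdown{}$ this $z$-variable must be maximally influential in $(f_h)_\pi$---a contradiction. Hence $\pi$ sets at least two $x^{(i)}$-coordinates to $1$ for every $i \in [h]$, and $(f_h)_\pi \equiv \Tribes_r$. The main obstacle is the influence computation in the middle paragraph: both sub-cases $t \in \{0,1\}$ for the number of $x^{(i^\star)}$-coordinates that $\pi$ has preset to $1$ must be handled, and the constant-factor losses from multiplying by $\Pr[c_{i^\star} \ne \Tribes_r(\bz)]$ must not erase the gap furnished by~\eqref{eq:tribes small influence}.
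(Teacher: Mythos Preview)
Your approach is essentially the paper's: argue by contradiction, invoke \Cref{cor: subfunction most influential} to reduce to an influence comparison inside some $(f_i)_\pi$, and use \eqref{eq:tribes small influence} to finish. The one structural difference is that you take $i^\star$ to be the \emph{smallest} level where the threshold is not forced to $0$, whereas the paper takes $i^*$ to be the \emph{highest} level at which some $x^{(i)}$-variable remains relevant in $(f_h)_\pi$. Your choice is arguably cleaner: it guarantees $(f_{i^\star-1})_\pi \equiv \Tribes_r$ outright, so you avoid the ``$z$-dependent'' conditioning that the paper introduces precisely because, under its choice, lower levels may still carry undetermined thresholds and the common factor $\Pr[\text{$z$-dependent}]$ must be cancelled from both sides.

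Two small points to tighten. First, your stated threshold-flip probability $\Omega((\ell-1)/2^{\ell-1})$ is correct only in the $t=0$ sub-case; when $t=1$ (one $x^{(i^\star)}$-coordinate preset to $1$ and the other $\ell-2$ free), the flip probability is exactly $2^{-(\ell-2)}$, which is $\Theta(2^{-\ell})$ rather than $\Theta(\ell\cdot 2^{-\ell})$. This weaker bound still beats $(1+o(1))\ln r/r$ via \eqref{eq:tribes small influence} after multiplying by $\Pr[c_{i^\star}\ne\Tribes_r(\bz)]\approx\tfrac12$, so the argument survives---just state the bound correctly. Second, your definition of $i^\star$ (``at most one coordinate set to $1$'') also admits the degenerate case where $\Threshold_{\ell,1}(x^{(i^\star)})_\pi$ is forced to $1$ (e.g.\ at least $\ell-1$ coordinates preset to $0$); then $(f_{i^\star})_\pi\equiv c_{i^\star}$ is constant and there is no free $x^{(i^\star)}$-variable with positive influence to work with. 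This case is dispatched in one line---it would make every $z$-variable irrelevant in $(f_h)_\pi$, contradicting that $v$ queries one---but it should be said explicitly.
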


\begin{proof}
Suppose $(f_h)_{\pi} \not\equiv \Tribes_r$. Our proof of~\Cref{lem:long path} shows that $\pi$ contains every $y$-variable, so it must be the case that some $x$-variable remains relevant (i.e.~has nonzero influence) in $(f_h)_{\pi}$. Let $i^* \geq 1$ be the highest value of $i$ for which there is a relevant $x^{(i)}$-variable in $(f_h)_{\pi}$.  Assume without loss of generality that $x^{(i)}_1$ remains relevant, and that $z_1$ is that $z$-variable that is queried at $v$.   

Since $z_1$ is queried at the root of $(f_h)_\pi$, we have that it must be maximally influential in $(f_h)_\pi$, and in particular, 
\[ \Inf_{z_1}((f_h)_\pi) \ge \Inf_{x^{(i^*)}_1}((f_h)_\pi). \]  
Applying~\Cref{cor: subfunction most influential}, we infer that 
\begin{equation}  \Inf_{z_1}((f_{i^*})_\pi) \ge \Inf_{x^{(i^*)}_1}((f_{i^*})_\pi).\label{eq:blah}
\end{equation} 
Let us say that an input $(x,y,z)$ to $f_{i^*}$ is \emph{$z$-dependent} if
\[ \Threshold_{\ell,1}(x^{(i)}) = 0 \quad \text{for all $1 \le i \le i^*$}. \] 
Note that the output of $f_{i^*}$ on any $z$-dependent input is $\Tribes_r(z)$.  Since $\pi$ contains every $y$-variable, it fixes $\Parity_k(y^{(i^*)})$ to either $-1$ or $1$; we assume without loss of generality that $\Parity_k(y^{(i^*)})_\pi \equiv 1$.   We have that 
\begin{align*} 
\Inf_{x_1^{(i^*)}}((f_{i^*})_{\pi}) &\geq \Prx_{(\bx,\by,\bz)}[\text{$(\bx,\by,\bz)$ is $z$-dependent}] \cdot  \Prx_{\bz}[\Tribes_r(\bz) \ne 1] \\ 
& \quad \quad  \times  \Inf_{x_1^{(i^*)}}\big(\Threshold_{\ell,1}(x^{(i^*)})_\pi\big) \\
&\ge \Prx_{(\bx,\by,\bz)}[\text{$(\bx,\by,\bz)$ is $z$-dependent}] \cdot  \lfrac1{2}  \cdot 2^{-(\ell-1)} \\
&= \Prx_{(\bx,\by,\bz)}[\text{$(\bx,\by,\bz)$ is $z$-dependent}] \cdot 2^{-\ell}.
\end{align*}
The second inequality uses the fact that $\Inf_{x_1^{(i^*)}}(\Threshold_{\ell,1}(x^{(i^*)})_\pi) \ge 2^{-(\ell-1)}$, which holds with equality when exactly one other $x^{(i^*)}$-variable is in $\pi$ and that variable is set to $1$.\footnote{In this derivation, we have assumed that $\Tribes_r$ is {\sl perfectly} balanced, i.e.~that $\Pr[\Tribes_r(\bz) = 1] = \frac1{2}$, when in fact $\Pr[\Tribes_r(\bz)=1] = \frac1{2} \pm o(1)$ (recall~\Cref{fact:Tribes-properties}).  The same proof goes through if one carries around the additive $o(1)$ factor.}

On the other hand, we have that 
\begin{align*} 
\Inf_{z_1}((f_{i^*})_{\pi}) &\leq \Prx_{(\bx,\by,\bz)}[\text{$(\bx,\by,\bz)$ is $z$-dependent}]  \cdot  \Inf_{z_1}[\Tribes_r(z)] \\
&< \Prx_{(\bx,\by,\bz)}[\text{$(\bx,\by,\bz)$ is $z$-dependent}] \cdot \frac{2\ln r}{r}.  \tag*{(\Cref{fact:Tribes-properties})}
\end{align*}
These bounds on influences, along with~\Cref{eq:blah}, imply that $2^{-\ell} < \frac{2\ln r}{r}$.  This contradicts our assumption on the relationship between $\ell$ and $r$ (\Cref{eq:tribes small influence}), and the proof is complete. 
\end{proof}

We are now ready to lower bound the size of $T_{\mathrm{approx}}$.

\begin{claim}[Lower bound on the size of $T_{\mathrm{approx}}$]
\label{claim:approx-LB}
Fix $\eps \in (0,\frac1{2})$ and let $c = (\frac1{2}-\eps)/2$.  If 
\begin{align}
\left(1-\frac{\ell+1}{2^\ell}\right)^h &\ge (2+c)\eps, \label{eq:many reach Tribes}
\end{align} 
then $|V_\mathrm{approx}| \ge \Omega(\eps \cdot 2^{kh})$.  Consequently, the size of $T_{\mathrm{approx}}$ is also at least $\Omega(\eps \cdot 2^{kh})$. 
\end{claim}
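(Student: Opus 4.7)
\medskip

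\noindent\textbf{Proof plan for \Cref{claim:approx-LB}.}
The plan is to lower bound $|V_{\mathrm{approx}}|$ via a two-sided probability accounting: the aggregate probability mass of $V_\exact$ is known exactly, the mass that gets pruned away is controlled by the $\eps$-approximation guarantee, and any $v \in V_{\mathrm{approx}}$ is deep, so individual nodes can carry only a tiny amount of mass.

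First I would compute the total probability that a uniformly random input reaches some node of $V_\exact$. By construction, an input $(\bx,\by,\bz)$ reaches a node of $V_\exact$ precisely when the path to that node avoids all $x$-queries that would divert to a $\Parity_k$ leaf at every level---equivalently, when $\Threshold_{\ell,1}(\bx^{(i)}) = 0$ for all $i \in [h]$, which occurs with probability $(1-(\ell+1)/2^\ell)^h$. Since the paths leading to distinct elements of $V_\exact$ are disjoint,
\[ \sum_{v \in V_\exact} \Pr[\,\bx\text{ reaches }v\,] \;=\; \Big(1-\tfrac{\ell+1}{2^\ell}\Big)^{h}. \]

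Next, I would upper bound the probability mass of the pruned set $P \coloneqq V_\exact \setminus V_{\mathrm{approx}}$. Because $T_{\mathrm{approx}}$ is a pruning of $T_\exact$, each $v \in P$ lies strictly below some leaf $\tilde{u}$ of $T_{\mathrm{approx}}$, which is labelled by a constant in $\{\pm 1\}$. By \Cref{lem:err on half}, the subfunction of $f_h$ at $v$ equals $\Tribes_r$ on the (free) $z$-variables, and by \Cref{fact:Tribes-properties} this function is $\frac12 - o(1)$-balanced, so any constant label is wrong on at least a $\frac12 - o(1)$ fraction of the inputs routed to $\tilde{u}$ through $v$. Summing over the disjoint ``reach $v$'' events and using $\Pr[T_{\mathrm{approx}} \ne f_h] \le \eps$,
\[ \eps \;\ge\; \sum_{v \in P} \Pr[\,\bx \text{ reaches } v\,] \cdot \big(\tfrac12 - o(1)\big), \]
and hence $\sum_{v \in P} \Pr[\,\bx\text{ reaches }v\,] \le (2+o(1))\eps$.

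Subtracting from the total mass and invoking the hypothesis $(1-(\ell+1)/2^\ell)^h \ge (2+c)\eps$ gives
\[ \sum_{v \in V_{\mathrm{approx}}} \Pr[\,\bx \text{ reaches }v\,] \;\ge\; \big(1-\tfrac{\ell+1}{2^\ell}\big)^h \,-\, (2+o(1))\eps \;\ge\; \tfrac{c}{2}\,\eps, \]
for all sufficiently large $r$. By \Cref{lem:long path}, every $v \in V_{\mathrm{approx}} \sse V_\exact$ sits at depth at least $kh$ in $T_{\mathrm{approx}}$, so $\Pr[\,\bx\text{ reaches }v\,] \le 2^{-kh}$. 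Therefore
\[ |V_{\mathrm{approx}}| \;\ge\; \frac{c\eps/2}{2^{-kh}} \;=\; \Omega(\eps \cdot 2^{kh}). \]
Since $V_{\mathrm{approx}}$ consists of internal nodes of $T_{\mathrm{approx}}$, and any binary tree with $N$ internal nodes has $N+1$ leaves, the number of leaves of $T_{\mathrm{approx}}$---i.e.~its size---is also $\Omega(\eps \cdot 2^{kh})$.

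The only delicate point is the error accounting: one has to observe that a single pruned leaf $\tilde{u}$ may dominate multiple elements of $P$, but since the events ``$\bx$ reaches $v$'' for distinct $v \in V_\exact$ are disjoint, one may sum the lower bounds on error contributions over $v \in P$ without double-counting. Everything else is a direct combination of \Cref{lem:long path,lem:err on half,fact:Tribes-properties}.
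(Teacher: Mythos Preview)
Your proposal is correct and follows essentially the same approach as the paper: compute the total mass reaching $V_\exact$ as $(1-(\ell+1)/2^\ell)^h$, use \Cref{lem:err on half} to charge at least half of the mass at each pruned $v \in V_\exact \setminus V_{\mathrm{approx}}$ to the error of $T_{\mathrm{approx}}$, and then apply the depth bound from \Cref{lem:long path} to upper bound the per-node mass in $V_{\mathrm{approx}}$ by $2^{-kh}$. The paper collapses these steps into the single inequality $\eps \ge \tfrac12\big((2+c)\eps - |V_{\mathrm{approx}}|\cdot 2^{-kh}\big)$, whereas you split the accounting into ``mass of $P$'' and ``mass of $V_{\mathrm{approx}}$'' separately, but the content is identical.
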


\begin{proof}
An input to $f_h$ reaches \emph{some} node in $V_\exact$ if and only if $\Threshold_{\ell,1}(x^{(i)}) = 0$ for all $1\le i \le h$.  The fraction of inputs that satisfies this is exactly $(1-\frac{\ell+1}{2^\ell})^h$, which is at least $(2+c)\eps$ by our choice of parameters given by~\Cref{eq:many reach Tribes}.  

Fix $v\in V_\exact$. If $v \notin V_\mathrm{approx}$, then $T_{\mathrm{approx}}$ assigns all inputs reaching $v$ the same $-1$ or $+1$ value, whereas $f_h$ labels half of them $-1$ and half of them $+1$ (\Cref{lem:err on half}).  Therefore, $T_\mathrm{approx}$ errors on half of the inputs that each $v$.    On the other hand, if $v \in V_\mathrm{approx}$, we have by~\Cref{lem:long path} that at most a $2^{-kh}$ fraction of inputs reach this specific $v$.   Combining all of the above observations, it follows that 
\[ \error(T_\exact,T_{\mathrm{approx}}) \ge \lfrac{1}{2} \left((2 + c)\eps - |V_{\mathrm{approx}}|\cdot 2^{-kh}  \right).  \] 
Since $\error(T_\exact,T_{\mathrm{approx}}) \le \eps$, it follows that 
\[ \eps \ge \lfrac{1}{2} \left((2 + c)\eps - |V_{\mathrm{approx}}|\cdot 2^{-kh}  \right), \]
and the claim follows by rearranging.
\end{proof}

\Cref{thm:TD-lower}(b) now follows from \Cref{claim:opt-size-general-approx} and~\Cref{claim:approx-LB} by setting parameters appropriately: 

\begin{proof}[Proof of~\Cref{thm:TD-lower}(b)] 
Choosing 
\begin{align*}
h &= \Theta\hspace*{-3pt}\left(\frac{2^{\ell}}{\ell}\cdot \log(1/\eps)\right) \tag*{(to satisfy~\Cref{eq:many reach Tribes})} \\
r &= \Theta(\ell 2^\ell) \tag*{(to satisfy~\Cref{eq:tribes small influence})} \\ 
k &= \Theta(h\log \ell), 
\end{align*}
we may apply \Cref{claim:opt-size-general-approx} and~\Cref{claim:approx-LB} to get that  
\[ \size(f_h) \le 2^{O(k\log k)} \quad \text{whereas} \quad \TopDownDTsize(f,\eps) \ge 2^{\Omega_\eps(k^2/\log\log k)}. \]
This is a separation of $s$ versus $s^{\tilde{\Omega}(\log s)}$. 
\end{proof}

\begin{remark} 
For our choice of parameters above, we have that $s(n) = \size(f_h) =  2^{\tilde{\Theta}(\sqrt{n})}$, where $n  = h(\ell + k)+r$ is the number of  variables of $f_h$.  A standard padding argument yields the same $s$ versus $s^{\tilde{\Omega}(\log s)}$ separation for any function $s(n) \le 2^{\tilde{O}(\sqrt{n})}$.
\end{remark}

\section{Lower bounds on $\TopDownDTsize$ for monotone functions: Proof of~\Cref{thm:TD-lower-monotone}}

\label{sec:monotone-gain}

\subsection{Size separation for exact representation: Proof of~\Cref{thm:TD-lower-monotone}(a)} 

We will give a family of monotone functions, $\{f_h\}_{h \in \mathbb{N}}$ whose \topdown{} tree has exponential size compared to the optimal tree. First, we define a few terms which will be useful for our monotone constructions.
\begin{definition} [Comparing vectors and upper/lower shadows]
    For any $x,y \in \{0,1\}^n$, we use $x \preceq y$ to represent
    \begin{equation*}
        x \preceq y \iff x_i \leq y_i \text{ for all $i\in [n]$}
    \end{equation*}
    and $\succeq$ is defined similarly. For any vector $x$, the \emph{upper shadow} of $x$ is the set of all vectors $y$ such that $x \preceq y$. Similarly, the \emph{lower shadow} of $x$ is the set of all vectors $y$ such that $x \succeq y$.
\end{definition}
\pparagraph{Defining the family of functions witnessing the separation.} Each $f_h$ in $\{ f_h \}_{h\in \N}$ is a function over $5h+1$ boolean variables $x^{(1)},x^{(2)},\ldots,x^{(h)} \in \zo^4$, $ y^{(1)},\ldots,y^{(h)} \in \zo$, and $z\in \zo$, and is defined inductively as follows:   
\[f_0(z) = z, \] 
and for $h\ge 1$, we fix $x^* \coloneqq (0,0,1,1)$ and define 
\[ f_h(x,y,z) = 
\begin{cases}
f_{h-1}(x,y,z) & \text{if $x^{(h)} = x^*$} \\
+1 & \text{if $x^{(h)} \succeq x^*$ and $x^{(h)} \ne x^*$} \\
-1 & \text{if $x^{(h)} \preceq x^*$ and $x^{(h)} \ne x^*$} \\
y & \text{otherwise.} 
\end{cases}
\] 

It is straightforward to verify that $f_h$ is indeed monotone.  We will show that $f_h$ can be computed by a tree of size $O(h)$, but that \topdown{} produces a tree of size $2^{\Omega(h)}$.   For the first claim, we construct a decision tree for $f_h$ directly from its definition. We start with a complete tree on the $x^{(h)}$ variables---this complete tree has size $2^4$, a constant.  At one of the branches, we recursively build a tree for  $f_{h-1}$; at all the other branches, we build a tree of size $1$ or $2$ computing one of $-1, 1,$ or $y^{(h)}$.  The result is a tree of size $O(h)$. 

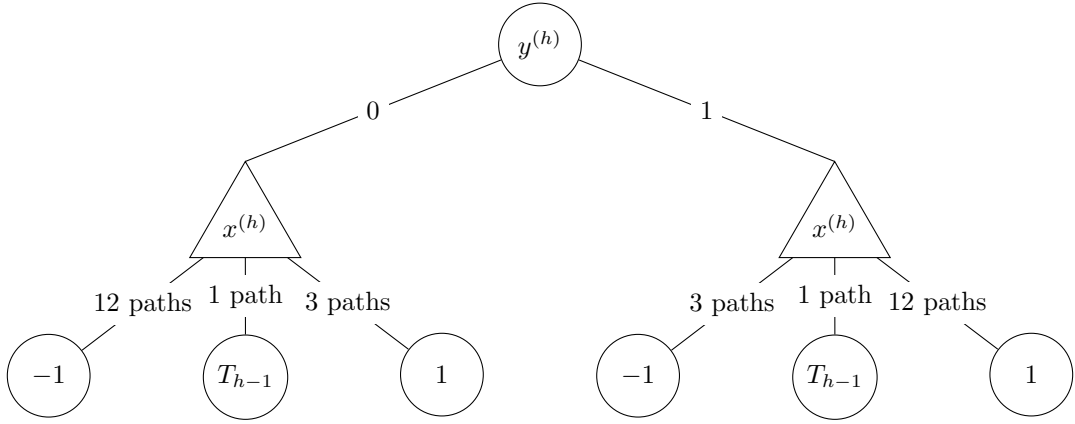
\begin{figure}[tb]
  \captionsetup{width=.9\linewidth}
    \centering
    \forestset{
  triangle/.style={
    node format={
      \noexpand\node [
      draw,
      shape=regular polygon,
      regular polygon sides=3,
      inner sep=0pt,
      outer sep=0pt,
      \forestoption{node options},
      anchor=\forestoption{anchor}
      ]
      (\forestoption{name}) {\foresteoption{content format}};
    },
    child anchor=parent,
  }
 }

\[ {\small
\!\begin{gathered}
\begin{forest}
for tree={
    grow=south,
    minimum size=11mm, l sep = 10mm, s sep = 15mm
}
[$y^{(h)}$, circle, draw
    [$x^{(h)}$, triangle, edge label = {node[midway, fill=white] {$0$}}
        [$-1$, circle, draw, edge label = {node[midway, fill=white] {$12$ paths}}]
        [$T_{h-1}$, circle, draw, edge label = {node[midway, fill=white] {$1$ path}}]
        [$1$, circle, draw, edge label = {node[midway, fill=white] {$3$ paths}}]
    ]
    [$x^{(h)}$, triangle, edge label = {node[midway, fill=white] {$1$}}
        [$-1$, circle, draw, edge label = {node[midway, fill=white] {$3$ paths}}]
        [$T_{h-1}$, circle, draw, edge label = {node[midway, fill=white] {$1$ path}}]
        [$1$, circle, draw, edge label = {node[midway, fill=white] {$12$ paths}}]
    ]
  ]
]    
\end{forest}
\end{gathered} }
\]
    \caption{The tree that \topdown{} builds for $f_h$. It will first query $y^{(h)}$, followed by the variables of $x^{(h)}$. For most choices of $y^{(h)}$ and $x^{(h)}$, the function is determined, and \topdown{} will place a constant leaf equal to $\pm 1$. However, the paths with $y = 0, x^{(h)} = x^*$ and $y=1, x^{(h)} = x^*$ each include a copy of the tree for $T_{h-1}$.}
    \label{fig:Monotone Exact Tree}
\end{figure}

On the other hand, we claim that \topdown{} will build a tree of size $2^{\Omega(h)}$, as depicted in \Cref{fig:Monotone Exact Tree}. In $f_h$, $y^{(h)}$ has influence $\frac{9}{16}$ and all the other variables have influence at most $\frac{1}{2}$. Hence, $y^{(h)}$ will be placed at root. Then, \topdown{} will query enough of $x^{(h)}$ to determine whether the output should be $-1, +1,$ or $f_{h-1}$. If the output should be $f_{h-1}$, which will occur once for each choice of $y$, then the entire tree $T_{h-1}$ will be placed. Hence, the size of $T_h$ is more than double the size of $T_{h-1}$, and \topdown{} builds a tree of size $2^{\Omega(h)}$.

\subsection{Size separation for approximate representation:~\Cref{thm:TD-lower-monotone}(b)} 
\label{section: monotone size seperation}
For any $\varepsilon$, we will prove there exists a function $f$ with optimal tree size $s$ but for which the tree $\topdown(f, \varepsilon)$ builds has size $s^{\tilde{\Omega}(\sqrt[4]{\log s})}$. The following function, a biased version of the $\Tribes$ function defined in~\Cref{def:Tribes}, will be used as a building block in our monotone construction.

\begin{definition}[Biased \textsc{Tribes}]
    Fix any input length $\ell$ and $\delta \in (0,1)$.   We define $\Tribes_{\ell,\delta} : \zo^\ell \to \{ \pm 1\}$ to be the read-once DNF with $\lfloor \frac{\ell}{w}\rfloor$ terms of width exactly $w$ over disjoint sets of variables (with some variables possibly left unused), where $w = w(\ell,\delta) \approx \log(\ell) \pm \log\log(1/\delta)$ is chosen such that $\Prx[\Tribes_{\ell,\delta}(\bx) = 1]$ is as close to $\delta$ as possible.\footnote{Although the acceptance probability of $\Tribes_{\ell,\delta}$ cannot be made {\sl exactly} $\delta$ due to granularity issues, it will be the case that $\Tribes_{\ell,\delta} = \delta \pm o(1)$.  For clarity, we will assume for the rest of this paper that the acceptance probability of $\textsc{Tribes}_\delta$ is exactly $\delta$, noting that all of our proofs still go through if one carries around the $o(1)$ factor.}  
\end{definition}

\begin{fact}[Variable influences in biased $\Tribes$]
\label{fact: biased tribes influence} 
    All variables in $\textsc{Tribes}_{\ell,\delta}$ and $\textsc{Tribes}_{\ell,1-\delta}$ have influence at most 
    \[ (2+o(1))\cdot \delta  \log(1/\delta)\cdot \frac{\log \ell}{\ell}.\] 
\end{fact}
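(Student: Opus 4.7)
The starting point is the following structural observation about read-once DNFs. If $f$ is a read-once DNF in which every variable appears in exactly one term and every term has width exactly $w$, then a variable $x_i$ is pivotal at an input $x$ iff (i) all the other $w-1$ variables in the term containing $x_i$ are set to $1$, and (ii) every other term evaluates to $-1$ (False). By read-onceness, these two events depend on disjoint sets of variables, so they are independent under the uniform distribution. Writing $t$ for the number of terms, this gives the exact formula
\begin{equation*}
\Inf_i(f) \;=\; 2 \cdot 2^{-w}\cdot (1-2^{-w})^{t-1}.
\end{equation*}
I will plug this formula in for both $\Tribes_{\ell,\delta}$ and $\Tribes_{\ell,1-\delta}$, where in each case $t = \lfloor \ell/w\rfloor$ and $w$ is chosen to pin down the acceptance probability.

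The harder case is $\Tribes_{\ell,1-\delta}$, where the defining equation is $(1-2^{-w})^t = \delta$. First I would use this directly to rewrite $(1-2^{-w})^{t-1} = \delta/(1-2^{-w}) = (1+o(1))\delta$, since $2^{-w}=o(1)$. Next, taking logarithms of $(1-2^{-w})^t=\delta$ and using $\log(1-x) = -x(1+o(1))$ for small $x$, I obtain $t\cdot 2^{-w} = (1+o(1))\log(1/\delta)$, which together with $t\le \ell/w$ gives
\begin{equation*}
2^{-w} \;\le\; (1+o(1))\,\frac{w\log(1/\delta)}{\ell}.
\end{equation*}
Plugging both estimates into the formula yields $\Inf_i \le (2+o(1))\,\delta\log(1/\delta)\cdot w/\ell$. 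Finally, the equation $2^w = (1+o(1))\,\ell/(w\log(1/\delta))$ implies the crude bound $w \le (1+o(1))\log\ell$, and substituting this gives exactly the claimed bound $(2+o(1))\,\delta\log(1/\delta)\log\ell/\ell$.

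The case $\Tribes_{\ell,\delta}$ is easier. Here $(1-2^{-w})^t = 1-\delta$, and for $\delta\le \tfrac12$ (which we may assume, since the interesting regime of the fact has $\delta$ small), the same logarithmic expansion gives $t\cdot 2^{-w} = (1+o(1))\log(1/(1-\delta)) = (1+o(1))\delta$, hence $2^{-w}\le (1+o(1))\delta w/\ell$. Using the trivial bound $(1-2^{-w})^{t-1}\le 1$ in the influence formula and the same estimate $w\le (1+o(1))\log\ell$ yields $\Inf_i \le (2+o(1))\,\delta\log\ell/\ell$, which is already stronger than the claimed bound since $\log(1/\delta)\ge 1$.

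The only genuine subtlety is the careful tracking of the $(1+o(1))$ factors that show up in three places at once (when approximating $\log(1-2^{-w})$, when converting $(1-2^{-w})^{t-1}$ into $(1-2^{-w})^{t}\cdot (1-2^{-w})^{-1}$, and when bounding $w$ by $\log\ell$) so that they collapse into the single $(2+o(1))$ prefactor in the statement; beyond this bookkeeping the argument is a short computation.
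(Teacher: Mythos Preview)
Your approach is essentially identical to the paper's: write down the exact influence formula $\Inf_i = 2\cdot 2^{-w}(1-2^{-w})^{t-1}$ for a read-once DNF, then use the defining equation $(1-2^{-w})^t \in \{\delta,1-\delta\}$ to estimate both $(1-2^{-w})^{t-1}$ and $2^{-w}$ separately. The paper only spells out the $\Tribes_{\ell,1-\delta}$ case and waves at the other one, whereas you treat both; your $\Tribes_{\ell,1-\delta}$ computation matches the paper's line for line.

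One small gap in your $\Tribes_{\ell,\delta}$ case: you reuse ``the same estimate $w \le (1+o(1))\log\ell$'' from the $\Tribes_{\ell,1-\delta}$ analysis, but the two functions have \emph{different} widths. For $\Tribes_{\ell,\delta}$ the equation is $w2^w \approx \ell/\delta$ (not $\ell/\log(1/\delta)$), so $w \approx \log(\ell/\delta) = \log\ell + \log(1/\delta)$, which is not $(1+o(1))\log\ell$ unless $\log(1/\delta) = o(\log\ell)$. Your conclusion is still fine, though: from $\Inf_i \le (2+o(1))\delta w/\ell$ and $w \le \log\ell + \log(1/\delta) \le \log\ell \cdot \log(1/\delta)$ (valid once $\ell \ge 4$, $\delta \le 1/4$) you recover the stated bound, just not the stronger $(2+o(1))\delta\log\ell/\ell$ you claimed. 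So the fix is to replace ``$w \le (1+o(1))\log\ell$'' with ``$w \le \log\ell + \log(1/\delta) \le \log\ell\log(1/\delta)$'' in that paragraph.
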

\begin{proof}
We prove the lemma for the case of $\Tribes_{\ell,1-\delta}$. (The calculations for $\Tribes_{\ell,\delta}$ are very similar, and both claims are special cases of more general facts about variable influences in DNF formulas~\cite{ST13}.)  Suppose 
\[ \Tribes_{\ell,1-\delta}(x) = T_1(x) \vee \cdots T_{ \frac{\ell}{w}}(x), \] 
where the $T_i$'s are disjoint terms of width exactly $w$.   We first observe that since 
\begin{align*}
    \delta = \Prx_{\bx \sim \zo^n}[ \Tribes_{\ell,1-\delta}(\bx) =1 ] &= \Pr[\,\text{all $T_i(\bx)$ are falsified by $\bx$}\,] \\
    &= (1-2^{-w})^{\ell/w} \approx e^{-\ell/w2^w},
\end{align*}  
we have that $w = (1\pm o(1))(\log \ell-\log\log\ell - \log\log(1/\delta))$. The influence of any variable $i \in [n]$ on $\Tribes_{\ell,1-\delta}$ is the probability, over a uniform $\bx$ that each other variable $j$ in $i$'s term has $\bx_j = 1$ and all other clauses evaluate to 0 under $\bx$: 
\begin{align*}
\Inf_i(\Tribes_{\ell,1-\delta}) &= 2^{-(w-1)} \cdot (1-2^{-w})^{(\ell/w)-1} \\
&\le 2\delta \cdot 2^{-w} \\
&= (1\pm o(1))\cdot 2\delta\log(1/\delta)\cdot \frac{\log \ell}{\ell}.  \qedhere 
\end{align*}
    \end{proof}

\pparagraph{Defining the family of functions witnessing the separation.} Each $f_h$ in the family $\{ f_h\}_{h\in \N}$ is a function over $h(2\ell + k) + r$ boolean variables $x^{(1,1)},x^{(1,2)},\ldots, x^{(h,1)},x^{(h,2)} \in \zo^\ell$, $y^{(1)},\ldots,y^{(h)}\in \zo^k$, and $z\in \zo^r$, and is defined inductively as follows: 
\[ f_0(z) = \Tribes_r(z), \] 
and for $h\ge 1$, 
\[ f_h(x,y,z) = \begin{cases}
-1 & \text{if $\Tribes_{\ell,\delta}(x^{(h,1)}) = \Tribes_{\ell,1-\delta}(x^{(h,2)}) = 0$} \\
f_{h-1}(x,y,z) & \text{if $\Tribes_{\ell,\delta}(x^{(h,1)}) = 0 \text{ and } \Tribes_{\ell,1-\delta}(x^{(h,2)}) = 1$} \\
\Maj_k(y^{(h)}) & \text{if $\Tribes_{\ell,\delta}(x^{(h,1)}) = 1 \text{ and } \Tribes_{\ell,1-\delta}(x^{(h,2)}) = 0$} \\
+1 & \text{otherwise.} 
\end{cases}
\] 
 Clearly $f_h$ is monotone in $x^{(h,1)}$ and $x^{(h,2)}$. Furthermore, since each of the functions $-1, +1$, and $\Maj_k(y^{(h)}),$ are monotone, if $f_{h-1}$ is monotone then so is $f_h$.

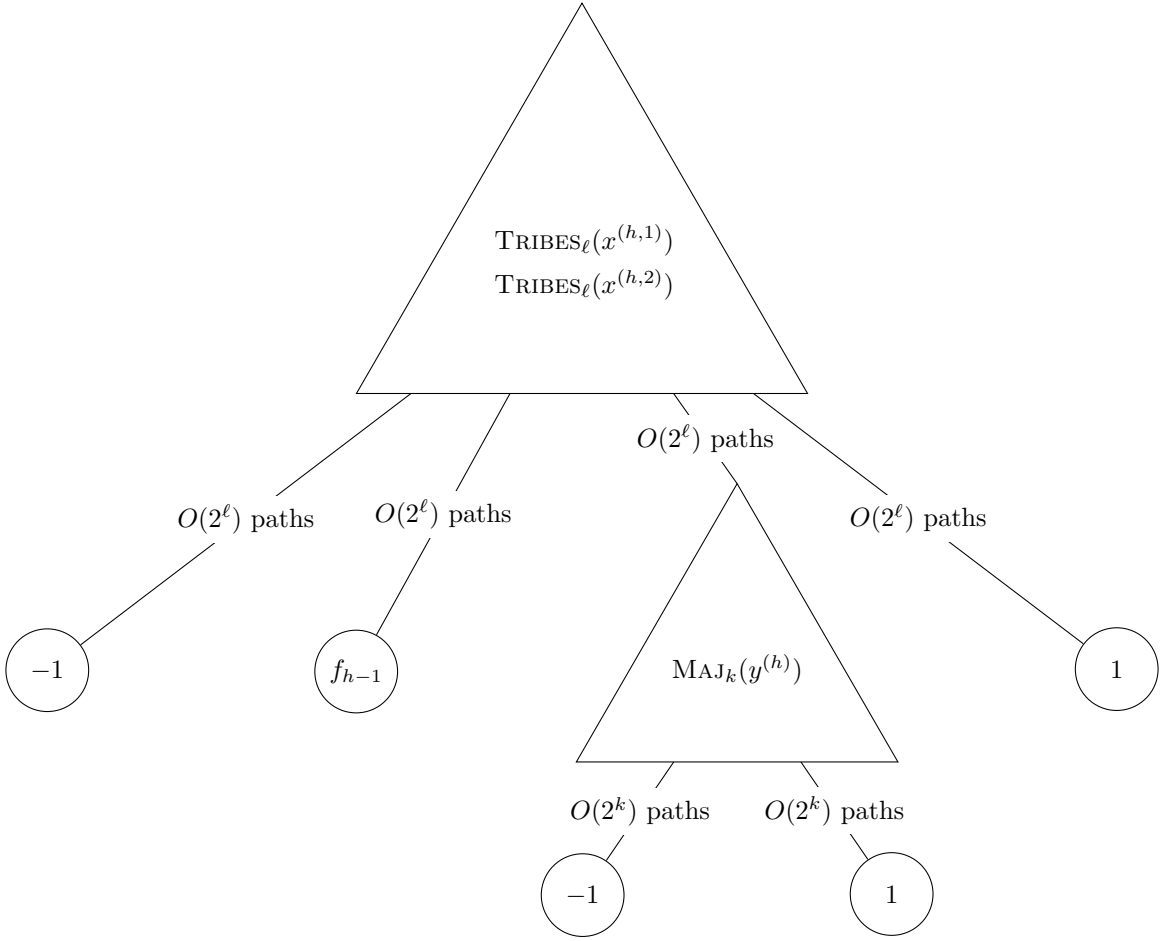
\begin{figure}[tb]
  \captionsetup{width=.9\linewidth}
    \centering
    \forestset{
  triangle/.style={
    node format={
      \noexpand\node [
      draw,
      shape=regular polygon,
      regular polygon sides=3,
      inner sep=0pt,
      outer sep=0pt,
      \forestoption{node options},
      anchor=\forestoption{anchor}
      ]
      (\forestoption{name}) {\foresteoption{content format}};
    },
    child anchor=parent,
  }
 }

\[ {\small
\!\begin{gathered}
{\small
\begin{forest}
for tree={
    grow=south,
    minimum size=11mm, l sep = 12mm, s sep = 30mm
}
 [$\begin{aligned}
    \Tribes_\ell(x^{(h,1)}) \\
    \Tribes_\ell(x^{(h,2)})
    \end{aligned}$, triangle
    [$-1$, circle, draw, edge label = {node[midway, fill=white] {$O(2^\ell)$ paths}}]
    [$f_{h-1}$, circle, draw, edge label = {node[midway, fill=white] {$O(2^\ell)$ paths}}]
    [$\Maj_k(y^{(h)})$, triangle, draw, edge label = {node[midway, fill=white] {$O(2^\ell)$ paths}}
        [$-1$, circle, draw, edge label = {node[midway, fill=white] {$O(2^k)$ paths}}]
        [$1$, circle, draw, edge label = {node[midway, fill=white] {$O(2^k)$ paths}}]
    ]
    [$1$, circle, draw, edge label = {node[midway, fill=white] {$O(2^\ell)$ paths}}]
]
\end{forest}
}
\end{gathered} }
\]
    \caption{A small decision tree that computes $f_h$}
    \label{fig:Monotone approx optimal tree}
\end{figure}

\begin{claim}[Optimal size of $f_h$]
    \label{claim: monotone optimal size}
    Choose any integers $\ell,h,r,k > 0$ and let
    Then, $f_{h,\ell, k,r}$ has optimal decision tree size 
 \begin{align*}
         \size(f_h) &\le (\size(\Tribes_{\ell,\delta})\cdot \size(\Tribes_{\ell,1-\delta}))^{O(h)} \cdot (\size(\Maj_k) + \size(\Tribes_r)) \\
         &\le 2^{O(h\cdot \ell\log\log \ell/\log \ell)}\cdot (2^k + 2^{O(r\log\log r/\log r)}). \tag*{(\Cref{fact:Tribes-properties})}
 \end{align*}
\end{claim}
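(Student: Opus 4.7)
The plan is to build a decision tree for $f_h$ that directly mirrors the recursive definition of the function. At the top of the tree I will first decide which of the four branches of the case analysis applies by evaluating $\Tribes_{\ell,\delta}(x^{(h,1)})$ and then $\Tribes_{\ell,1-\delta}(x^{(h,2)})$; once both values are known I will splice in the appropriate subtree (a constant leaf, a recursively-built tree for $f_{h-1}$, or an optimal tree for $\Maj_k(y^{(h)})$). The whole argument is then a one-line recurrence plus the size bounds on the building blocks.

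\medskip

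\noindent\textbf{Explicit construction.} Let $S_1 \coloneqq \size(\Tribes_{\ell,\delta})$ and $S_2 \coloneqq \size(\Tribes_{\ell,1-\delta})$. I start with an optimal decision tree $T^{(1)}$ for $\Tribes_{\ell,\delta}(x^{(h,1)})$, which has $S_1$ leaves. At each leaf of $T^{(1)}$ I graft a copy of an optimal decision tree $T^{(2)}$ for $\Tribes_{\ell,1-\delta}(x^{(h,2)})$, which has $S_2$ leaves. The resulting tree has at most $S_1 S_2$ leaves, and at every such leaf both $\Tribes_{\ell,\delta}(x^{(h,1)})$ and $\Tribes_{\ell,1-\delta}(x^{(h,2)})$ are determined. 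According to which of the four cases in the definition of $f_h$ is triggered, I attach either a constant leaf (size $1$), a recursively constructed tree of size $\size(f_{h-1})$, or an optimal tree for $\Maj_k$ of size $\size(\Maj_k)$. This yields
\begin{equation*}
    \size(f_h) \;\le\; S_1 S_2 \cdot \bigl(\size(f_{h-1}) + \size(\Maj_k) + 2\bigr),
\end{equation*}
with base case $\size(f_0) = \size(\Tribes_r)$.

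\medskip

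\noindent\textbf{Unrolling and plugging in parameters.} Iterating the recurrence $h$ times gives
\begin{equation*}
    \size(f_h) \;\le\; (S_1 S_2)^h \size(\Tribes_r) + \sum_{i=1}^{h} (S_1 S_2)^{i}\bigl(\size(\Maj_k) + 2\bigr) \;\le\; (S_1 S_2)^{O(h)} \cdot \bigl(\size(\Maj_k) + \size(\Tribes_r)\bigr),
\end{equation*}
which is the first inequality in the claim. To obtain the second inequality, I use the standard bound $\size(\Maj_k) \le 2^k$ and the Tribes bound $\size(\Tribes_r) \le 2^{O(r\log\log r/\log r)}$ from \Cref{fact:Tribes-properties}. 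For $S_1$ and $S_2$ I need the analogous bound for the biased Tribes: since $\Tribes_{\ell,\delta}$ and $\Tribes_{\ell,1-\delta}$ are read-once DNFs with $\lfloor \ell/w\rfloor$ disjoint terms of width $w = \log\ell \pm \log\log(1/\delta) \pm O(1)$, a straightforward term-by-term decision tree yields $S_1, S_2 \le w^{O(\ell/w)} = 2^{O(\ell\log\log\ell/\log\ell)}$ (the $\delta$-dependent terms in $w$ are absorbed into the $O(\cdot)$ for any constant $\delta$). Combining these ingredients gives $(S_1 S_2)^{O(h)} \le 2^{O(h\ell\log\log\ell/\log\ell)}$, completing the bound.

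\medskip

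\noindent\textbf{Expected obstacles.} None of the steps are subtle; the entire claim is a routine unrolling of a natural recursive construction. The only mild care required is in confirming that the decision tree complexity of biased Tribes admits the same $2^{O(\ell\log\log\ell/\log\ell)}$ bound as the (roughly balanced) unbiased case, which follows because the width $w$ changes only by an additive $O(\log\log(1/\delta))$.
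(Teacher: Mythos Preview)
Your proposal is correct and follows essentially the same approach as the paper: build the tree by first evaluating $\Tribes_{\ell,\delta}(x^{(h,1)})$, then $\Tribes_{\ell,1-\delta}(x^{(h,2)})$, and at each resulting leaf splice in the appropriate constant, $\Maj_k$ tree, or recursively built $f_{h-1}$ tree; then unroll the recurrence. Your writeup is in fact more explicit than the paper's, which simply points to the corresponding figure. One small caveat: your remark that the $\delta$-dependent contribution to $w$ is absorbed ``for any constant $\delta$'' is slightly imprecise, since in the eventual parameter setting $\delta$ tends to $0$ with $\ell$; however, because $\log\log(1/\delta) = O(\log\log\ell)$ for the $\delta$ actually used, $w = \Theta(\log\ell)$ still holds and your bound $w^{O(\ell/w)} = 2^{O(\ell\log\log\ell/\log\ell)}$ goes through unchanged.
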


\begin{proof}
As in the proofs of the previous separations, this upper bound is witnessed by the natural decision tree that one builds by following the definition of $f_h$.  This tree first evaluates $\Tribes_{\ell,\delta}(x^{(h,1)})$ followed by $\Tribes_{\ell,1-\delta}(x^{(h,2)})$, resulting in a tree of size $(\size(\Tribes_{\ell,\delta})\cdot \size(\Tribes_{\ell,1-\delta}))$.  At the end of each branch, we either recursively build a tree for $f_{h-1}$, or a tree for $\Maj_k(y^{(h)})$, or place constants $\{\pm 1\}$ as leaves.  Please refer to~\Cref{fig:Monotone approx optimal tree}. 
\end{proof}

The remainder of this section is devoted to lower bounding $\TopDownDTsize(f_h,\eps)$, the size of the tree $T_{\mathrm{approx}}$ that $\BuildTopDownDT$ constructs to $\eps$-approximate $f_h$.

\subsubsection{``Mostly precedes"}

By choosing parameters appropriately, we will ensure that when $T_{\mathrm{approx}}$ begins by querying the variables of $\Maj_k(y^{(h)})$.  The first technical challenge that arises is the following: unlike $\Parity_k(y^{(h)})$ in our proof of~\Cref{thm:TD-lower}(b), the influence of variables in $\Maj_k(y^{(h)})$ changes as variables are queried. For example, the influence of the remaining variables of $\Maj_k(y^{(h)})$ after $\frac{k}{2}$ variables have been queried is $0$ if all of the queried variables are $1$ and is $\Theta(\frac{1}{\sqrt{k}})$ if half of the variables queries are $0$ and half are $1$. Hence, in $T_{\mathrm{approx}}$, the number of nodes from $y^{(h)}$ queried before some non-$y^{(h)}$-variable is queried will vary by path.  (In other words, the analogue of~\Cref{lem:long path} in the proof of~\Cref{thm:TD-lower}(b) is somewhat trickier to establish.)  

To handle this, we define the following notion, which will allow us to show that {\sl most} $y^{(h)}$-variables are before other variables in {\sl most} paths of the tree (\Cref{cor: whee}). 
\begin{figure}[tb]
    \centering
      \captionsetup{width=.9\linewidth}
    \forestset{
  triangle/.style={
    node format={
      \noexpand\node [
      draw,
      shape=regular polygon,
      regular polygon sides=3,
      inner sep=0pt,
      outer sep=0pt,
      \forestoption{node options},
      anchor=\forestoption{anchor}
      ]
      (\forestoption{name}) {\foresteoption{content format}};
    },
    child anchor=parent,
  }
 }

\[{\small
\!\begin{gathered}
{\small
\begin{forest}
for tree={
    grow=south,
    minimum size=11mm, l sep = 12mm, s sep = 30mm
}
[$\Maj_k(y^{(h)})$, triangle, draw
    [$\begin{aligned}
    \Tribes_{\ell}(x^{(h,1)}) \\
    \Tribes_{\ell}(x^{(h,2)})
    \end{aligned}$, triangle, edge label = {node[midway, fill=white] {$2^{\Omega(k/\log(k))}$ paths}}
        [$-1$, circle, draw, edge label = {node[midway, fill=white] {$O(2^\ell)$ paths}}]
        [$T_{h-1}$, circle, draw, edge label = {node[midway, fill=white] {$O(2^\ell)$ paths}}]
        [$1$, circle, draw, edge label = {node[midway, fill=white] {$O(2^\ell)$ paths}}]
    ]
]
\end{forest}
}
\end{gathered} }
\]
    \caption{With appropriately chosen parameters, the $y^{(h)}$-variables are the most influential in $f_h$, so the tree built by \topdown{} will query them first.  Our analysis shows that this leads to a tree of size $2^{\Omega(kh/\log k)}$ (cf.~\Cref{fig:Monotone approx optimal tree}).}
    \label{fig:Monotone approx TD tree}
\end{figure}
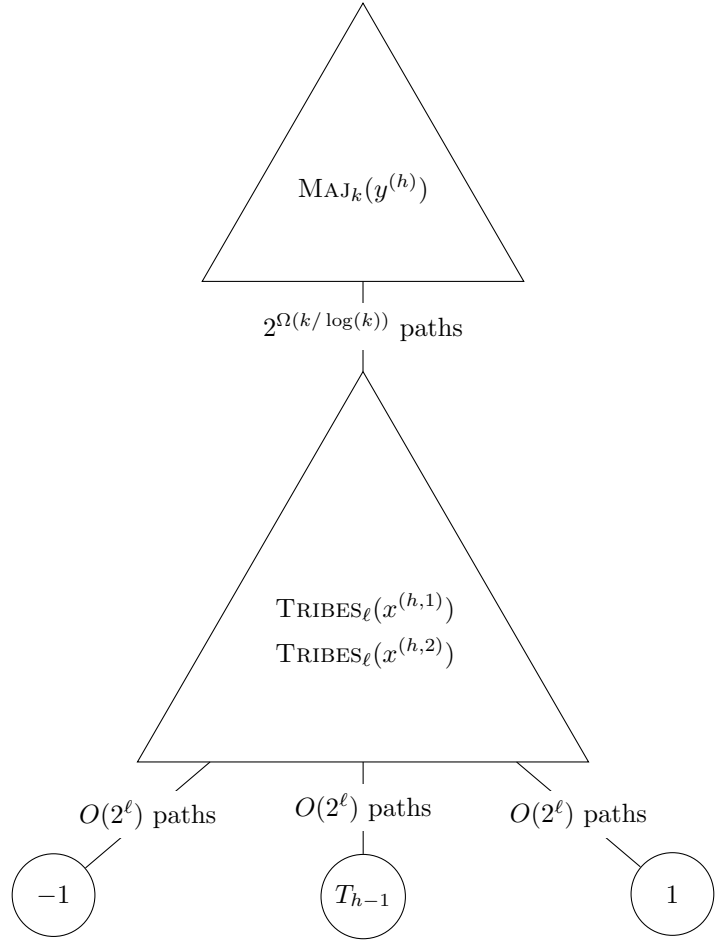

\begin{definition}[Mostly precedes]
\label{def:mostly preceds} 
    Let $S$ be a subset of the relevant variables of $f_h$.  We say that \emph{$y^{(i)}$-variables mostly precede $S$ in $T_{\mathrm{approx}}$} if for every path $\pi$ in $T_{\mathrm{approx}}$ leading to a first query to a variable in~$S$, and every $j\in [k]$, 
    \[ \Inf_{j}\big(\Maj_k(y^{(i)})_\pi\big) \le \frac1{100\sqrt{k}}. \] 
\end{definition}
(For some intuition behind~\Cref{def:mostly preceds}, we note that {\sl pre-restriction}, the influences of variables in $\Maj_k$ are given by: 
\[ \Inf_j(\Maj_k) = \frac1{k} \cdot {k \choose \lfrac{k}{2}} \sim \frac{\sqrt{2/\pi}}{\sqrt{k}} \quad \text{for all $j\in [k]$,}   \]
which is significantly larger than the $\frac1{100\sqrt{k}}$ of~\Cref{def:mostly preceds}.)
With~\Cref{def:mostly preceds} in hand, we now begin to formalize the structure of $T_{\mathrm{approx}}$ as depicted in~\Cref{fig:Monotone approx TD tree}.  For each $i\in [h]$, we define 
\[ R_i = \{  \text{$x^{(i,1)},x^{(i,2)}$, and $z$ variables}\}. 
\]

 \begin{lemma} 
 There is a universal constant $c$ such that the following holds.  Suppose 
  \begin{equation} \frac{c}{\sqrt{k}} \ge \frac1{\delta^2}\cdot  \max\left\{  \frac{\delta \log(1/\delta)\log \ell}{\ell}, \frac{\log r}{r}\right\}.  \label{eq: majority most influential} 
  \end{equation} 
 Then for all $i \in [h]$, we have that $y^{(i)}$-variables mostly precede $R_i$ in $T_{\mathrm{approx}}$.
 \end{lemma}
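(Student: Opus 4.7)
The plan is to argue by contradiction at each $i$ separately (no induction needed). Fix $i\in [h]$ and any path $\pi$ in $T_\mathrm{approx}$ ending at the first query to some variable $v \in R_i$. Suppose toward contradiction that some $y^{(i)}_j$ satisfies $\Inf_j(\Maj_k(y^{(i)})_\pi) > 1/(100\sqrt{k})$. Since \topdown{} selected the variable $v$ at this node, $v$ must be at least as influential as any other variable in $(f_h)_\pi$, giving $\Inf_v((f_h)_\pi) \ge \Inf_{y^{(i)}_j}((f_h)_\pi)$. The remainder of the proof shows that this inequality is incompatible with the parameter hypothesis.

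Next, I factor both influences using the restriction-friendly preservation-of-influence-order principle (\Cref{lemma: subfunction most influential}, \Cref{cor: subfunction most influential}), exploiting that each relevant variable appears in $f_h$ only through one designated subfunction. For $j>i$ let $E_j \coloneqq \{\Tribes_{\ell,\delta}(x^{(j,1)}) = 0\} \cap \{\Tribes_{\ell,1-\delta}(x^{(j,2)}) = 1\}$ be the event of recursing from $f_j$ into $f_{j-1}$, and let $E_i^\star \coloneqq \{\Tribes_{\ell,\delta}(x^{(i,1)}) = 1\} \cap \{\Tribes_{\ell,1-\delta}(x^{(i,2)}) = 0\}$ be the event that the $\Maj_k(y^{(i)})$ branch is taken at level $i$. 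Writing $P \coloneqq \prod_{j>i}\Pr_\pi[E_j]$, the factorization yields
\[
    \Inf_{y^{(i)}_j}((f_h)_\pi) \;=\; \Pr_\pi[E_i^\star]\cdot P\cdot \Inf_j(\Maj_k(y^{(i)})_\pi).
\]
Crucially, $\pi$ avoids $R_i \ni x^{(i,1)}, x^{(i,2)}$, so these variables are completely unrestricted by $\pi$, whence $\Pr_\pi[E_i^\star] = \delta^2$ exactly. Combined with the contradictory assumption, this gives $\Inf_{y^{(i)}_j}((f_h)_\pi) > \delta^2 P/(100\sqrt{k})$.

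An analogous case-analysis of $v$ produces the same $P$-factor on the other side. If $v = z_m$ then the $z$-variables only appear through $\Tribes_r$ at the base of the recursion, reached iff every $E_j$ holds, so
\[
\Inf_{z_m}((f_h)_\pi) \;=\; (1-\delta)^2 \cdot P \cdot \prod_{j<i}\Pr_\pi[E_j] \cdot \Inf_m(\Tribes_r) \;\le\; P\cdot O(\log r/r),
\]
where the last step uses $\prod_{j<i}\Pr_\pi[E_j] \le 1$ and \Cref{fact:Tribes-properties}. If $v = x^{(i,a)}_m$ for $a\in\{1,2\}$, then $v$ only appears in $\Tribes_{\ell,\star}(x^{(i,a)})$, and the ``output-change'' factor (the probability that flipping $\Tribes_{\ell,\star}(x^{(i,a)})$ actually changes the value of $f_i$) is at most $1$, giving $\Inf_{x^{(i,a)}_m}((f_h)_\pi) \le P\cdot\Inf_m(\Tribes_{\ell,\star}) \le P\cdot O(\delta\log(1/\delta)\log\ell /\ell)$ by \Cref{fact: biased tribes influence}. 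The key observation is that $P$ appears as a common factor on both sides.

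Canceling $P$ from $\Inf_v((f_h)_\pi) \ge \Inf_{y^{(i)}_j}((f_h)_\pi)$ reduces the inequality to
\[
\max\!\left\{\frac{\log r}{r},\;\frac{\delta\log(1/\delta)\log\ell}{\ell}\right\} \;\ge\; \Omega\!\left(\frac{\delta^2}{\sqrt{k}}\right),
\]
or equivalently $\tfrac{1}{\delta^2}\cdot\max\{\cdots\} \ge \Omega(1/\sqrt{k})$. For a sufficiently small universal constant $c$, this directly contradicts the parameter hypothesis and completes the proof. The main technical obstacle is making the three influence factorizations fully rigorous under a nontrivial restriction $\pi$---in particular verifying that $y^{(i)}$ and $z$ variables appear only in their designated subfunctions and that the output-change factor in the $x^{(i,a)}_m$ case is bounded by $1$. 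The convenient $1/\delta^2$ in the hypothesis is tuned precisely to absorb the $\delta^2$ branch-probability in the $y^{(i)}_j$ factorization; no hypothesis on the product $P$ is needed because it appears identically on both sides and cancels.
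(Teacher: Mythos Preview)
Your proof is correct and follows essentially the same approach as the paper. The paper invokes \Cref{cor: subfunction most influential} to transfer the inequality $\Inf_v((f_h)_\pi)\ge\Inf_{y^{(i)}_j}((f_h)_\pi)$ down to $(f_i)_\pi$ and then computes $\Inf_{y^{(i)}_j}((f_i)_\pi)=\delta^2\cdot\Inf_j(\Maj_k(y^{(i)})_\pi)$ and $\Inf_v((f_i)_\pi)\le\max\{\ldots\}$ there; your explicit factorization with the common factor $P=\prod_{j>i}\Pr_\pi[E_j]$ is exactly the content of that corollary's proof, so canceling $P$ is the same step in slightly different packaging. One small point worth making explicit: the cancellation requires $P>0$, which holds because $v$ is an internal node of $T_{\mathrm{approx}}$ and hence $\Inf_v((f_h)_\pi)>0$.
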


 \begin{proof} 
 Fix $i\in [h].$ Let $\pi$ be a path in $T_{\mathrm{approx}}$ that leads to a first query to a variable in $v \in R_i$. Since $v$ is maximally influential in $(f_h)_\pi$, we may apply~\Cref{cor: subfunction most influential} to infer that $v$ is also maximally influential in $(f_i)_\pi$ (and in particular, $v$ is more influential than any $y^{(i)}$ variable).  We have that 
 \begin{align*}
     \Inf_v((f_i)_\pi) &\le \max\big\{ \Inf_j(\Tribes_{\ell,\delta}(x^{(i,1)})), \Inf_j(\Tribes_{\ell,1-\delta}(x^{(i,2)})),\Inf_j(\Tribes_r(z))\big\} \\
     &\le \max \left\{(2+o(1))\cdot \delta \log(1/\delta) \cdot \frac{\log \ell}{\ell}, (1+o(1))\cdot \frac{\ln r}{r} \right\}. \tag*{(\Cref{fact:Tribes-properties} and \Cref{fact: biased tribes influence})}
 \end{align*} 
 On the other hand, for any $j\in [k]$, 
 \begin{align*}  \Inf_{y^{(i)}_j}((f_i)_\pi) &= \Pr\big[\text{$\Tribes_{\ell,\delta}(\bx^{(h,1)}) = 1 \text{ and } \Tribes_{\ell,1-\delta}(\bx^{(h,2)}) = 0$} \big]\cdot \Inf_{j}\big(\Maj_k(y^{(i)})_\pi\big) \\
 &= \delta^2 \cdot \Inf_{j}\big(\Maj_k(y^{(i)})_\pi\big). 
 \end{align*} 
 Since $\Inf_{y^{(i)}_j}((f_i)_\pi) \le \Inf_v((f_i)_\pi)$, the bounds above imply that  
 \[ \Inf_{j}\big(\Maj_k(y^{(i)})_\pi\big) \le \frac1{\delta^2}\cdot \max \left\{(2+o(1))\cdot \delta \log(1/\delta) \cdot \frac{\log \ell}{\ell}, (1+o(1))\cdot \frac{\ln r}{r} \right\}. \] 
 The lemma follows: by choosing $c$ to be a sufficiently small constant in~\Cref{eq: majority most influential}, we can ensure that  $\Inf_{j}\big(\Maj_k(y^{(i)})_\pi\big) \le \frac1{100\sqrt{k}}$. 
 \end{proof}

\begin{lemma}
\label{lem: haha} 
    There is a universal constant $c$ such that the following holds.  Fix $i\in [h]$ and consider a uniform random $\by^{(i)}\in \zo^k$.  The probability there is an input $u$ to $f_h$ consistent with $\by^{(i)}$ such that $T_{\mathrm{approx}}$, on input $u$, queries an $R_i$-variable before a querying at least $ck/\log k$ many $\by^{(i)}$-variables is $O(k^{-2})$.  
\end{lemma}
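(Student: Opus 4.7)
The plan is to combine the \emph{mostly precedes} property established just above with a martingale concentration argument on the partial sums of $\by^{(i)}$ along paths of $T_{\mathrm{approx}}$.

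First, I translate mostly precedes into a quantitative imbalance bound. Fix a path $\pi_v$ in $T_{\mathrm{approx}}$ that reaches a first $R_i$-query after exactly $j_v < ck/\log k$ many $y^{(i)}$-queries, and let $s_v$ denote the sum of the $y^{(i)}$-responses along $\pi_v$. By the previous lemma, every variable of $\Maj_k(y^{(i)})_{\pi_v}$ has influence at most $1/(100\sqrt{k})$. Since $\Maj_k(y^{(i)})_{\pi_v}$ is the threshold function $\mathrm{Thr}_{k-j_v,\,t_v}$ with $t_v = \lceil k/2\rceil - s_v$, each variable's influence equals $\binom{k-j_v-1}{t_v-1}/2^{k-j_v-1}$. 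Applying Stirling (equivalently the local CLT for the binomial) to this ratio, the inequality $\binom{k-j_v-1}{t_v-1}/2^{k-j_v-1} \leq 1/(100\sqrt{k})$ forces $|t_v - (k-j_v)/2| \geq c_1 \sqrt{k-j_v}$ for an absolute constant $c_1>0$. Rewriting $t_v - (k-j_v)/2 = j_v/2 - s_v$ and using $j_v < ck/\log k$, this gives $|s_v - j_v/2| \geq c_1' \sqrt{k}$ for a constant $c_1'>0$.

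Next, I view the sequence of $y^{(i)}$-responses as a simple random walk via the principle of deferred decisions. For uniformly random $\by^{(i)}$ and \emph{any} assignment to the non-$y^{(i)}$ coordinates, as the tree traverses any fixed path, the successive $y^{(i)}$-queries ask about distinct coordinates of $\by^{(i)}$, and so the responses are i.i.d.\ $\mathrm{Bernoulli}(1/2)$. Hence the partial sum $S_t$ of the first $t$ many $y^{(i)}$-responses along any path makes $(S_t - t/2)_{t\le J_{\max}}$ a martingale with increments bounded by $1/2$, where $J_{\max} = ck/\log k$. By Doob's maximal inequality,
\[
    \Pr_{\by^{(i)}}\!\left[\max_{t \leq J_{\max}} |S_t - t/2| \;\geq\; c_1'\sqrt{k}\right]
    \;\leq\; 2\exp\!\left(-\frac{2(c_1')^2 k}{J_{\max}}\right)
    \;=\; 2\exp\!\left(-\frac{2(c_1')^2 \log k}{c}\right)
    \;=\; O(k^{-2})
\]
provided $c$ is chosen a sufficiently small constant so the exponent exceeds $2\ln k$.

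Finally, I tie these two steps to the lemma's existential statement. If there is any input $u$ consistent with $\by^{(i)}$ for which $T_{\mathrm{approx}}(u)$ reaches an $R_i$-query after $j < J_{\max}$ many $y^{(i)}$-queries, then by the first step the responses along $u$'s path must have $|s - j/2| \geq c_1'\sqrt{k}$. Since those responses are precisely the values of $\by^{(i)}$ at the $j$ coordinates adaptively queried along $u$'s path, and since (as noted) these are i.i.d.\ $\mathrm{Bernoulli}(1/2)$ for uniform $\by^{(i)}$ regardless of how the non-$y^{(i)}$ branches of $T_{\mathrm{approx}}$ are resolved, the walk $(S_t - t/2)$ associated with this adaptive sequence must attain deviation at least $c_1'\sqrt{k}$ at some time $t \leq J_{\max}$. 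The Doob bound above therefore controls the probability that such a $u$ exists.

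The main obstacle is the last step: carefully transferring from the ``single random walk'' Doob bound to the \emph{existential} over $u$ (equivalently, over adversarial settings of non-$y^{(i)}$ branches). I handle this by exploiting the fact that the $y^{(i)}$-responses are i.i.d.\ uniform irrespective of the adversary's strategy, so the adversary only influences \emph{which} coordinates are queried at each step, not the distribution of responses; hence the deviation condition required to hit a bad first-$R_i$-query is captured by the maximum deviation of the underlying simple random walk, which is exactly what Doob's inequality bounds. The constant $c$ is chosen small enough that this yields the claimed $O(k^{-2})$ bound.
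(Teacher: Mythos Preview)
Your argument tracks the paper's closely: convert ``mostly precedes'' into the statement that any path reaching an early first-$R_i$-query must have $\Omega(\sqrt{k})$ imbalance among its $y^{(i)}$-answers, then bound the probability of such an imbalance. You invoke Doob's maximal inequality where the paper writes the union bound $\sum_{t\le ck/\log k}\Pr\big[|\mathrm{Bin}(t,\tfrac12)-\tfrac{t}{2}|\ge\Omega(\sqrt{k})\big]$; the two devices are interchangeable here.

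You rightly single out the existential over $u$ as the delicate step---the paper simply writes down its bound without addressing it. Your resolution, however, does not close the gap. You argue that because the adversary only controls \emph{which} $y^{(i)}$-coordinates are queried (not the responses), the answers remain an i.i.d.\ walk and Doob applies. That reasoning would be valid for an adversary whose strategy is adapted to the observed responses, but the existential in the lemma hands the adversary all of $\by^{(i)}$ up front: if the tree structure allowed the adversary, via its non-$y^{(i)}$ answers, to steer toward different subsets of $y^{(i)}$-coordinates, it could simply cherry-pick a high-discrepancy subset of $\by^{(i)}$, and the responses along that chosen path are then certainly not a fair random walk over $\by^{(i)}$. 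What actually makes both proofs go through---left implicit in the paper and missing from your argument---is the symmetry of the $y^{(i)}$-coordinates in $\Maj_k$: at every node all unqueried $y^{(i)}$-variables are tied in influence, so under any fixed tie-breaking rule the $t$-th $y^{(i)}$-query along \emph{every} root-to-node path hits the same coordinate. Hence the adversary has no control over which coordinates are probed, there is genuinely a single walk $\by^{(i)}_1,\by^{(i)}_2,\ldots$ underlying all paths, and your Doob bound (equivalently the paper's sum over $t$) then legitimately controls the existential.
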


\begin{proof}
Fix an outcome $y^{(i)}$ of $\by^{(i)}$.  Suppose that there is an input $u$ consistent with $y^{(i)}$ such that $T_{\mathrm{approx}}$, on input $u$, queries an $R_i$-variable after querying only $< ck/\log k$ many $y^{(i)}$-variables.  Call such a $y^{(i)}$ outcome {\sl bad}, and let $\pi$ denote the corresponding path in $T_{\mathrm{approx}}$ that leads to the first query to an $R_i$-variable.  Since $y^{(i)}$-variables mostly precede $R_i$ in $T_{\mathrm{approx}}$, we have that 
\[ \Inf_j\big(\Maj_k(y^{(i)})_\pi\big) \le \frac1{100\sqrt{k}}\quad \text{for all $j\in [k]$.} \] 
For this to hold, it must be the case that among the $t < ck/\log k$ many $y^{(i)}$-variables that occur in $\pi$, the discrepancy between the number of $0$'s and $1$'s is $\Omega(\sqrt{k})$.  We can therefore bound 
\begin{align*} 
\Prx_{\by^{(i)}\in \zo^k}[\text{$\by^{(i)}$ is bad}] &\le \sum_{t=1}^{ck/\log k} \Prx_{\bb \sim \mathrm{Bin}(t,\frac1{2})}\big[|\bb - \lfrac{t}{2}| \ge \Omega(\sqrt{k})\big]  \\
&\le 
\sum_{t=1}^{ck/\log k} e^{-\Theta(k/t)} \tag*{(Hoeffding's inequality)} \\
&\le \frac{ck}{\log k} \cdot e^{-\Theta((\log k)/c)} 
\ll \frac1{k^2},   
\end{align*} 
where the final inequality holds by choosing $c$ to be a sufficiently small constant. 
\end{proof}

By a union bound over $i \in [h]$, we have the following corollary of~\Cref{lem: haha} (which can be thought of as being roughly analogous to~\Cref{lem:long path} in the proof of~\Cref{thm:TD-lower}(b)): 

\begin{corollary}[Most queries to $z$-variables are deep within $T_{\mathrm{approx}}$] 
\label{cor: whee} 
Let $\by = (\by^{(1)},\ldots,\by^{(k)})$ be uniform random.  The probability that there is an input $u$ to $f_h$ consistent with $\by$ such that $T_{\mathrm{approx}}$, on input~$u$, queries a $z$-variable before querying at least $(ck/\log k)\cdot h$ many $\by$-variables is $O(h/k^2)$. 
\end{corollary}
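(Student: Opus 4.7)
The plan is to reduce the corollary directly to \Cref{lem: haha} by a union bound over $i \in [h]$. The key observation is that $z$-variables belong to every set $R_i$ (since $R_i$ consists of the $x^{(i,1)}$, $x^{(i,2)}$, and $z$ variables), so any query to a $z$-variable on an input $u$ is in particular a query to some variable in $R_i$ for every $i\in [h]$.

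First, for each $i\in [h]$, call $\by^{(i)} \in \zo^k$ \emph{$i$-bad} if there exists an input $u$ to $f_h$ consistent with $\by^{(i)}$ such that $T_{\mathrm{approx}}$, on input $u$, queries a variable in $R_i$ after having queried fewer than $ck/\log k$ many $\by^{(i)}$-variables. By \Cref{lem: haha}, for each fixed $i$ we have
\[
\Prx_{\by^{(i)}}[\,\by^{(i)} \text{ is $i$-bad}\,] \le O(k^{-2}).
\]
A union bound over $i \in [h]$ then gives
\[
\Prx_{\by}\bigl[\,\exists\, i \in [h] \text{ such that } \by^{(i)} \text{ is $i$-bad}\,\bigr] \le O(h/k^2).
\]

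Next I would argue that the complementary event implies the statement we want. Suppose $\by = (\by^{(1)},\ldots,\by^{(h)})$ is such that no coordinate block $\by^{(i)}$ is $i$-bad, and let $u$ be any input consistent with $\by$ for which $T_{\mathrm{approx}}$ queries some $z$-variable. Let $\pi$ be the path in $T_{\mathrm{approx}}$ up to and including the first $z$-query on input $u$. Since $z$-variables lie in $R_i$ for every $i$, the prefix of $\pi$ up to the first $R_i$-query (which is no later than this first $z$-query) must, by the non-$i$-badness of $\by^{(i)}$, contain at least $ck/\log k$ many queries to $\by^{(i)}$-variables. Summing these disjoint contributions over $i \in [h]$ yields at least $(ck/\log k)\cdot h$ queries to $\by$-variables strictly before the first $z$-query on $u$, which is the desired conclusion. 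Combining with the union bound above completes the proof.

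I do not expect any real obstacle here: the only thing to check carefully is that the $\by^{(i)}$-queries counted across different $i$'s are disjoint (which is immediate because the blocks $\by^{(1)},\ldots,\by^{(h)}$ are disjoint sets of variables) and that the "first $R_i$-query on $\pi$" is guaranteed to exist whenever a $z$-query on $\pi$ exists (which is immediate since $z \in R_i$). Both are trivial, so the proof is essentially a one-line union bound on top of \Cref{lem: haha}.
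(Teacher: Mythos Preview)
Your proposal is correct and is exactly the approach the paper takes: the paper states the corollary as an immediate consequence of \Cref{lem: haha} ``by a union bound over $i\in[h]$,'' and your write-up simply spells out that union bound together with the disjointness of the $\by^{(i)}$-blocks and the observation that $z$-variables lie in every $R_i$.
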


We are finally ready to lower bound the size of $T_{\mathrm{approx}}$: 

\begin{claim}[Lower bound on the size of $T_{\mathrm{approx}}$]
    \label{claim: monotone approx TD is large}
    Fix $\varepsilon \in (0,\frac1{2})$ and let $c = (\frac1{2} - \eps)/2$.  If
    \begin{align} 
    (1-\delta)^{2h} &\ge (2+c)\eps  \label{eq: lots reach tribes}  \\
    h &\le k, \label{eq: bound on h}  
    \end{align} 
    then the size of $T_{\mathrm{approx}}$ is at least $2^{\Omega(hk/\log k)}$. 
\end{claim}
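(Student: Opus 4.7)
The plan is to adapt the strategy of~\Cref{claim:approx-LB} from the non-monotone setting to this monotone construction, using the ``mostly precedes'' machinery developed above in place of the exact path-length bound of~\Cref{lem:long path}. I introduce the \emph{Tribes event}
\[
E \coloneqq \Big\{\Tribes_{\ell,\delta}(x^{(i,1)}) = 0 \text{ and } \Tribes_{\ell,1-\delta}(x^{(i,2)}) = 1 \text{ for every } i \in [h]\Big\}.
\]
By the recursive definition of $f_h$, every input in $E$ satisfies $f_h(x,y,z) = \Tribes_r(z)$, and the independence of the levels together with the biased acceptance probabilities gives $\Pr[E] = (1-\delta)^{2h} \ge (2+c)\eps$ by~\Cref{eq: lots reach tribes}.

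Next I define the ``shallow-path'' event $B$ to be the event that the root-to-leaf path taken by $(\bx,\by,\bz)$ in $T_{\mathrm{approx}}$ does not query any $z$-variable. On $E \cap B$, the true label equals $\Tribes_r(\bz)$ while the leaf label is a constant; since such a path constrains only $x$- and $y$-variables, $\bz$ remains uniform conditional on reaching the leaf, and $\Tribes_r$ is $(\tfrac12 \pm o(1))$-balanced by~\Cref{fact:Tribes-properties}. Hence inputs in $E \cap B$ contribute total error at least $(\tfrac12-o(1))\Pr[E \cap B]$. Since $\error(T_{\mathrm{approx}}, f_h) \le \eps$, this gives $\Pr[E \cap B] \le 2\eps + o(1)$, and therefore
\[
\Pr[E \setminus B] \;\ge\; (1-\delta)^{2h} - 2\eps - o(1) \;\ge\; \tfrac{c}{2}\eps.
\]

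I then invoke~\Cref{cor: whee}: with probability at least $1 - O(h/k^2)$ over $\by$, every input consistent with that $\by$ whose $T_{\mathrm{approx}}$-path queries a $z$-variable must first query $\Omega(hk/\log k)$ many $y$-variables. The hypothesis $h \le k$ from~\Cref{eq: bound on h} makes the failure probability $O(1/k)$, which is negligible compared with $\tfrac{c}{2}\eps$ for any constant $\eps$. Combining this with the previous paragraph, at least a $\tfrac{c}{4}\eps$ fraction of all inputs reach leaves of $T_{\mathrm{approx}}$ at depth $\Omega(hk/\log k)$. Since every leaf at depth $d$ is reached by exactly a $2^{-d}$ fraction of inputs, the number of such deep leaves---and hence the size of $T_{\mathrm{approx}}$---is at least $\tfrac{c}{4}\eps \cdot 2^{\Omega(hk/\log k)} = 2^{\Omega(hk/\log k)}$.

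The only delicate step is the error accounting in the second paragraph: it is crucial that ``$E$ holds and no $z$-query appears on the path'' really does force the $z$-marginal at each affected leaf to remain uniform, so that $\Tribes_r(\bz)$ disagrees with the constant leaf label on close to half of the $E$-inputs reaching that leaf. Once this is in hand, everything else reduces to a direct invocation of~\Cref{cor: whee} together with the standard ``few deep leaves cover little probability mass'' counting bound.
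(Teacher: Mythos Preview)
Your proposal is correct and follows essentially the same approach as the paper's proof. The only organizational difference is that the paper conditions on $\by$ and works with the indicator $\zeta(\by)$ that \emph{some} $\by$-consistent input triggers a $z$-query, whereas you partition directly over full inputs via the events $E$ and $B$; both routes feed the same error-vs-depth tradeoff into~\Cref{cor: whee} and finish with the identical counting argument that a constant mass of inputs reaching depth $\Omega(hk/\log k)$ forces $2^{\Omega(hk/\log k)}$ leaves.
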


\begin{proof} 
  We will call an input to $f_h$ \emph{$z$-dependent} if it satisfies:
    \begin{align*}
        \Tribes_{\ell,\delta}(x^{(i,1)}) = 0\text{ and }\Tribes_{\ell,1-\delta}(x^{(i,2)}) = 1 \text{ for all $i \in [h]$.}
    \end{align*}
    Note that the output of $f_h$ on any  $z$-dependent input is $\Tribes_r(z)$.  
    Let us define $\zeta(y^{(1)},\ldots,y^{(h)})$ to be the $\zo$-valued indicator of whether there is an input $u$ consistent with $y^{(1)},\ldots,y^{(h)}$ such that $T_{\mathrm{approx}}$ on input $u$ queries a $z$-variable.  For any fixed $y = (y^{(1)},\ldots,y^{(h)})$, 
    \begin{itemize} 
    \item[$\circ$]
    If $\zeta(y) = 0$, then $T_{\mathrm{approx}}$ must assign the same $-1$ or $+1$ value to all $z$-dependent inputs consistent with $y$; 
    \item[$\circ$] The fraction of $z$-dependent inputs that are consistent with $y$ is 
     \[ \Pr\big[ \Tribes_{\ell,\delta}(\bx^{(i,1)}) = 0\text{ and }\Tribes_{\ell,1-\delta}(\bx^{(i,2)}) = 1 \text{ for all $i \in [h]$}\big] = (1-\delta)^{2h}, \] 
    which is at least $(2+c)\eps$ by~\Cref{eq: lots reach tribes}. Furthermore, since output of $f_h$ on any  $z$-dependent input is $\Tribes_r(z)$, among the $z$-dependent inputs that are consistent with $y$, we have that $f_h$ labels half of them $-1$ and half of them $+1$.  
    \end{itemize} 
    Therefore, 
    \[ \error(T_{\mathrm{approx}},f_h) \ge \lfrac1{2}\cdot (2+c)\eps \cdot \Pr[\zeta(\by)=0]. \] 
    Since $\error(T_{\mathrm{approx}},f_h) \le \eps,$ it follows that $\Pr[\zeta(\by)=1] \ge  \Omega(1)$.   Next, applying~\Cref{cor: whee} along with our assumption that $h \le k$ (\Cref{eq: bound on h}), we further have that 
    \[ \Prx_{\by}\Big[ \text{$\exists\,$ $\by$-consistent $u$ s.t.~$T_{\mathrm{approx}}(u)$ queries $z$-variable after $\ge \frac{ck}{\log k}\cdot h$ many $\by$-variables}\Big] \ge \Omega(1). \] 
    On the other hand, for any fixed path $\pi$ in $T_{\mathrm{approx}}$ that queries $\ge \Omega(kh/\log k)$ many $y$-variables, at most a $2^{-\Omega(kh/\log k)}$ fraction of $\by$'s can be consistent with this specific $\pi$.  We conclude that the size of $T_{\mathrm{approx}}$ must be at least 
    $2^{\Omega(kh/\log k)}$, and the proof is complete. 
\end{proof}

\Cref{thm:TD-lower-monotone}(b) now follows from~\Cref{claim: monotone optimal size} and~\Cref{claim: monotone approx TD is large} by setting parameters appropriately: 
\begin{proof}[Proof of~\Cref{thm:TD-lower-monotone}(b)]
    By choosing
    \begin{align*}
        \delta &= \Theta\left(\sqrt[3]{(\log \ell)^{4} \log(1/\eps)/\ell}\right) \\
        k &= \Theta\left(\sqrt[3]{\ell^{4} \log(1/\eps)^{2}/(\log \ell)^4}\right) \\
        r &= \Theta(k) \\
        h &= \Theta\left(\frac1{\delta}\cdot \log(1/\eps)\right),
    \end{align*}
    we satisfy~\Cref{eq: majority most influential,eq: lots reach tribes,eq: bound on h}.  We may therefore apply~\Cref{claim: monotone optimal size} to get that the optimal size of $f_h$ is upper bounded by: 
    \[ \size(f_h) \le \exp\left(O(\sqrt[3]{\ell^{4} \log(1/\eps)^{2}/(\log \ell)^4})\right). \] 
    On the other hand, by~\Cref{claim: monotone approx TD is large}, we have that 
    \[ \TopDownDTsize(f_h,\eps) \ge 2^{\Omega(kh/\log k)} = \exp\left(\Omega(\sqrt[3]{\ell^{5}\log(1/\eps)^4/(\log \ell)^{11}})\right). \] 
    This is a separation of $s$ versus $s^{\tilde{\Omega}(\sqrt[4]{\log(s)})}$. 
\end{proof} 

\begin{remark} 
For our choice of parameters above, we have that $s(n) = \size(f_h) =  2^{\tilde{\Theta}(n^{4/5})}$, where $n  = h(2\ell + k) + r$ is the number of  variables of $f_h$.  A standard padding argument yields the same $s$ versus $s^{\tilde{\Omega}(\sqrt[4]{\log s})}$ separation for any function $s(n) \le 2^{\tilde{O}(n^{4/5})}$. 
\end{remark}

\begin{remark}[Depth separation]
\label{remark: depth separation}
The same proof witnesses a separation of $d$ versus $\tilde{\Omega}_\eps(d^{5/4})$ for the optimal {\sl depth} of $f_h$ versus the depth of the tree that $\BuildTopDownDT(f_h,\eps)$ builds.  This disproves the conjecture of Fiat and Pechyony~\cite{FP04} discussed in~\Cref{sec: our results}, which states that $\BuildTopDownDT$ builds a tree of optimal depth for all monotone functions, even in the case of exact representation ($\eps = 0$). 
\end{remark}

\subsection{Lower bounds for all impurity-based heuristics}

\begin{proposition}[Splitting on the most influential variable of a monotone function maximizes purity gain] 
\label{prop: impurity gain is influence}
Let $f : \{\pm 1\}^n \to \{ \pm 1\}$ be a monotone boolean function.\footnote{For this proof, for notational reasons it will be slightly more convenient  for us to work with $\{\pm 1\}^n$ instead of $\zo^n$ as the domain of $f$.}  Let $\mathscr{G} : [-1,1] \to [0,1]$ be a concave function that is symmetric around $0$, and satisfies $\mathscr{G}(-1) = \mathscr{G}(1) = 0$ and $\mathscr{G}(0) = 1$.   Suppose $i \in [n]$ maximizes: 
 \begin{equation} \mathscr{G}(\E[f]) - \lfrac1{2}(\mathscr{G}(\E[f_{x_i={-1}}])+ \mathscr{G}(\E[f_{x_i=1}])), \label{eq:purity-gain}
 \end{equation} 
 Then $\E[f(\bx)\bx_i] \ge \E[f(\bx)\bx_j]$ for all $j\in [n]$. 
\end{proposition}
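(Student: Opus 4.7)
\medskip
\noindent\textbf{Proof proposal.} The plan is to show that maximizing the purity gain of (\ref{eq:purity-gain}) over $i\in[n]$ is equivalent to maximizing $\E[f(\bx)\bx_i]$, from which the conclusion is immediate.

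\medskip
\noindent\emph{Step 1 (Reparametrize).} For each $i$, write $a_i \coloneqq \E[f_{x_i=-1}]$ and $b_i \coloneqq \E[f_{x_i=1}]$. Two observations.
\begin{enumerate}
\item[(a)] $a_i+b_i = 2\E[f]$ for every $i$, so this quantity does not depend on $i$.
\item[(b)] $\E[f(\bx)\bx_i] = \tfrac{1}{2}(b_i - a_i)$, and since $f$ is monotone, $b_i \ge a_i$.
\end{enumerate}
Consequently, maximizing the purity gain (\ref{eq:purity-gain}) over $i$ is the same as \emph{minimizing} $\mathscr{G}(a_i)+\mathscr{G}(b_i)$ over $i$, subject to the constraint $a_i+b_i=2\E[f]$ and $b_i\ge a_i$.

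\medskip
\noindent\emph{Step 2 (Reduce to a one-variable fact about $\mathscr{G}$).} Let $\mu\coloneqq \E[f]$, and for $d\ge 0$ set
\[ \Phi(d) \coloneqq \mathscr{G}\!\bigl(\mu - \tfrac{d}{2}\bigr) + \mathscr{G}\!\bigl(\mu + \tfrac{d}{2}\bigr). \]
Writing $d_i \coloneqq b_i - a_i = 2\E[f(\bx)\bx_i] \ge 0$, we have $\mathscr{G}(a_i)+\mathscr{G}(b_i) = \Phi(d_i)$. Thus it suffices to prove that $\Phi$ is \emph{nonincreasing} on $[0,\infty)$: then the $i^*$ minimizing $\Phi(d_i)$ is precisely the one with the largest $d_i$, i.e.\ the one with largest $\E[f(\bx)\bx_{i^*}]$, which is exactly the claim of the proposition.

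\medskip
\noindent\emph{Step 3 (Monotonicity of $\Phi$).} This is a standard consequence of concavity (``Schur-concavity''): as $d$ increases, the pair $(\mu-\tfrac{d}{2},\mu+\tfrac{d}{2})$ spreads symmetrically around the fixed midpoint $\mu$, and a concave function's values summed over such a pair can only decrease. Concretely, for $0 \le d_1 \le d_2$, the point $\mu \pm \tfrac{d_1}{2}$ is a convex combination of $\mu \pm \tfrac{d_2}{2}$, namely
\[ \mu - \tfrac{d_1}{2} = \lambda\bigl(\mu - \tfrac{d_2}{2}\bigr) + (1-\lambda)\bigl(\mu + \tfrac{d_2}{2}\bigr),\quad \mu + \tfrac{d_1}{2} = (1-\lambda)\bigl(\mu - \tfrac{d_2}{2}\bigr) + \lambda\bigl(\mu + \tfrac{d_2}{2}\bigr), \]
with $\lambda = \tfrac{1}{2}+\tfrac{d_1}{2d_2}\in[\tfrac{1}{2},1]$. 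Applying concavity of $\mathscr{G}$ to each term and summing gives $\Phi(d_1)\ge \Phi(d_2)$, so $\Phi$ is nonincreasing.

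\medskip
\noindent\emph{What's the main obstacle?} There is no real obstacle; the proposition is essentially a repackaging of the fact that a concave symmetric $\mathscr{G}$ is Schur-concave, combined with the elementary identity $\E[f(\bx)\bx_i]=\tfrac{1}{2}(\E[f_{x_i=1}]-\E[f_{x_i=-1}])$ and the sign constraint supplied by monotonicity. The only thing to be careful about is that $\mathscr{G}$ is not assumed differentiable, so one should argue via the two-point convex combination above rather than via $\Phi'(d)\le 0$. The symmetry and boundary conditions on $\mathscr{G}$ in the hypothesis are not needed here; only concavity is used.
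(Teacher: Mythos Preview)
Your proof is correct and follows essentially the same route as the paper's: both reduce to the observation that $a_i+b_i=2\E[f]$ is fixed, so maximizing the purity gain amounts to maximizing the spread $b_i-a_i$ by concavity of $\mathscr{G}$, and then monotonicity of $f$ identifies this spread with $2\E[f(\bx)\bx_i]\ge 0$. Your Step~3 spells out the concavity (Schur-concavity) argument via an explicit convex combination, whereas the paper simply asserts ``again by concavity, this difference increases with the difference between $\E[f_{x_i=1}]$ and $\E[f_{x_i=-1}]$''; and your remark that only concavity (not symmetry or the boundary values of $\mathscr{G}$) is used is a correct sharpening of the hypotheses.
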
 

\begin{proof} 
For all functions $f$, not necessarily monotone, $\E[f]$ is precisely the average of $\E[f_{x_i=0}]$ and $\E[f_{x_i=1}]$. Because $\mathscr{G}$ is concave everywhere on its domain, Jensen's inequality ensures that $\mathscr{G}(\E[f])$ is greater than $\lfrac1{2}(\mathscr{G}(\E[f_{x_i=0}])+ \mathscr{G}(\E[f_{x_i=1}])$. Furthermore, again by concavity, we have that this difference increases with the difference between $\E[f_{x_i=1}]$ and $\E[f_{x_i=-1}]$. Therefore the variable $i\in [n]$ that maximizes purity gain (\ref{eq:purity-gain}) also maximizes $|\E[f_{x_i=1}] - \E[f_{x_i={-1}}]|$. 

For a monotone function $f : \{ \pm 1\}^n \to \{\pm 1\}$, we have the following identity for all variables $j\in [n]$: 
\begin{align*} 
\Inf_j(f) &= \Pr[f(\bx) \ne f(\bx^{\oplus j})] \\
&= \Ex[f(\bx)\bx_j] \\
&= \E[f_{x_j=1}] - \E[f_{x_j=-1}].
\end{align*} 
Thus, the variable that maximizes purity gain (\ref{eq:purity-gain}) is also the most influential variable of~$f$.
\end{proof}

Recall that in~\Cref{thm:TD-lower-monotone-any}, we claimed that our lower bound on $\TopDownDTsize(f_h,\eps)$ holds not just for the specific algorithm $\BuildTopDownDT$, but in fact {\sl any} impurity-based top-down heuristic.  To see this, note that in our proof of~\Cref{thm:TD-lower-monotone}(b) described in~\Cref{section: monotone size seperation}, we never used any information about {\sl which} leaf $\BuildTopDownDT$ chooses to split on at each stage, only that when a leaf {\sl is} split, it is replaced by the most influential variable of the corresponding subfunction.  In other words, just like our proof of~\Cref{thm:TD-lower}(b), our proof of~\Cref{thm:TD-lower-monotone}(b) applies not just to the specific tree build by $\BuildTopDownDT(f_h,\eps)$; it in fact lower bounds the size of {\sl any} pruning $T_{\mathrm{approx}}$ of $T_\exact = \BuildTopDownDT(f,\eps = 0)$ that is an $\eps$-approximator to $f_h$. 

By~\Cref{prop: impurity gain is influence}, any tree build by a impurity-based top-down heuristic is a pruning of $T_\exact$, and hence our proof of~\Cref{thm:TD-lower-monotone}(b) extends to establish~\Cref{thm:TD-lower-monotone-any}.

\section{New proper learning algorithms: Proofs of~\Cref{thm:learn-general} and~\Cref{thm:learn-monotone}}
\label{sec:learn}

Recall that \topdown{} builds an approximation to $f$ iteratively. It starts with an empty bare tree $T^\circ$ and repeatedly replaces the leaf with the highest score with a query to that leaf's most influential variable. In section \Cref{subsection: lower bound on score}, we proved lower bounds on the score of the leaf that \topdown{} selects. Using those lower bounds, in section \Cref{subsection: upper bound proofs}, we are able to prove upper bounds on the size of the tree \topdown{} needs to produce to guarantee at most $\varepsilon$ error. If, instead, we only guaranteed that we would pick a leaf with score a fourth of that guaranteed by the lower bounds in \Cref{subsection: lower bound on score}, all of our upper bounds would still hold, up to constant factors in the exponent. In this section, we will show that it is possible to accurately enough estimate influences to guarantee we pick a leaf with score at least a fourth the maximum score. First, we provide a definition of score that takes into account both the leaf and the variable selected.

\begin{definition}[score]
    Given any function $f$, we define the score of a leaf $\ell$ and variable $i$ as follows.
    \begin{align*}
        \mathrm{score}(\ell, i) \coloneqq \Prx_{\bx \sim \zo^n}[\,\text{$\bx$ reaches $\ell$}\,] \cdot  \Inf_{i}(f_\ell)  = 2^{-|\ell|} \cdot \Inf_{i}(f_\ell).
    \end{align*}
\end{definition}

We first show that it is possible to estimate scores sufficiently accurately for monotone functions just from random samples of a function, which proves~\Cref{thm:learn-monotone}.     Let $\bS$ be a random sample from a monotone function $f$, and recall~\Cref{fact:mono-influence}.  We can estimate the score of a particular leaf and variable as follows.
\begin{align*}
    \score(\ell, i,\bS) = \Ex_{\bx, \by \in \bS}\big[\Ind[\text{$\bx$ reaches $\ell$}] \cdot f(\bx)(2\bx_i-1)\big].
\end{align*}
Note that $\Ex_{\bS}[\score(\ell, i,\bS))] = \mathrm{score}(\ell, i)$. Let $t$ be any score threshold and $m$ be the number of examples in $\bS$.  By Chernoff bounds, for any particular leaf $\ell$ and variable $x_i$,
\begin{align*}
    \Prx_{\bS}\big[\score(\ell, i,\bS) \leq \lfrac{t}{2}\big] \leq  e^{-\frac{1}{8}t\cdot m} &\quad \text{if $\score(\ell,i)\ge t$} \\
    \Prx_{\bS}\big[\score(\ell, i, \bS) \geq \lfrac{t}{2}\big] \leq e^{-\frac{1}{12}t\cdot m} &\quad \text{if $\score(\ell,i) \le t/4$}. 
\end{align*}
At step $j$ in \topdown{}, there are $j+1$ leaves in $T^\circ$. If $t$ is the maximum score possible at that step, with probability at least $1 - (j+1)e^{-\frac{1}{12}t\cdot m}$, the leaf and variable with maximum empirical score will have true score at least $\frac{t}{4}$. By \Cref{lemma: minimum score additive}, $\topdown{}(f,\varepsilon)$, at step $j$, there will always be a leaf with score at least $\frac{\varepsilon}{(j+1) \log(s)}$, where $s$ is the decision tree size of $f$. Hence, the maximum empirical score will have true score at least $\frac{1}{4}$ the optimal value with probability at least $1 - (j+1)e^{-\frac{\varepsilon \cdot m}{12(j+1) \log(s)}}$.

The probability that selecting the maximum empirical score is always within $\frac{1}{4}$ of the optimal value for every step from $j=0$ to $j = k-1$ is at least $1 - k^2e^{-\frac{\varepsilon \cdot m}{12k \log(s)}}$. By setting the sample size to
\begin{equation} 
    m = O\left(\frac{k \log s}{\varepsilon} (\log k + \log(1/\delta))\right) \label{eq: definition of m} 
\end{equation} 
with probability at least $1 - \delta$, we choose a sufficiently good leaf for $k$ steps of \topdown{}. Recall that, for monotone functions, \topdown{} builds a decision tree of size at most
\begin{align*}
    k = \min(s^{O(\log(s/\eps)\log(1/\eps))}, s^{O(\sqrt{\log s}/\eps)}).
\end{align*}
Hence, with probability at least $1 - \delta$, taking $\min(s^{O(\log(s/\eps)\log(1/\eps))}, s^{O(\sqrt{\log s}/\eps)})\log(1/\delta)$ is enough to learn to accuracy $\varepsilon$.  This proves~\Cref{thm:learn-monotone}.

If $f$ is not monotone, we cannot accurately estimate influences from just random samples. However, we can estimate influences if given access to {\sl random edges} from $f$.

\begin{definition}[Random edges]
\label{def:random edges} 
    For any function $f: \{0,1\}^n \rightarrow \{\pm 1\}$, a \emph{random edge} is two points of the form $((\bx, f(\bx)), ({\bx}^{\oplus \bi}, f({\bx}^{\oplus \bi})))$, where $\bx \in \{0,1\}^n$ and $\bi \in [n]$ are both picked uniformly at random. A \emph{random edge sample} $\bE$ is a collection of random edges. Given random edge sample $\bE$, we will use $\bE_i$ to refer to all those edges in $\bE$ in which the $i^{\text{th}}$ bit of $x$ is flipped.
\end{definition}
Given a random edge sample $\bE$ of a function $f$, we will be able to accurately estimate influences of the variables in $f$, and learn $f$ using $\topdown{}$. We use the following estimate of score: 
\begin{align*}
    \score(\ell, i,\bE) = \Ex_{((\bx_1,\by_1), (\bx_2,\by_2)) \in \bE_i}\big[\Ind[\text{$\bx_1$ and $\bx_2$ reach $\ell$}] \cdot \Ind[\by_1 \neq \by_2]\big]
\end{align*}
If we desire there to be $m$ samples in each $\bE_i$ with probability at least $1 - \delta$, then having $\bE$ by size $O(n \cdot (m + \log(\frac{1}{\delta}))$ is sufficient, where $m$ is as defined in~\Cref{eq: definition of m}.   Since one can certainly general a random edge sample $\bE$ if given membership query access to $f$, this proves~\Cref{thm:learn-general}.

\pparagraph{Learning trees with small average depth:~\Cref{thm: learning average depth trees}.} Let $f$ be a monotone function computed by a decision tree $T$ of average depth $\triangle$. 
\begin{enumerate}
    \item  We first observe that the total influence of $f$ is at most $\triangle$.  To see this, first recall that $\Inf(f) = \E[\sens_f(\bx)]$, where $\sens_f(x) = |\{ i \in [n] \colon f(x) \ne f(x^{\oplus i})\}|$, i.e.~that total influence is equivalent to average sensitivity.  For any $x$, the sensitivity of $f$ at $x$ is at most the depth of the path that $x$ follows in $T$, and hence the average sensitivity of $f$ is at most the average depth $\triangle$ of~$T$.  
    \item  Recall~\Cref{thm:OS}, which says that monotone functions with decision tree size $s$ have total influence at most $\sqrt{\log s}$. In fact,~\cite{OS07} proves a stronger statement: if $f$ is monotone, then it has total influence at most $\sqrt{\triangle}$.  (This is indeed a stronger statement because $\triangle \leq \log s$.)
    \item Similarly,~\cite{OSSS05} also establishes a stronger version of~\Cref{thm: osss}, showing that $f$ has a variable of influence at least $\Var(f)/\triangle$ (rather than just $\Var(f)/\log s$). Hence an equivalent statement to~\Cref{lemma: minimum score additive} holds, where $\topdown{}$ selects a leaf with score at least $\frac{\varepsilon}{(j+1)\triangle}$.
\end{enumerate}
Combining these observations, with the same proof as for \Cref{thm:TD-upper-monotone}, we get that $\topdown$ produces a tree of size $2^{O(\triangle^2/\varepsilon)}$, and if $T$ is monotone, size only $2^{O(\triangle^{3/2}/\varepsilon)}$. Then, for the same reasons as~\Cref{thm:learn-general,thm:learn-monotone} hold,~\Cref{thm: learning average depth trees} holds.  

\section{Proper learning with polynomial sample and memory complexity}



In this section we give a quasipolynomial-time algorithm for properly learning decision trees under the uniform distribution, where sample and memory of our algorithm are both polynomial.  To our knowledge, this is the first algorithm for properly learning decision trees that achieves polynomial memory complexity.  (Recall~\Cref{table:EH-approx}.) 

\pparagraph{Background: Ehrehfeucht--Haussler and Mehta--Raghavan.} At the core of most learning algorithms is an algorithm that achieves low error on a set of samples. We will use the following notation in this section:\medskip

\noindent \textbf{Notation:} A sample, $S$, is a set of examples of the form $(x,y)$ where $x \in \{0,1\}^n$ and $y \in \{-1,1\}$. The error of a decision tree, $T$, with respect to the samples is
\begin{align*}
    \text{error}_{S}(T) =  \Prx_{\boldsymbol{x}, \boldsymbol{y} \in S}[T(\boldsymbol{x}) \neq \boldsymbol{y}].
\end{align*}
We say that a set of samples is \textit{exactly fit} by a tree of size $s$ if there exists a zero-error tree with size at most $s$. Furthermore, we use $S_{0}^{v}$ and $S_1^{v}$ to refer to all the points in the sample $S$ where the variable corresponding to $v$ is $0$ and $1$ respectively. Lastly, all learning statements in this section are with respect to the uniform distribution.

Ehrenfeucht and Haussler's  algorithm makes the following guarantee:

\begin{theorem}[Algorithmic core of~\cite{EH89}]
\label{thm:EH-core}
    There is an algorithm that takes in a set of samples, $S$, over $n$ variables that can be exactly fit by a decision tree of size $s$ and returns a tree of size at most $n^{\log(s)}$ that exactly fits $S$. Furthermore, that algorithm runs in time $|S| \cdot n^{O(\log s)}$.
\end{theorem}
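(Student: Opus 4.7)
The plan is to describe and analyze a recursive divide-and-conquer algorithm, leveraging the structural fact that any decision tree of size $s$ has a root variable whose two child subtrees have sizes summing to $s$, so at least one subtree has size $\leq \lceil s/2 \rceil$ while the other has size $\leq s-1$. The algorithm $\textsf{EH}(S, s)$ proceeds as follows: if $s = 1$, return a single leaf (valid by the promise that $S$ is label-consistent); otherwise, for every variable $v \in [n]$ and every $b \in \{0,1\}$, recursively compute a ``small-side'' candidate $T_b = \textsf{EH}(S_b^v, \lceil s/2\rceil)$ together with a ``large-side'' candidate $T_{1-b} = \textsf{EH}(S_{1-b}^v, s-1)$, form the tree with $v$ at the root, and check whether it exactly fits $S$. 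Return the smallest candidate that fits. Memoize on the pair $(S', s')$ so that each subproblem is solved at most once.

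Correctness proceeds by induction on $s$. Given any size-$\leq s$ target tree $T^\star$ exactly fitting $S$, let $v^\star$ be its root. The two subtrees of $T^\star$ have sizes summing to $s$, so for some $b \in \{0,1\}$ the $b$-subtree has size $\leq \lceil s/2\rceil$ and the $(1-b)$-subtree has size $\leq s - 1$. The pair $(v^\star, b)$ is among those enumerated in the recursive step, and the inductive hypothesis applied to $(S_b^{v^\star}, \lceil s/2\rceil)$ and $(S_{1-b}^{v^\star}, s-1)$ guarantees that the corresponding recursive calls each return zero-error trees, which combine into a zero-error tree for $(S, s)$.

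For the quantitative bounds, one sets up the recurrences
\[
g(s) \;\leq\; g(\lceil s/2\rceil) + g(s-1), \qquad g(1) = 1,
\]
for output size, and $T(s) \leq 2n \bigl( T(\lceil s/2\rceil) + T(s-1)\bigr) + O(|S|)$ for runtime. Repeatedly applying the bound $g(s) \leq s \cdot g(\lceil s/2\rceil)$ (obtained by telescoping the ``large-side'' term) gives $g(s) \leq s^{O(\log s)}$, and the analogous calculation for $T$ yields runtime $|S| \cdot n^{O(\log s)}$.

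The main obstacle is tightening the output-size bound from the naive $s^{O(\log s)}$ to the claimed $n^{O(\log s)}$. The approach is to observe that every subproblem encountered is of the form $(S_\pi, s')$ where $\pi$ is a path of (variable, bit) pairs generated along a root-to-node branch in the recursion, and to argue via a potential argument that tracks the geometric decrease of $s$ along ``small'' branches together with the linear decrease along ``large'' branches that the number of \emph{distinct} reachable subproblems is at most $n^{O(\log s)}$. Memoization then guarantees that the returned tree has at most this many internal nodes, and the total runtime is $|S|$ per subproblem times $n^{O(\log s)}$ subproblems, matching both bounds claimed in the theorem.
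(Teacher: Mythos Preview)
There is a genuine gap. The paper does not give its own proof of this statement (it is cited from~\cite{EH89}), but its analysis of the closely related \textsc{Find} procedure---specifically the time-complexity lemma for \textsc{Find}---exhibits exactly the ingredient your proposal is missing: a \emph{second parameter}. Ehrenfeucht--Haussler track not just the size budget $s$ but also the number $i$ of \emph{relevant} variables in the current sample. Because the algorithm only ever splits on a relevant variable, $i$ drops by one at every recursive call, yielding the two-parameter recurrences $g(i,s)\le g(i-1,\lceil s/2\rceil)+g(i-1,s-1)$ for output size and $T(i,s)\le 2i\cdot T(i-1,s/2)+T(i-1,s)+O(|S|\,n)$ for time. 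Unrolling the ``$s-1$'' term of the first recurrence until $i$ reaches $0$ gives $g(i,s)\le i\cdot g(i-1,s/2)+1$, and hence $g(n,s)\le n^{O(\log s)}$; the time recurrence is handled the same way. Your one-parameter recurrence $g(s)\le g(\lceil s/2\rceil)+g(s-1)$ does not mention $n$ at all and can only give $s^{O(\log s)}$, as you yourself note.

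Your proposed fix does not close the gap. Memoization caches \emph{computation}, not output structure: if the same subtree appears at two different positions in the returned decision tree, it is still counted twice toward that tree's size, so bounding the number of distinct subproblems says nothing about output size. Separately, the assertion that only $n^{O(\log s)}$ subproblems are reachable is itself false for the algorithm you describe: taking the ``large side'' at every step drives the recursion to depth $\Theta(\min(s,n))$, and the number of distinct restrictions of that depth already exceeds $n^{O(\log s)}$. Relatedly, because your algorithm issues $2n$ calls with budget $s-1$ at every level---whereas EH issues at most one, first trying both children with budget $s/2$ and only retrying the failing side with $s-1$ once one side has succeeded---your time recurrence $T(s)\le 2n\bigl(T(s/2)+T(s-1)\bigr)$ does not solve to $n^{O(\log s)}$ but to $n^{O(s)}$.
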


\noindent One downside of their algorithm is that it leads to a large hypothesis class---the class of all decision trees of size $n^{\log s}$---so in order to generalize with high probability, they require $\text{poly}(n^{\log s}, \frac{1}{\varepsilon})$ samples.

Mehta and Raghavan observe that if a function is computable by a tree of size $s$, then it is also $\varepsilon$-approximated by a tree of depth at most $\log(\frac{s}{\varepsilon})$. They combine this observation with a new algorithm that makes the following guarantee:

\begin{theorem}[Algorithmic core of~\cite{MR02}]
\label{thm:MR-core} 
    There is an algorithm that takes a sample, $S$, over $n$ variables as well as budgets for size $s$ and depth $d$ and returns the decision tree of size at most $s$ and depth at most $d$ with minimal error on $S$.\footnote{They also guarantee that if their are multiple trees with minimal error, they return a tree with minimal size among those with minimal error.} Furthermore, the algorithm runs in time $n^{O(d)} \cdot (s^2 + |S|)$.
\end{theorem}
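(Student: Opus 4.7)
The plan is a dynamic programming algorithm indexed by (restriction, size-budget, depth-budget). Define $M(\pi, s', d')$ to be the minimum number of errors on $S_\pi := \{(x,y) \in S : x \text{ is consistent with } \pi\}$ achievable by a decision tree of size $\le s'$ and depth $\le d'$; the goal is to return $M(\emptyset, s, d)$ together with the associated tree. The crucial observation is that every node of an output tree of depth $\le d$ corresponds to a restriction of length $\le d$, so only restrictions $\pi$ with $|\pi| \le d$ are relevant; there are at most $\sum_{k=0}^{d} \binom{n}{k} 2^k = n^{O(d)}$ such restrictions, which is where the claimed $n^{O(d)}$ factor comes from.

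The recursion proceeds as follows. The base case ($s' = 1$ or $d' = 0$) is the single leaf labeled by the plurality of $y$-values in $S_\pi$, giving $M(\pi, 1, d') = \min\bigl(|\{(x,y)\in S_\pi : y=-1\}|,\, |\{(x,y)\in S_\pi : y=+1\}|\bigr)$. Otherwise, either the root is itself a leaf (cost $M(\pi, 1, d')$), or we split on some variable $v \in [n]$ not already assigned by $\pi$ and partition the remaining size budget as $s_0 + s_1 = s'$ with $s_0, s_1 \ge 1$, incurring $M(\pi \cup \{v=0\}, s_0, d'-1) + M(\pi \cup \{v=1\}, s_1, d'-1)$; take the minimum over all such choices and store the argmin as a backpointer so the tree can be reconstructed. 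The ``minimum-size-among-minimal-error'' guarantee from the footnote is obtained by lexicographic tie-breaking on the pair (error, size), which requires tracking a two-component objective but does not affect the asymptotic cost.

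For runtime, I would first precompute the $\pm 1$ label counts of $S_\pi$ for every relevant $\pi$: iterate once over the $|S|$ samples, and for each $(x,y)$, update counters for the at most $\sum_{k=0}^{d} \binom{n}{k} = n^{O(d)}$ restrictions of length $\le d$ consistent with $x$. This costs $|S| \cdot n^{O(d)}$. The DP table then has $n^{O(d)} \cdot O(sd)$ cells, and filling each cell requires $O(n)$ choices of split variable and $O(s)$ choices of size-split, giving $O(ns)$ work per cell and $n^{O(d)} \cdot s^2$ total for the DP. Summed, the runtime is $n^{O(d)} \cdot (s^2 + |S|)$, as claimed.

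There is no deep obstacle here: correctness follows by a straightforward induction on $d'$, using the fact that the optimal subtree below a node only depends on $(\pi, s', d')$ through the subsample $S_\pi$ and the remaining budgets, and hence the recursion considers all structural possibilities. The two subtleties worth flagging are (i) restricting the state space to restrictions of length at most $d$, which is what enables the $n^{O(d)}$ bound in place of an exponential enumeration of trees, and (ii) simultaneously tracking error and size to meet the tie-breaking requirement. Neither changes the asymptotic runtime, so the stated bound is achieved.
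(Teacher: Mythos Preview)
The paper does not actually prove this theorem: it is stated as a cited result from~\cite{MR02} and used only as background to contrast with the paper's own algorithm. The only hint the paper gives about the underlying method is the remark (in the discussion following the theorem) that~\cite{MR02} ``use a dynamic programming algorithm that stores computation for each restriction of the $n$ variables of size at most $d$,'' noting there are $\binom{n}{d}\cdot 2^d$ such restrictions.

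Your dynamic-programming sketch is exactly this approach and is correct. The state $(\pi,s',d')$, the recursion over split variable and size partition, the precomputation of label counts, and the runtime accounting all check out. One minor redundancy: since the remaining depth budget at a node is always $d-|\pi|$, the parameter $d'$ is determined by $\pi$ and need not be carried separately; this does not affect correctness or the asymptotic bound.
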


\noindent  Importantly, there are only $2 \cdot (4n)^s$ decision trees of size at most $s$, a much smaller hypothesis class than for Ehrenfeucht and Haussler's algorithm. As a result, they only need $\text{poly}(s,\frac{1}{\varepsilon}) \cdot \log n$ samples to generalize with high probability. A downside of their work, relative to Ehrenfeucht and Haussler's, is that they need to set $d = O(\log(\frac{s}{\varepsilon}))$, so their algorithm has a runtime of approximately $n^{O(\log(s/\varepsilon))}$ instead of $n^{O(\log s)}$.

Neither \cite{EH89} nor \cite{MR02} are able to learn decision trees with only $\text{poly}(n,s,\frac{1}{\varepsilon})$ memory. \cite{EH89} uses a sample of size approximately $n^{O(\log s)}$ to guarantee generalization, and their algorithm must store all of the samples, so it needs at least that much memory. \cite{MR02} use a dynamic programming algorithm that stores computation for each restriction of the $n$ variables of size at most $d = O(\log(\frac{s}{\varepsilon}))$. There are $\binom{n}{d} \cdot 2^d = \Theta{n \choose \log(s/\eps)}$ such restrictions, resulting in superpolynomial memory complexity. 

\pparagraph{Our algorithm: proper learning with polynomial sample and memory complexity.}  We introduce a new algorithm that makes more assumptions about its input than either \cite{EH89}'s or \cite{MR02}'s algorithms. It requires the samples it receives to be \textit{well-distributed}, a property we will later define (\Cref{def:well-distributed}). In exchange, it only uses polynomial memory. The following should be contrasted with~\Cref{thm:EH-core,thm:MR-core}:

\begin{theorem}[Core of our algorithm] There is an algorithm (\Cref{fig:find}) that when given a depth budget $d$ and a well-distributed sample $S$ that can be exactly fit by a tree of size $s$ returns a tree with depth at most $d$ and error at most $(\frac{3}{4} + o(1))^d \cdot s$ on the samples. Furthermore, the algorithm runs in time $|S| \cdot n^{O(\log s)}$ and uses $\poly(2^d, \log n, |S|)$ memory. \end{theorem}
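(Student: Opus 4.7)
The plan is to design a top-down recursive variant of the Ehrenfeucht--Haussler algorithm that terminates at depth $d$, and to show that the resulting tree has small empirical error. At a high level, the algorithm, at each recursive call on a sample $S_\ell$, iterates over all $n$ variables; for each variable $v$ it recursively builds the best depth-$(d-1)$ subtrees on the conditional samples $S_0^v$ and $S_1^v$, and returns the variable--subtree combination that minimizes empirical error (equivalently, the number of misclassified samples when leaves are labeled by majority). The recursion bottoms out either at depth $d$, in which case the leaf is labeled by the majority class in the leaf's sample, or when the samples reaching the leaf all share one label.

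The key structural claim to prove is that whenever the samples $S_\ell$ at some internal node are exactly fit by a tree of size $s_\ell$, there exists a variable $v$ whose split reduces a natural potential by a factor of $3/4 + o(1)$. Concretely, I would define $\Phi(\ell) = \tfrac{|S_\ell|}{|S|} \cdot s_\ell$, where $s_\ell$ is the minimum size of a decision tree that exactly fits $S_\ell$. The lemma I would aim for is: some variable $v$ satisfies $\Phi(\ell_0) + \Phi(\ell_1) \le (3/4 + o(1))\,\Phi(\ell)$, where $\ell_0,\ell_1$ are the children produced by splitting on $v$. This is established by an averaging argument on an optimal size-$s_\ell$ tree fitting $S_\ell$: one locates the heaviest leaf of that optimal tree and argues that the first variable on the path to it produces one child of strictly smaller complexity plus a second child that cannot absorb all of the remaining complexity. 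The well-distributedness of $S$ is invoked here to guarantee that empirical weights at each reached node track the true uniform probabilities closely enough that ``optimal size fitting $S_\ell$'' behaves like the true combinatorial size parameter on $f$ restricted to the current path. Iterating the contraction $d$ times then yields total empirical error at most $(3/4+o(1))^d \cdot s$ by the standard telescoping estimate that bounds the error at each depth-$d$ leaf by its final potential.

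The memory analysis exploits the top-down, branch-and-bound structure of the algorithm. At any moment the algorithm only needs to store (i) the current root-to-node path and the sample indices still reaching it (length $\le d$, polynomial in $|S|$ and $\log n$), (ii) the best completed depth-$d$ subtree discovered so far for the current leaf, whose size is at most $2^d$, and (iii) a constant amount of bookkeeping per recursion level. Since the $n$ candidate variables at each level are tried sequentially, only the best completed subtree is retained while the others are discarded, so total memory is $\poly(2^d,\log n,|S|)$, independent of $n^d$. The runtime is $n^d \cdot \poly(|S|)$ arising from trying $n$ variables at each of $d$ recursion levels; for $d = O(\log s)$ (the regime needed to drive the error geometrically down relative to $s$) this matches the claimed $|S| \cdot n^{O(\log s)}$ bound.

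The main obstacle I anticipate is establishing the $3/4 + o(1)$ contraction of $\Phi$ in a form that is robust to working with empirical samples rather than with the underlying function. The well-distributedness hypothesis must be strong enough to transfer information-theoretic statements about optimal decision-tree sizes under the uniform distribution into statements about optimal sizes fitting the \emph{subsamples} arising at every reached node of the algorithm. Pinning down precisely what ``well-distributed'' must mean, verifying that the property propagates correctly to conditioned subsamples down the recursion, and absorbing the resulting concentration losses into the $o(1)$ slack in $3/4 + o(1)$ is where I expect the bulk of the technical work to lie.
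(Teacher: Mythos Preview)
Your proposal has a genuine gap in the runtime analysis that stems from a structural difference between your algorithm and the paper's.

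Your algorithm, at every recursive call, tries all $n$ variables and recurses to depth $d$ on each, yielding runtime $n^{d}\cdot\poly(|S|)$. You then assert that ``for $d=O(\log s)$ \ldots\ this matches the claimed $|S|\cdot n^{O(\log s)}$ bound.'' But $d=O(\log s)$ does \emph{not} suffice: to drive the error $(\tfrac{3}{4})^{d}\cdot s$ below $\eps$ one needs $d=\Theta(\log(s/\eps))$, and the theorem as stated promises runtime $n^{O(\log s)}$ for \emph{arbitrary} $d$, independent of $d$. Your construction therefore reproduces the Mehta--Raghavan runtime $n^{O(\log(s/\eps))}$, which the paper explicitly identifies as inferior.

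The paper avoids this by retaining the Ehrenfeucht--Haussler size-halving mechanism inside the recursion: at each node it passes a size budget, and for each candidate variable $v$ it first attempts both children with budget $s/2$; only if exactly one side succeeds does it retry the other side with budget $s-1$. The recursion terminates (returning ``None'') once the size budget drops to $1$, so the number of ``size halvings'' along any branch is at most $\log s$ regardless of $d$, and the EH recurrence $T(i,s)\le 2i\,T(i-1,s/2)+T(i-1,s)+O(mn)$ still resolves to $n^{O(\log s)}$. The depth budget $d$ is a \emph{separate} early-termination condition that only affects the error, not the running time.

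This design choice then drives the error analysis: the algorithm does not pick the error-minimizing variable (it cannot afford to), but rather the first variable for which one side fits in budget $s/2$. The induction on $d$ shows that the $s/2$-side contributes error $\le \tfrac{1}{4}(\tfrac{3}{4}+\tfrac{c}{4})^{d-1}\cdot\tfrac{s}{2}$ and the other side $\le \tfrac{1}{4}(\tfrac{3}{4}+\tfrac{c}{4})^{d-1}\cdot s$; well-distributedness ensures the two children receive sample fractions in $[\tfrac{1-c}{2},\tfrac{1+c}{2}]$, and the worst-case weighting yields the factor $\tfrac{3+c}{4}$. Your potential $\Phi$ and the ``heaviest leaf'' averaging argument are not needed---the $3/4$ arises purely from the asymmetric size budgets $(s/2,\,s)$, not from any property of the optimal tree beyond the trivial fact that its root splits it into pieces one of which has size $\le s/2$.
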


Note that if the goal is to learn the sample to accuracy $\varepsilon$, we can set $d = O(\log(\frac{s}{\varepsilon}))$. The result is an algorithm that runs in time $|S| \cdot n^{O(\log s)}$ and uses memory $\text{poly}(n, s, 1/\varepsilon, |S|)$. Furthermore, the well-distributed requirement turns out to be true for nearly all uniformly random samples that are sufficiently large. The result is, to the best of our knowledge, the first polynomial memory proper learning algorithm for decision trees.

Our algorithm (\Cref{fig:find}) is a surprisingly simple modification of \cite{EH89}'s algorithm, but our analysis is quite a bit more involved.  A key difference is that~\cite{EH89}'s algorithm is an Occam algorithm, whereas ours is not.  The original \cite{EH89} algorithm breaks down when in cannot fully fit the sample; the analysis showing that our algorithm {\sl is} able to handle a sample it cannot fully fit is subtle.

\begin{figure}[t]
  \captionsetup{width=.9\linewidth}
\begin{tcolorbox}[colback = white,arc=1mm, boxrule=0.25mm]
    \textsc{Find}$(S, s, d)$:
    \begin{itemize}[align=left]
        \item[\textbf{Input:}]  A sample $S$ that can be exactly fit by a tree of size $s$ and depth budget $d$.
        \item[\textbf{Output:}]  A decision tree $T$ with depth at most $d$ that approximately fits $S$. If $S$ cannot be fit by a tree of size $s$, may return ``None."
    \end{itemize}
    \begin{enumerate}
        \item If all samples in $S$ have the same label, return the single-node tree computing that label.
        \item If $s \leq 1$ return ``None." 
        \item \label{Find: depth line} If $d = 0$ return the single-node tree computing the majority label of $S$.
        \item For each relevant\footnote{``relevant" means that neither $S_{0}^v$ nor $S_{1}^v$ is empty} variable $v$:
        \begin{enumerate}
            \item Let $T_{0}^v = \textsc{Find}(S_{0}^v, \frac{s}{2}, d-1)$ and $T_1^v = \textsc{Find}(S_1^v, \frac{s}{2}, d-1)$. 
            \item If both $T_{0}^v$ and $T_1^v$ are not ``None", return the tree with root labeled $v$, $0$-subtree $T_{0}^v$ and $1$-subtree $T_1^v$.
            \item If one of $T_{0}^v$ and $T_1^v$ is ``None" and the other is not:
            \begin{enumerate}
                \item Reexecute the recursive call for the side that was ``None" with size $s-1$ instead of size $\frac{s}{2}$. For example, if $T_1^v$ was ``None", set $T_1^v = \textsc{Find}(S_1^v, s-1, d-1)$
                \item If the reexecuted call still returns ``None", return ``None."
                \item Return the tree with root labeled $v$, $0$-subtree $T_{0}^v$ and $1$-subtree $T_1^v$.
            \end{enumerate}
        \end{enumerate}
        \item Return ``None".
    \end{enumerate}
\end{tcolorbox} 
\caption{Our variant of Ehrenfeucht and Haussler's {\sc Find} algorithm.} 
\label{fig:find}
\end{figure}

\begin{lemma}[Correctness of \textsc{Find}]
    If $S$ can be exactly fit by a tree of size $s$, then \textsc{Find}$(S,s,d)$ will not return ``None." 
\end{lemma}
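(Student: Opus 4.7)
The plan is to prove the lemma by strong induction on the size budget $s$. For the base cases, when $s \le 1$ the assumption that $S$ is exactly fit by a size-$s$ tree forces every label in $S$ to agree, so Step~1 returns a leaf (in particular, not \emph{None}); and when $d = 0$, Step~3 returns the majority-vote leaf, which is also not \emph{None}. We may therefore assume $s \ge 2$, $d \ge 1$, and that the labels of $S$ are not all equal.

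For the inductive step, let $T^*$ be a decision tree of size at most $s$ that exactly fits $S$. Since the labels disagree, $T^*$ is not a single leaf; write $v^*$ for its root variable and $s_0^*, s_1^*$ for the sizes of its $0$- and $1$-subtrees, so that $s_0^* + s_1^* = \mathrm{size}(T^*) \le s$ with $s_0^*, s_1^* \ge 1$. Without loss of generality $s_0^* \le s_1^*$, which yields $s_0^* \le \lfloor s/2 \rfloor$ and $s_1^* \le s - 1$. When the outer loop reaches $v = v^*$, the call $\textsc{Find}(S_0^{v^*}, s/2, d-1)$ returns non-\emph{None} by the IH, since the restriction $T^*|_{v^* = 0}$ is a size-$s_0^* \le s/2$ tree exactly fitting $S_0^{v^*}$. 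For the $v^* = 1$ side, either $\textsc{Find}(S_1^{v^*}, s/2, d-1)$ is also non-\emph{None} and Step~4(b) returns a tree, or Case~(c) fires and the retry $\textsc{Find}(S_1^{v^*}, s-1, d-1)$ is invoked; this retry is non-\emph{None} by the IH because $T^*|_{v^* = 1}$ has size $s_1^* \le s - 1$ and fits $S_1^{v^*}$. Either way, the iteration at $v^*$ returns a valid tree.

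The main technical obstacle will be ruling out that the loop spuriously exits with \emph{None} at Step~4(c)(ii) for some variable $v$ processed before $v^*$. Suppose toward contradiction that Case~(c)(ii) fires for such a $v$ with, say, the $v = 1$ side's retry returning \emph{None}. By the contrapositive of the IH, no size-$(s-1)$ tree fits $S_1^v$; combined with the fact that the restriction of $T^*$ to $v = 1$ gives a tree of size at most $s$ fitting $S_1^v$, this pins the minimum tree size of $S_1^v$ to exactly $s$ and forces the variable $v$ to be absent from every size-$s$ tree that fits $S$. Using this rigidity, together with the success of the $v = 0$ side's call at budget $s/2$ and careful accounting of subtree sizes inside the $v$-free tree $T^*$, one derives a contradiction with $\mathrm{size}(T^*) \le s$. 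As a sanity check, the analogous argument for the Step~5 fall-through is much cleaner: if every relevant variable had both child calls return \emph{None}, then in particular both calls at $v^*$ returning \emph{None} would force, via the IH, $s_b^* > s/2$ for both $b$, contradicting $s_0^* + s_1^* \le s$.
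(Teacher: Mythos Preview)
Your treatment of the loop iteration at $v^*$ and of the Step~5 fall-through is correct and matches the paper. The gap is at Step~4(c)(ii), where you assert that ``careful accounting of subtree sizes inside the $v$-free tree $T^*$'' yields a contradiction with $\mathrm{size}(T^*)\le s$. That contradiction is not there. From the failed retry you correctly extract, via the contrapositive of the IH, that $S_1^v$ needs size exactly $s$, hence $v\notin T^*$ and in fact $\mathrm{size}(T^*)=s$. But the success of the $0$-side call at budget $s/2$ tells you nothing about the minimum tree size of $S_0^v$: the implication in the IH goes only one way, and \textsc{Find} can return non-\emph{None} on samples that are \emph{not} fittable within its size budget (for instance, whenever the depth budget reaches $0$ it outputs a majority leaf unconditionally, and Step~1 fires whenever the restricted sample happens to have constant label). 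So no subtree-size constraint on $T^*$ comes from the $0$-side, and all you have established is $\mathrm{size}(T^*)=s$, which is consistent with the hypothesis, not in contradiction with it. Concretely, take $n=2$, $S=\{(00,0),(01,0),(11,1)\}$, $s=2$, $d=1$: the size-$2$ tree querying $x_1$ fits $S$, yet at $v=x_2$ the $0$-side succeeds (a single sample), the $1$-side and its retry at $s-1=1$ both return \emph{None}, and no accounting inside the $x_2$-free tree $T^*$ produces a contradiction.

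The paper's proof handles 4(c)(ii) by a different route and makes no use of the successful $s/2$ call. It argues directly that the retry at budget $s-1$ must succeed: it passes to a \emph{minimal} tree $T_S$ fitting $S$ and asserts that, because $v$ is relevant, $v$ appears in $T_S$; then the restriction $T_S|_{v=a}$ has size at most $s-1$ and fits $S_a^v$, so by the IH the retry is non-\emph{None}. If you want to align with the paper, drop the attempt to squeeze size information out of the successful call and instead argue about the presence of $v$ itself in a minimal fitting tree.
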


\begin{proof}

By induction. If $s = 1$ and $S$ can be fit by a tree of size $s$, then all samples in $S$ will have the same label. Hence, \textsc{Find} will return a tree on line 1, and not return ``None".

For $s \geq 2$, there are only two spots where \textsc{Find} could return ``None":
\begin{itemize}[align=left]
    \item[\textbf{Line 4.c.ii}]  \textsc{Find} returns ``None" on line 4.c.ii only if a call of the form $\textsc{Find}(S_a^v, s-1, d-1)$ returns ``None" where $a = \pm 1$ and $v$ is relevant. Let $T_S$ be a minimal size tree that fits $S$, which by assumption, has size at most $s$. Since $v$ is relevant, a node labeled with it must appear somewhere in that tree. This means that there is a size $s-1$ tree that will exactly fit $S_a^v$. By induction, this means that $\textsc{Find}(S_a^v, s-1, d-1)$ will not return ``None."
    \item[\textbf{Line 5.}] \textsc{Find} returns ``None" on line 5 only if there was not a single relevant variable $v$ for which either of the calls 
    $\textsc{Find}(S_{0}^v, \frac{s}{2}, d-1)$ or $ \textsc{Find}(S_1^v, \frac{s}{2}, d-1)$ on line 4.a succeeded (i.e. did not return ``None"). Once again, let $T_{S}$ be a minimal size tree that fits $S$. Then, every node in $T_{S}$ must be relevant for $S$. Furthermore, $T_{S}$ has some root variable $v^*$ and subtrees $(T_{S})_{0}$ and $(T_{S})_{1}$. At least one of $(T_{S})_{0}$ or $(T_{S})_{1}$ must have size at most $\frac{s}{2}$. If $(T_{S})_{0}$ has size at most $\frac{s}{2}$, then by the inductive hypothesis, $\textsc{Find}(S_{0}^v, \frac{s}{2}, d-1)$ does not return ``None." Otherwise $\textsc{Find}(S_{1}^v, \frac{s}{2}, d-1)$ does not return ``None." Hence, \textsc{Find} won't return ``None" on line 5. \qedhere
\end{itemize}
\end{proof}

We hope to prove that \textsc{Find} will produce low error trees, but it turns out to be difficult to guarantee this for arbitrary samples. One particular sample we can guarantee this for is the sample containing all possible points. If $S$ contains all $2^n$ possible points, then \textsc{Find} will return a tree with error at most $\frac{1}{4} \cdot (\frac{3}{4})^d$, which we will prove in \Cref{lemma: error of find on well-distributed samples}. The following property allows us to quantify how close $S$ is to the full sample. 

\begin{definition}[Well-distributed samples]
\label{def:well-distributed} 
    We say that a sample of points $S$ is $c$\textit{-well-distributed} to depth $d$ if, for any restriction $\alpha$ where $|\alpha| \leq d$,
    \begin{equation*}
        ||(S_\alpha)| - \mu| \leq c \mu
    \end{equation*}
    where $\mu = 2^{-|\alpha|} \cdot |\bS|$ is the expected size of $\bS_\alpha$ if $\bS$ is chosen uniformly at random.
\end{definition}

For example, if $S$ contains all possible $2^n$ points, then $S$ is $0$-\textit{well-distributed} to any depth.

\begin{lemma}[Error of \textsc{Find} on well-distributed samples]
\label{lemma: error of find on well-distributed samples}
    Let $S$ be $c$-well-distributed to depth $d$. If $\textsc{Find}(S,s,d)$ does not return ``None," it returns a tree with error at most $\frac{1}{4}(\frac{3}{4} + \frac{c}{4})^d \cdot s$ with respect to $S$.
\end{lemma}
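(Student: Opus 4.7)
The plan is to induct on the depth budget and, for a cleaner recursive argument, to formulate the induction with respect to the fixed top-level sample. Specifically, I will prove the following strengthening: if $S$ is $c$-well-distributed to depth $d$, then for every restriction $\alpha$ with $|\alpha| \le d$ and every size budget $s' \ge 1$, if $\textsc{Find}(S_\alpha, s', d - |\alpha|)$ does not return ``None,'' then it outputs a tree of error at most $\tfrac{s'}{4}(3/4 + c/4)^{d - |\alpha|}$ on $S_\alpha$. Taking $\alpha = \emptyset$ and $s' = s$ recovers the lemma. The advantage of this formulation is that every sub-sample arising during the recursion is of the form $S_\beta$ for some $|\beta| \le d$, so the $c$-well-distributedness hypothesis on the original $S$ directly yields $|S_\beta| \in [1-c,\, 1+c]\cdot 2^{-|\beta|}|S|$ at every level---this is the only statistical fact I will need.

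The base case $d - |\alpha| = 0$ is handled at line 3 (the majority-label leaf): if $s' = 1$ the non-None assumption forces all labels to agree so the error is $0$, while for $s' \ge 2$ the trivial bound of $\tfrac{1}{2}$ on the majority-label error rate is dominated by $\tfrac{s'}{4}(3/4 + c/4)^0 = s'/4$; line 1 is handled the same way. For the inductive step, let $v$ be the variable chosen by \textsc{Find}, set $a = |S_{\alpha \cdot v = 0}|$, $b = |S_{\alpha \cdot v = 1}|$, and write $s_0, s_1$ for the two recursive size budgets. Applying the inductive hypothesis to both subcalls, the total number of errors on $S_\alpha$ is bounded by $\tfrac{(3/4 + c/4)^{d - |\alpha| - 1}}{4}(s_0\, a + s_1\, b)$, so the claim reduces to verifying
\begin{equation*}
    s_0\, a \,+\, s_1\, b \;\le\; \tfrac{s'(3+c)}{4}\,(a+b).
\end{equation*}

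Case A (both subtrees built with budget $s'/2$) is immediate: the inequality simplifies to $2 \le 3 + c$, which holds since $c \ge 0$. Case B is the delicate one; writing without loss of generality $s_0 = s'/2$ and $s_1 = s' - 1$, algebraic rearrangement puts the target inequality into the equivalent form $b\bigl(s'(1-c) - 4\bigr) \le s'a(1+c)$. If $s'(1-c) \le 4$ the left side is non-positive and there is nothing to prove; otherwise, I use the well-distributedness bound $b/a \le (1+c)/(1-c)$ (since each of $a$ and $b$ lies in $[1-c,\,1+c] \cdot 2^{-|\alpha|-1}|S|$), which reduces the required inequality to the trivial $s'(1-c) - 4 \le s'(1-c)$. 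The symmetric assignment with $s_0 = s' - 1$ and $s_1 = s'/2$ is handled identically by swapping the roles of $a$ and $b$. I expect Case B to be the main obstacle: the unbalanced budget $s'/2 + (s' - 1)$ must be absorbed into the inductive bound despite that $a$ and $b$ can themselves be imbalanced by a factor of $(1+c)/(1-c)$, and it is precisely the $-1$ in the re-executed budget that supplies the constant $-4$ of slack that exactly dominates this multiplicative imbalance.
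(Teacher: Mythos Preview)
Your proof is correct and follows the paper's strategy: induct on the depth budget and use $c$-well-distributedness to control the imbalance between the two sub-samples in the recursive step. Your strengthened inductive hypothesis---fixing the top-level sample $S$ once and for all and parametrizing the induction by restrictions $\alpha$---is in fact cleaner than the paper's presentation and repairs a small gap there: the paper applies its inductive hypothesis directly to the calls $\textsc{Find}(S_0^v,\cdot,d-1)$ and $\textsc{Find}(S_1^v,\cdot,d-1)$, implicitly treating $S_0^v$ and $S_1^v$ as $c$-well-distributed to depth $d-1$, but that does not follow from $S$ being $c$-well-distributed to depth $d$ (the reference size $\mu$ shifts). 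By always measuring against the original $|S|$, your formulation sidesteps this issue. One minor correction: your closing remark that ``it is precisely the $-1$ in the re-executed budget that supplies the constant $-4$ of slack'' overstates the role of that $-1$. The paper simply bounds the re-executed budget by $s'$ rather than $s'-1$; with that cruder bound Case~B reduces directly to $b(1-c)\le a(1+c)$, which is exactly the well-distributedness inequality with no slack to spare. Your $-4$ is extra cushion, not what makes the inequality close.
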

\begin{proof}
    By induction on the $d$; If $d = 0$ and $s \geq 2$, then this lemma requires the error to be less than $\frac{1}{2}$, which $\textsc{Find}$ satisfies since it places the majority node. If $s = 1$ and \textsc{Find} doesn't return ``None," it must have returned a zero-error tree on Line 1, satisfying the desired error bound.
    
    Next, consider $d \geq 1$. If all samples have the same label, \textsc{Find} returns a $0$ error tree. Otherwise, it returns a tree, $T$, with $0$-subtree $T_{0}^v$ and 1-subtree $T_1^v$ for some variable $v$. Let $\ell_{0}$ and $\ell_1$ be the number of points in $S_{0}^v$ and $S_1^v$ respectively. Then, we can relate the errors of the trees as follows:
    \begin{align*}
        \text{error}(T) &= \frac{\ell_{0}}{\ell_{0} + \ell_1}\cdot \text{error}(T_{0}^v) + \frac{\ell_1}{\ell_{0} + \ell_1} \cdot \text{error}(T_1^v)
    \end{align*}
    At least one of $T_{0}^v$ and $T_1^v$ was generated using a recursive call to \textsc{Find} with size parameter $\frac{s}{2}$. Without loss of generality, let that tree be $T_{0}^v$. The other tree, $T_1^v$ was generated by a recursive call with size at most $s$. By the inductive hypothesis,
    \begin{align}
    \label{eq: error in terms of l}
        \text{error}(T) \leq \frac{\ell_{0}}{\ell_{0} + \ell_1}\cdot \frac{1}{4}\left(\frac{3}{4} + \frac{c}{4}\right)^d \cdot \frac{s}{2} + \frac{\ell_1}{\ell_{0} + \ell_1}\cdot \frac{1}{4}\left(\frac{3}{4} + \frac{c}{4}\right)^d \cdot s
    \end{align}

    Since $S$ is $c$-well-distributed to depth $d$, $(1 - c)\mu \leq \ell_{0}, \ell_1 \leq (1+c)\mu$, where $\mu = \frac{|S|}{2}$. Choosing $\ell_{0} = (1 - c)\mu$ and $\ell_1 = (1+c)\mu$ maximizes equation \Cref{eq: error in terms of l} and so results in a valid upper bound.
    \begin{align*}
        \text{error}(T) &\leq \frac{\mu(1 - c)}{2\mu}\cdot \frac{1}{4}\left(\frac{3}{4} + \frac{c}{4}\right)^{d-1} \cdot \frac{s}{2} + \frac{\mu(1 + c)}{2\mu} \cdot \frac{1}{4}\left(\frac{3}{4} + \frac{c}{4}\right)^{d-1} \cdot s \\
        &= \frac{1}{4} \cdot \left(\frac{3}{4} + \frac{c}{4}\right)^{d-1} \cdot \left(\frac{3}{4} + \frac{c}{4}\right) \cdot s \\
        &= \frac{1}{4}\left(\frac{3}{4} + \frac{c}{4}\right)^d \cdot s \qedhere
    \end{align*}
\end{proof}

The above Lemma shows that if a sample is sufficiently well-distributed, \textsc{Find} will return a tree with low error. We next show that, with high probability, sufficiently large samples will be well-distributed.

\begin{lemma}[Well-distributed samples are common]
    \label{lemma: well-distributed common}
    Choose any $0 < c < 1.0, \delta > 0$. Then for
    \begin{equation*}
        m = O\left(\frac{2^d}{c^2} \cdot (d \log(n) + \log(1/\delta))\right),
    \end{equation*}
    a sample of size $m$ chosen uniformly at random from $\{0,1\}^n$ is $c$-well-distributed to depth $d$ with probability at least $1 - \delta$
\end{lemma}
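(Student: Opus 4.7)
The plan is a standard Chernoff-plus-union-bound argument, carefully accounting for the number of restrictions of depth at most $d$.

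First, I would fix any restriction $\alpha$ with $|\alpha| = k \le d$. For a uniform random $\bx\in\zo^n$, we have $\Pr[\bx \text{ is consistent with }\alpha] = 2^{-k}$, so if $\bS$ is a sample of size $m$ drawn i.i.d.~from $\zo^n$, then $|\bS_\alpha|$ is distributed as $\mathrm{Bin}(m, 2^{-k})$ with mean $\mu_\alpha = m\cdot 2^{-k} \ge m\cdot 2^{-d}$. The multiplicative Chernoff bound then gives
\[
\Prx_{\bS}\big[\,||\bS_\alpha| - \mu_\alpha| > c\,\mu_\alpha\,\big] \;\le\; 2\exp\!\left(-\tfrac{c^2 \mu_\alpha}{3}\right) \;\le\; 2\exp\!\left(-\tfrac{c^2 m}{3\cdot 2^d}\right).
\]

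Next, I would union bound over all restrictions $\alpha$ with $|\alpha|\le d$. The number of such restrictions is
\[
\sum_{k=0}^{d} \binom{n}{k} 2^k \;\le\; (d+1)\cdot (2n)^d.
\]
Thus the probability that $\bS$ fails to be $c$-well-distributed to depth $d$ is at most $2(d+1)(2n)^d \exp(-c^2 m/(3\cdot 2^d))$. Setting this quantity to be at most $\delta$ and solving for $m$ yields the claimed bound
\[
m \;\ge\; \frac{3\cdot 2^d}{c^2}\Big(d\log(2n) + \log(d+1) + \log(2/\delta)\Big) \;=\; O\!\left(\tfrac{2^d}{c^2}(d\log n + \log(1/\delta))\right).
\]

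There is no real obstacle here; the only mild subtlety is making sure the union bound covers not only the choice of the coordinate set of $\alpha$ (the $\binom{n}{k}$ factor) but also the $2^k$ possible sign patterns on it, which is what produces the $(2n)^d$ rather than $n^d$ factor. Since this is absorbed into the $d\log n$ term, the final bound matches the statement exactly.
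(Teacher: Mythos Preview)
Your proposal is correct and follows essentially the same approach as the paper: a multiplicative Chernoff bound for each fixed restriction, followed by a union bound over the $\sum_{k\le d}\binom{n}{k}2^k$ restrictions of depth at most $d$. Your version is in fact slightly more explicit about the constants and the $(2n)^d$ count, but the argument is identical in substance.
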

\begin{proof}
    Consider an arbitrary restriction $\alpha$ of length $h \leq d$. By Chernoff bounds,
    \begin{align*}
        \Pr[||\bS_\alpha| - \mu| \geq c\mu] \leq 2e^{-\mu c^2/3}
    \end{align*}
    where $\mu = \mathbb{E}[|\bS_\alpha|] = m \cdot 2^{-h}$. Since $h \leq d$, we can upper bound the probability as follows.
    \begin{align*}
        \Pr[||\bS_\alpha| - \mu| \geq c\mu] \leq 2e^{-m \cdot 2^{-d} c^2/3}
    \end{align*}
    $\bS$ is $c$-well-distributed if $||\bS_\alpha| - \mu| \leq c\mu$ for all possible restrictions $\alpha$ of size at most $d$. There are $\sum_{i=0}^d \binom{n}{i}2^i = n^{O(d)}$ such restrictions. Thus, by a union bound:
    \begin{align*}
        \Pr[\bS\text{ is }c\text{-well-distributed}] \geq 1 - n^{O(d)}\cdot e^{-m \cdot 2^{-d} c^2/3}.
    \end{align*}
    We set the right-hand side of the above equation to be at least $1-\delta$ and solve for $m$, which proves this Lemma.
\end{proof}

Our analysis of the time complexity of \textsc{Find} is very similar to Ehrenfeucht and Haussler's:
\begin{lemma}[Time complexity of \textsc{Find}]
\label{lemma: find time complexity}
    $\textsc{Find}(S,s,d)$ takes time $|S| \cdot (n+1)^{2 \log(s)}$.
\end{lemma}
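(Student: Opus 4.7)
The plan is to adapt Ehrenfeucht and Haussler's recursive time analysis to our variant of $\textsc{Find}$, proceeding by strong induction on $s$ with the sample-size-weighted hypothesis $T(S, s, d) \leq |S| \cdot (n+1)^{2\log s}$. The first step is to account for the per-call local work: enumerating the at most $n$ relevant variables and partitioning $S$ into $S_0^v, S_1^v$ for each takes $O(n|S|)$ time, as do the constant-label check on line~1 and the majority computation on line~3.

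Next I would read off a recurrence from step~4. The crucial structural observation is that the for-loop iterates through variables only as long as both of the two size-$s/2$ subcalls return ``None''; the moment case (b) or case (c) is entered, the function returns, so the size-$(s-1)$ reexecution in step (c)(i) is invoked at most once per call of $\textsc{Find}$. In the worst case this yields
\[
T(S,s,d) \;\leq\; O(n|S|) \;+\; \sum_{v=1}^{n} \bigl(T(S_0^v, s/2, d-1) + T(S_1^v, s/2, d-1)\bigr) \;+\; T(S_{b^*}^{v^*}, s-1, d-1).
\]
Invoking the inductive hypothesis on each summand and exploiting the elementary identity $|S_0^v| + |S_1^v| = |S|$, the pair of size-$s/2$ subcalls at any single variable contributes at most $|S| \cdot (n+1)^{2\log s - 2}$, so the full variable loop contributes at most $n|S|(n+1)^{2\log s - 2}$, and the reexecution contributes at most $|S|(n+1)^{2\log(s-1)}$.

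The main obstacle is then closing the induction, i.e.\ verifying that the sum is at most $|S|(n+1)^{2\log s}$, which after dividing through reduces to the numeric inequality
\[
\frac{n}{(n+1)^2} \;+\; \left(1 - \tfrac{1}{s}\right)^{2\log(n+1)} \;\leq\; 1 - o(1).
\]
Because the size-$(s-1)$ reexecution decreases $s$ by only one, $(n+1)^{2\log(s-1)}$ falls short of $(n+1)^{2\log s}$ by only a factor of roughly $1 - 2\log(n+1)/s$, so closing the induction relies delicately on the $n/(n+1)^2 \leq 1/4$ savings supplied by the halved-size subcalls. The factor of $2$ in the target exponent $2\log s$ (as opposed to the $\log s$ in the standard EH statement $n^{O(\log s)}$) is exactly what furnishes the extra slack needed to absorb the reexecution term. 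The concrete plan is either to verify this inequality directly in the relevant parameter regime or, if cleaner, to strengthen the inductive hypothesis to $T(S,s,d) \leq |S|(n+1)^{2\log s}(1 - \alpha/s)$ for a suitable $\alpha$ so that the stronger bound passes through the $N(s-1)$ term. Once this arithmetic is handled, multiplying the per-call work by the number of recursive calls yields the claimed $|S| \cdot (n+1)^{2\log s}$ runtime.
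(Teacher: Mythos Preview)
Your recurrence is set up correctly and the observation that the size-$(s-1)$ reexecution fires at most once is exactly right. The gap is in closing the induction: the numeric inequality you isolate,
\[
\frac{n}{(n+1)^2} + \left(1-\tfrac{1}{s}\right)^{2\log(n+1)} \le 1,
\]
is simply false once $s$ is moderately large. For fixed $n$ and $s\to\infty$ the second term tends to $1$, so the left side tends to $1 + n/(n+1)^2 > 1$; concretely the inequality fails whenever $s \gtrsim n\log n$, and $s$ can be as large as $2^n$. Strengthening the hypothesis to $|S|(n+1)^{2\log s}(1-\alpha/s)$ does not help: after carrying the extra $(1-\alpha/s)$ factor through both sides, the same obstruction $n/(n+1)^2 \le 2\log(n+1)/s$ reappears. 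The root issue is that decrementing $s$ to $s-1$ buys only a $1 - O(\log n)/s$ factor, which cannot absorb the additive $\Theta(1/n)$ cost of the loop.

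The missing idea is that the number of \emph{relevant} variables drops by one in every recursive call---including the reexecution---because the split variable $v$ is fixed in both $S_0^v$ and $S_1^v$. The paper exploits this by parameterizing the recurrence on $i$, the number of relevant variables, rather than on $s$ alone:
\[
T(i,s) \le 2i \cdot T(i-1, s/2) + T(i-1, s) + O(mn),
\]
where the reexecution term is bounded by $T(i-1,s)$ (with $i$ decremented even though $s$ barely moves). Substituting $r=\log s$ turns this into the standard Ehrenfeucht--Haussler recurrence, which they solve as $O(m\cdot(n+1)^{2r})$. Once you track $i$, the induction closes because $i$ can decrease at most $n$ times regardless of how slowly $s$ shrinks.
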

\begin{proof}
    Fix a total number of variables $n$ and sample size $m$. Let $T(i, s)$ be the maximum time needed by $\textsc{Find}(S,s,d)$ where $S$ has size at most $m$, $i$ is the number of relevant variables in $S$, and $d$ is arbitrary.
    
    If $i = 0$ or $s = 1$, then \textsc{Find} must return on Line 1 or 2, using only $O(m)$ time. Otherwise, the \textsc{Find} makes at most $2i$ recursive calls on line 4.(a) each of which takes time at most $T(i-1, \frac{s}{2})$ It also makes zero or one recursive call on line 4.(c).i which takes time up to $T(i-1, s-1) \leq T(i-1, s)$. In addition to these recursive calls, all of the auxiliary computations can be done in $O(mn)$ time. Hence, we have the following recurrence relation: 
    \begin{align*}
        T(i,s) \leq 2i\cdot T\big(i-1,\lfrac{s}{2}\big) + T(i-1,s) + O(mn).
    \end{align*}
    If we substitute $r = \log(s)$, then equivalently, we have the relation: 
    \begin{align*}
        \tilde{T}(i,r) \leq 2i\cdot \tilde{T}(i-1, r-1) + \tilde{T}(i-1,r) + O(mn).
    \end{align*}
    The above relation is shown to be upper bounded by $\tilde{T}(i,r) = O(m \cdot(n+1)^{2r})$ in \cite{EH89}. Substituting back $r = \log(s)$ gives that $T(i,s) \leq O(m \cdot(n+1)^{2\log(s)})$.
\end{proof}

\begin{lemma}[Memory complexity of \textsc{Find}]
\label{lemma: find memory complexity}
    $\textsc{Find}(S,s,d)$ takes memory $O(2^d(|S| + \log n))$.
\end{lemma}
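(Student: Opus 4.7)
The plan is to bound the memory by carefully tracking the contents of the recursion call stack of \textsc{Find}. The first easy observation is that every recursive call decrements the depth budget by exactly one (see Line~\ref{Find: depth line} and the recursive calls in step~4), so the recursion depth is at most $d$. Since the algorithm processes items strictly sequentially---it iterates through the variables in step~4 one at a time, and within each iteration performs the calls for $T_0^v$ and $T_1^v$ (and possibly a re-execution in step~4(c)(i)) sequentially---there is at most one active recursive branch per level at any moment in time. Hence the total memory is the sum of the per-level working storage across the (at most $d$) active stack frames.

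Next I would account for the working storage at a single stack frame executing $\textsc{Find}(S', s', d')$ where $d' = d - k$ at recursion depth $k$. This frame needs to store: (i) a representation of the (possibly restricted) sample $S'$, which can be implemented as a list of indices or pointers into $S$ and therefore uses $O(|S|)$ space (trivially no more than the parent's sample); (ii) $O(\log n)$ bits for bookkeeping: the index of the variable currently being tried in step~4, the current values of $s'$ and $d'$, and the size parameter bookkeeping used in step~4(c)(i); and (iii) storage for a partially completed tree, specifically $T_0^v$ kept in memory while $T_1^v$ is being computed (or vice versa). Any tree produced by a call with depth budget $d'$ has depth at most $d'$ and hence at most $2^{d'}$ leaves, so item (iii) uses $O(2^{d'})$ space. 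Crucially, once we move on to the next variable in the loop, the intermediate $T_0^v, T_1^v$ are discarded, so we never hold trees for multiple variables simultaneously at a single level.

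Summing this per-level bound over the stack frames, whose remaining depth budgets are $d, d-1, \ldots, 1$, gives total memory usage
\begin{align*}
\sum_{k=0}^{d} O\!\left(|S| + \log n + 2^{d-k}\right) \;=\; O\!\left( d \cdot |S| + d \log n + 2^{d}\right) \;=\; O\!\left( 2^{d}(|S| + \log n)\right),
\end{align*}
where the final inequality uses $d \le 2^{d}$. The only subtle step will be item~(iii): I want to make sure the claim that tree storage telescopes to $O(2^d)$ rather than blowing up. This follows because at each level we only hold one partial subtree, and the sizes $2^d, 2^{d-1}, \ldots, 1$ sum to $O(2^d)$; the dominant contribution to the bound is therefore the per-level sample storage $d \cdot |S|$, which is absorbed into $2^d \cdot |S|$.
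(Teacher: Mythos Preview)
Your proof is correct, and in fact more careful than the paper's. The paper argues loosely that ``each call with depth $d$ simultaneously needs up to $2$ calls with depth $d-1$, hence at most $2^d$ copies of \textsc{Find} are stored at once,'' and then multiplies by $O(|S|+\log n)$ per copy. This effectively treats the recursion as if the two subcalls in step~4(a) were executed in parallel, which is a valid upper bound but coarse. You instead exploit the sequential execution to bound the call stack by $d+1$ frames, carefully separating the per-frame sample and bookkeeping storage from the partially built subtrees held across levels; your intermediate bound $O(d|S|+d\log n+2^d)$ is genuinely tighter before you relax it to match the lemma. One small omission: your $O(\log n)$ bookkeeping per frame does not explicitly account for storing $s'$, but this costs only $O(\log s)$ and is easily absorbed.
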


\begin{proof}
    Each call to \textsc{Find} with depth $d$ will only ever need simultaneously need to run up to $2$ calls to \textsc{Find}, each with depth $d-1$. Hence, there are at most $2^d$ copies of \textsc{Find} that need to be stored in memory at any one time. At worst, each copy stores the sample as well as pointers to it and the $n$ different variables. This means each copy uses $O(|S| + \log n)$ memory, for a total of $O(2^d(|S| + \log n))$ memory.
\end{proof}

The last step in this analysis is a standard generalization argument relying on Chernoff bounds.\begin{lemma}[Generalization]
\label{lemma: generaliztion of find}
    Choose any $\delta,\varepsilon \ge 0$. Suppose that $\boldsymbol{S}$ is a uniformly random sample from a function, $f$, that can be computed by a decision tree of size at most $s$ and depth at most $d$. If the number of samples in $\boldsymbol{S}$ is at least
    \begin{align*}
        m = O\left(\frac{2^d \log(n) + \log(\frac{1}{\delta})}{\varepsilon}\right)
    \end{align*}
    and $\textsc{Find}(\boldsymbol{S},s,d)$ returns a decision tree that fits $\bS$ with error at most $\frac{\varepsilon}{4}$. Then, with probability at least $1 - \delta$, the decision tree returned by $\textsc{Find}$ has error at most $\varepsilon$ on $f$.
\end{lemma}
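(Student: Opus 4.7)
The plan is a textbook uniform-convergence argument: bound the size of the hypothesis class that \textsc{Find} can possibly output, then apply a multiplicative Chernoff bound together with a union bound.

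First, I would identify the relevant hypothesis class. By construction (see line~\ref{Find: depth line} of \textsc{Find} in \Cref{fig:find}), every tree that \textsc{Find}$(\bS, s, d)$ returns has depth at most $d$. Let $\mathcal{H}_{n,d}$ be the class of all decision trees over $n$ variables of depth at most $d$. Any such tree has at most $2^d - 1$ internal nodes, each labeled by one of the $n$ variables, and at most $2^d$ leaves, each labeled by $\pm 1$; hence $|\mathcal{H}_{n,d}| \le (2n)^{2^d}$, so $\log|\mathcal{H}_{n,d}| = O(2^d \log n)$.

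Next, I would apply the multiplicative Chernoff bound to each fixed hypothesis. Fix $T \in \mathcal{H}_{n,d}$ with true error $p \coloneqq \error(T,f) \ge \eps$. The empirical count $m \cdot \error_{\bS}(T)$ is a sum of $m$ i.i.d.\ $\mathrm{Ber}(p)$ random variables with mean $pm \ge \eps m$. Since $\eps m/4 \le pm/4 = (1 - 3/4)\cdot pm$, the multiplicative Chernoff inequality gives
\begin{align*}
    \Pr\!\big[\error_{\bS}(T) \le \eps/4\big]
    \;\le\; \Pr\!\big[\error_{\bS}(T) \le p/4\big]
    \;\le\; e^{-\Omega(pm)}
    \;\le\; e^{-\Omega(\eps m)}.
\end{align*}
Now union bound over all $T \in \mathcal{H}_{n,d}$:
\begin{align*}
    \Pr\!\big[\,\exists\, T \in \mathcal{H}_{n,d}:\ \error(T,f) \ge \eps \ \text{and}\ \error_{\bS}(T)\le \eps/4\,\big]
    \;\le\; (2n)^{2^d}\cdot e^{-\Omega(\eps m)}.
\end{align*}
Choosing $m = O\!\left(\frac{2^d \log n + \log(1/\delta)}{\eps}\right)$ with a sufficiently large hidden constant makes this probability at most $\delta$.

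Finally, I would conclude as follows: outside the $\delta$-probability bad event, \emph{every} hypothesis in $\mathcal{H}_{n,d}$ with true error $\ge \eps$ has empirical error $>\eps/4$. The tree returned by $\textsc{Find}(\bS,s,d)$ is some element of $\mathcal{H}_{n,d}$ with empirical error at most $\eps/4$ (by hypothesis), so it must have true error $<\eps$, which is exactly the claim. No genuine obstacle arises; the only point that requires a small amount of care is using the multiplicative (rather than additive) Chernoff bound so that the sample-size dependence on $\eps$ is $1/\eps$ rather than $1/\eps^2$, matching the stated bound.
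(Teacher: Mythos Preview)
Your proposal is correct and follows essentially the same argument as the paper: bound the number of depth-$d$ decision trees by $n^{O(2^d)}$, apply a multiplicative Chernoff bound to each ``bad'' tree (true error $\ge \eps$) to show it has empirical error $>\eps/4$ except with probability $e^{-\Omega(\eps m)}$, and union bound. The paper's counting is $(n+3)^{2^d}$ versus your $(2n)^{2^d}$, and its Chernoff constant is the explicit $9/32$, but these are cosmetic differences absorbed by the $O(\cdot)$ in the sample bound.
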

\begin{proof}
    We will upper bound the number of different decision trees \textsc{Find} could return when given depth limit $d$. There up to $2^d$ spots where a decision tree of depth $ \leq d$ could have a node. In each of these spots, the decision tree could either have one of $n$ variables, a leaf that is either $+1$ or $-1$, or nothing. Thus, the number of decision trees of depth at most $d$ is at most $(n+3)^{2^d}$.
    
    We call a decision tree, $T$, ``bad", if $T$ has error at least $\varepsilon$ on $f$. For any particular bad tree $T$, the probability it will have error less than $\frac{\varepsilon}{4}$ on $\boldsymbol{S}$ can be upper bounded using a Chernoff Bound:
    \begin{align*}
        \Prx_{\boldsymbol{S}}\big[\text{error}_{\boldsymbol{S}}(T) \leq \lfrac{\varepsilon}{4}\big] \leq \exp(-\lfrac{9}{32} \varepsilon m).
    \end{align*}
    Since there are most $(n+3)^{2^d}$ bad trees, the probability that any bad tree has error at most $\frac{\varepsilon}{4}$ is at most $(n+3)^{2^d} \cdot \exp(-\frac{9}{32} \varepsilon m)$. Setting this equal to $\delta$ and solving for $m$ completes the proof of this lemma.
\end{proof}
Finally, we are able to put all these pieces together to prove our main theorem of this section, and show how \textsc{Find} is used to learn decision trees: 
\begin{theorem}[Proper learning with polynomial sample and memory complexity]
\label{thm:our-proper} 
    Let $f$ be any function over $n$ variables computable by a size $s$ decision tree. Choose any $\varepsilon, \delta > 0$. There is an algorithm that
    \begin{itemize}
        \item[$\circ$] runs in time $\poly(n^{\log s},\lfrac{1}{\varepsilon}, \log(\frac{1}{\delta}))$
        \item[$\circ$] requires memory $\poly(s, \log n, \frac{1}{\varepsilon}, \log(\frac{1}{\delta}))$
        \item[$\circ$] uses $\poly(s, \log n, \frac{1}{\varepsilon}, \log(\frac{1}{\delta}))$ random samples from $f$
    \end{itemize} 
    and with probability at least $1 - \delta$ returns a decision tree that is an $\varepsilon$-approximation of $f$.
\end{theorem}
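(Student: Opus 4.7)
The plan is to stitch together the four preceding lemmas about \textsc{Find} with an appropriate choice of parameters. First, I would set the depth budget to $d = C \log(s/\varepsilon)$ for a sufficiently large absolute constant $C$, and fix the well-distribution parameter to any constant $c < 1$ (say $c = 1/10$) so that $\tfrac{3}{4} + \tfrac{c}{4}$ is a constant strictly less than $1$. The algorithm is: draw a uniform random sample $\bS$ from $f$ of size $m = \poly(s, \log n, 1/\varepsilon, \log(1/\delta))$, run \textsc{Find}$(\bS, s, d)$, and output the resulting tree. Since $f$ is a size-$s$ decision tree, $\bS$ can be exactly fit by a size-$s$ tree, so \textsc{Find} never returns ``None'' on any recursive call with sufficient size budget.

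The sample size $m$ is chosen so that two events each hold with probability $\geq 1 - \delta/2$:
\begin{enumerate}
\item[(i)] $\bS$ is $c$-well-distributed to depth $d$. By \Cref{lemma: well-distributed common}, this requires $m = O(2^d c^{-2}(d\log n + \log(2/\delta))) = \poly(s, \log n, 1/\varepsilon, \log(1/\delta))$, since $2^d = \poly(s/\varepsilon)$.
\item[(ii)] Every depth-$d$ decision tree with empirical error $\leq \varepsilon/4$ on $\bS$ has true error $\leq \varepsilon$ on $f$. By \Cref{lemma: generaliztion of find}, this requires $m = O\big((2^d \log n + \log(2/\delta))/\varepsilon\big) = \poly(s, \log n, 1/\varepsilon, \log(1/\delta))$ as well.
\end{enumerate}
A union bound shows both events hold with probability $\geq 1 - \delta$.

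Conditioned on event~(i), \Cref{lemma: error of find on well-distributed samples} guarantees that the tree $T$ returned by \textsc{Find}$(\bS, s, d)$ has empirical error at most
\[
\frac{s}{4}\left(\frac{3}{4} + \frac{c}{4}\right)^{d} \le \frac{\varepsilon}{4},
\]
where the inequality holds by choosing the constant $C$ in $d = C\log(s/\varepsilon)$ large enough (depending only on $c$). Combined with event~(ii), this yields $\text{error}(T, f) \leq \varepsilon$, as required.

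The resource bounds follow directly from the remaining lemmas: \Cref{lemma: find time complexity} gives running time $|\bS| \cdot (n+1)^{2\log s} = \poly(n^{\log s}, 1/\varepsilon, \log(1/\delta))$, and \Cref{lemma: find memory complexity} gives memory $O(2^d (|\bS| + \log n)) = \poly(s, \log n, 1/\varepsilon, \log(1/\delta))$, since $2^d = \poly(s/\varepsilon)$. No step is a real obstacle here once the preceding lemmas are in hand; the conceptually delicate work---namely, proving that \textsc{Find} achieves low empirical error on a sample it \emph{cannot} fit exactly, via the well-distributedness notion and the early-termination modification at depth~$d$---has already been absorbed into \Cref{lemma: error of find on well-distributed samples,lemma: well-distributed common}. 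The remaining task in this final theorem is simply a parameter-balancing and union-bound argument.
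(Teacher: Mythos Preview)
Your proposal is correct and essentially identical to the paper's own proof: the paper likewise fixes a constant $c<1$, sets $d = \Theta(\log(s/\varepsilon))$ so that $\tfrac{1}{4}\big(\tfrac{3+c}{4}\big)^d \cdot s \le \varepsilon/4$, draws $m = O\big(\tfrac{2^d}{\varepsilon c^2}(d\log n + \log(1/\delta))\big)$ samples, and then combines \Cref{lemma: well-distributed common}, \Cref{lemma: error of find on well-distributed samples}, and \Cref{lemma: generaliztion of find} via a union bound, with the resource bounds read off from \Cref{lemma: find time complexity,lemma: find memory complexity}. Your observation that the substantive work lives in the preceding lemmas is exactly right.
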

\begin{proof}
    Choose any constant $0 < c < 1$ and set $d = \log(\frac{s}{\varepsilon})/(-\log(\frac{3+c}{4}))$. Then, by taking a uniformly random sample, $\boldsymbol{S}$, of size
    \begin{equation*}
        m = O\left(\frac{2^d}{\varepsilon c^2} \cdot (d \log(n) + \log(1/\delta))\right)
    \end{equation*}
    we have the following holds:
    \begin{enumerate}
        \item $\boldsymbol{S}$ is $c$-well-distributed with probability at least $1 - \frac{\delta}{2}$.
        \item If $\boldsymbol{S}$ is $c$-well-distributed, then $\textsc{Find}(\boldsymbol{S}, s, d)$ returns a tree, $T$, with error at most $\frac{1}{4}(\frac{3 + c}{4})^d \cdot s = \frac{\varepsilon}{4}$ on $\boldsymbol{S}$.
        \item If $T$ has error less than $\frac{\varepsilon}{4}$, then with probability at least $1 - \frac{\delta}{2}$, $T$ is a $\varepsilon$-approximation for $f$.
    \end{enumerate}
    Furthermore, this procedure meets the time constraints by \Cref{lemma: find time complexity} and memory constraints by \Cref{lemma: find memory complexity}.
\end{proof}

\section*{Acknowledgments}

We thank Cl\'ement Canonne, Adam Klivans, Charlotte Peale, Toniann Pitassi, Omer Reingold, and Rocco Servedio for helpful conversations and suggestions.  We also thank the anonymous reviewers of ITCS 2020 for their valuable feedback.

The third author is supported by NSF grant CCF-1921795.

\bibliography{main}{}
\bibliographystyle{alpha}

\end{document}